\setlist[itemize]{leftmargin=10pt}
\DeclareMathSymbol{\lsb@l}{\mathalpha}{letters}{`l}
\newtheorem{lemma}{Lemma}
\newtheorem{definition}{Definition}
\newtheorem{corollary}{Corollary}
\newtheorem{theorem}{Theorem}
\newtheorem*{theorem*}{Theorem}
\newtheorem{claim}{Claim}
\newtheorem{fact}{Fact}
\numberwithin{lemma}{section}
\numberwithin{definition}{section}
\numberwithin{remark}{section}
\numberwithin{corollary}{section}
\numberwithin{proposition}{section}
\numberwithin{theorem}{section}
\numberwithin{example}{section}
\Crefname{proposition}{Proposition}{Propositions}
\Crefname{alg-line}{Line}{Lines}
\crefname{alg}{Algorithm}{Algorithms}
\newcommand{\wh}[1]{\widehat{#1}}
\DeclareMathOperator{\Var}{Var}
\renewcommand{\epsilon}{\varepsilon}
\newcolumntype{x}[1]{>{\centering\hspace{0pt}\arraybackslash}m{#1}}
\newcommand{\blankline}{\quad\pagebreak[2]}
\DeclareMathOperator{\Ima}{Im}
\DeclareMathOperator{\E}{\mathrm{E}}
\title{The Space Just Above One Clean Qubit}
\author{Dale Jacobs \footnote{Tufts CS, Medford MA, USA, dale.jacobs@tufts.edu} \quad Saeed Mehraban \footnote{Tufts CS, Medford MA, USA, saeed.mehraban@tufts.edu}}
\date{\today}
\begin{document}

\maketitle
\begin{abstract}
    Consider the model of computation where we start with two halves of a $2n$-qubit maximally entangled state. We get to apply a universal quantum computation on one half, measure both halves at the end, and perform classical postprocessing. This model, which we call $\frac12$\textbf{BQP}, was defined in STOC 2017 \cite{aaronson2017computational} to capture the power of permutational computations on special input states. As observed in \cite{aaronson2017computational}, this model can be viewed as a natural generalization of the one-clean-qubit model (\textbf{DQC}1) where we learn the content of a high entropy input state only after the computation is completed. An interesting open question is to characterize the power of this model, which seems to sit nontrivially between \textbf{DQC}1 and \textbf{BQP}. In this paper, we show that despite its limitations, this model can carry out many well-known quantum computations that are candidates for exponential speed-up over classical (or possibly \textbf{DQC}1) computations. In particular, $\frac12$\textbf{BQP} can simulate Instantaneous Quantum Polynomial Time (\textbf{IQP}) and solve the Deutsch-Jozsa problem, Bernstein-Vazirani problem, Simon's problem, and period finding. As a consequence, $\frac12$\textbf{BQP} also solves \textsc{Order Finding} and \textsc{Factoring} outside of the oracle stting. Furthermore, $\frac12$\textbf{BQP} can solve \textsc{Forrelation} and the corresponding oracle problem given by Raz and Tal \cite{raz-tal} to separate \textbf{BQP} and \textbf{PH}. We also study limitations of $\frac12$\textbf{BQP} and show that similarly to \textbf{DQC}1, $\frac12$\textbf{BQP} cannot distinguish between unitaries which are close in trace distance, then give an oracle separating $\frac12$\textbf{BQP} and \textbf{BQP}. Due to this limitation, $\frac12$\textbf{BQP} cannot obtain the quadratic speedup for unstructured search given by Grover's algorithm \cite{grover1996fast}. We conjecture that $\frac12$\textbf{BQP} cannot solve $3$-\textsc{Forrelation}.
\end{abstract}


\tableofcontents

\section{Introduction}
\label{sec:introduction}
One of the central goals of quantum complexity theory is understanding the relationships between quantum and classical models of computation. In particular, a major open problem in the field is separating the complexity classes \textbf{BPP} and \textbf{BQP}. Informally, this problem can be formulated as, ``(how) do quantum computers provide a computational advantage over classical computers?''
While progress has been made toward this goal, whether or not $\textbf{BPP} {\neq} \textbf{BQP}$ remains wide open, as an unconditional separation would resolve outstanding unsolved conjectures, including $\textbf{P} {\neq} \textbf{PSPACE}$. Due to the usual difficulties in proving such separations and motivated by near-term quantum advantage experiments, a fruitful line of research has been classifying intermediate models of quantum computation, which are more powerful than classical computation but not universal for quantum computation. Important examples of such models are the one-clean-qubit model (\textbf{DQC}1) \cite{Knill_1998}, instantaneous quantum computation (\textbf{IQP}) \cite{Shepherd_2009,shepherd2010quantum, bremner2011classical}, linear optics \cite{aaronson2011computational}, low-depth quantum circuits (\textbf{QAC}, \textbf{QNC}, etc) \cite{moore1999quantum,hoyer2003quantum}, and intermediate-scale quantum circuits with noise (\textbf{NISQ}) \cite{preskill2018quantum, chen2023complexity, chia2024oracle}. Studying these sub-universal models provides insight into the resource trade-offs surrounding the \textbf{BPP} vs \textbf{BQP} question. 
 
A standard formula for defining a sub-universal quantum model is to begin with a \textbf{BQP}-universal model and impose restrictions that make the model weaker. Usually, these models are also physically motivated (e.g., linear optical circuits \cite{aaronson2011computational} or permutational circuits in \cite{aaronson2017computational}). Importantly, many of the well-studied sub-universal quantum models seem incomparable due to the large degree of freedom in considering which resources to restrict. For example, the one-clean-qubit model (\textbf{DQC}1), which captures the power of computation on one bit of quantum information and many quantum bits in the maximally mixed state \cite{Knill_1998}, provides an advantage over classical computation by estimating (normalized) traces of quantum circuits. In contrast, the power of Instantaneous Quantum Polynomial Time (\textbf{IQP}) \cite{Shepherd_2009, Bremner_2010}, which models the power of commuting Hamiltonians that can be applied instantly, seems to stem from its ability to perform certain classical computations between two layers of Hadamard gates. Thus, a meaningful step toward understanding the boundary between classical and quantum computation would be to develop a chain of inclusions consisting of increasingly powerful comparable sub-universal models of quantum computation. Within this framework, we can hope to achieve a more fine-grained perspective on when and how quantum resources provide a computational advantage.
 
Inspired by these motivations, the main focus of this work is to study the following generalization of the one-clean-qubit model: We start with $n$ qubits initially in the maximally mixed state (zero clean qubits!). We get to apply a polynomial-size quantum circuit from a universal gate set. We then measure all qubits, and in addition, we will learn the initial state from which we started. We will then feed the bit string we sampled and the content of the initial state to (\textbf{BPP}) classical postprocessing. This model is a generalization of \textbf{DQC}1 because it can simulate one clean qubit by assuming that the first bit is in the pure $\ket{0}$ state, performing a \textbf{DQC}1 computation, then verifying this assumption in the end. One can equivalently define this model based on computation over one-half of a maximally entangled EPR state. In the end, we measure both halves and proceed to classical postprocessing. The content of the second half can be viewed as the initial state the model started from, and we will learn it only in the end. Because of this analogy, we call the class of computational problems solvable by this model $\frac12$\textbf{BQP}. This model was defined briefly in \cite{aaronson2017computational} (under the name \textbf{SampTQP}, standing for Trace Quantum Polynomial-Time) and has received little attention since. 

There are several reasons for studying this model. First, as discussed above, this model is a natural generalization of one clean qubit. Understanding the full capabilities of this model would delineate the boundary between \textbf{DQC}1 and \textbf{BQP}. Second, beyond the complexity-theoretic goal of better understanding quantum resources, sub-universal quantum models are often more feasible to instantiate in a lab and, therefore, provide intermediate steps toward practical quantum computation. From this perspective, $\frac12$\textbf{BQP} represents loading a quantum state into a quantum register via a noisy process and later learning the specific noise that occurred, perhaps via thermal imaging. In this setting, $\frac12$\textbf{BQP} represents the scenario in which loading the quantum state is very noisy, but the remainder of the circuit is not subjected to noise.
From a broader perspective, since current state-of-the-art quantum advantage experiments usually perform tasks that are contrived to be classically difficult, finding sub-universal models that can perform more natural computational tasks brings us closer to realizing practical quantum advantage. In this sense, identifying intermediate quantum models stronger than known sub-universal models is practically justified in its own right. Finally, as motivated in \cite{aaronson2017computational}, this model captures the power of restricted gate sets with many conserved quantities, such as exchange interactions (ball permutations), on specific inputs and output measurement basis.

Due to the restrictions of $\frac12$\textbf{BQP}, we expect it would be subject to similar computational limitations as \textbf{DQC}1. At the same time, because the model learns more information than \textbf{DQC}1 at the end of the computation, we expect it to be more powerful. Where between \textbf{DQC}1 and \textbf{BQP} does the power of $\frac12$\textbf{BQP} rest? In this work, we address this question and show that not only can $\frac12$\textbf{BQP} simulate \textbf{DQC}1, it can furthermore simulate other models such as \textbf{IQP}, and captures many known results separating classical and quantum computation, in particular, $\frac12$\textbf{BQP} can solve the following problems: Deutsch-Jozsa, Bernstein-Vazirani, Simon's \cite{simon1997power}, \textsc{Period Finding}, \textsc{Order Finding}, and \textsc{Factoring}. We furthermore show that this model can solve \textsc{Forrelation} 
\cite{aaronson2014forrelation}, and the Raz-Tal problem \cite{raz-tal}, which imply an oracle separation with the polynolmial Hierarchy. On the other hand, we show that $\frac{1}{2}$\textbf{BQP} does not provide speedup for certain problems including unstructured search.

An intriguing question in quantum computing is whether we can define a sub-universal model encompassing all known quantum algorithmic primitives. Our work demonstrates that the power of \(\frac{1}{2}\)\textbf{BQP}, as a rudimentary model of quantum computation, is sufficient to capture many well-known quantum algorithms, but not all. This provides valuable insight into the resource-theoretic nature of quantum algorithms.

\subsection{The model}
\label{subsec:the-model}

The $\frac12$\textbf{BQP} model of computation was first described under the name ``\textbf{SampTQP}'' in \cite{aaronson2017computational}, motivated by a representation-theoretic analysis of the $1+1$-dimensional quantum ball permuting model of computation (which we will explain in detail in \cref{subsec:ball-perm}). 
The $\frac12$\textbf{BQP} model, however, is interesting in its own right as a sub-universal model of quantum computation which generalizes specific well-known candidates for demonstrating quantum advantage such as \cite{Knill_1998, Shepherd_2009}.
In particular, $\frac12$\textbf{BQP} is an interesting toy model highlighting certain resource trade-offs between sub-universal quantum models and \textbf{BQP}.
 
In the $\frac12$\textbf{BQP} model, we start with the input state:
\begin{align*}
    \frac{1}{\sqrt{2^n}}\sum_{x \in \{0,1\}^n}\ket{x}\ket{x}
\end{align*}
on $2n$ qubits, where the first $n$ qubits are maximally entangled to the remaining $n$ qubits. We are allowed to apply an arbitrary quantum circuit to the first $n$ qubits only, and then measure all $2n$ qubits in the computational basis. After measurement, we allow \textbf{BPP} post-processing.
 
Equivalently, one can think of a $\frac12$\textbf{BQP} machine as first sampling from a quantum circuit $C$ whose input is a uniformly random computational basis state $\ket{w}$, then learning $w$ at the end of the computation and performing \textbf{BPP} post-processing. This equivalence is made explicit in \cref{subsec:random-bqp}.
 
The $\frac12$\textbf{BQP} complexity class is the set of problems solvable by a $\frac12$\textbf{BQP} machine as described above. Our goal is to understand the computational power of this model. 

\subsection{Our contributions}
\label{subsec:our-contributions}

We now present our main results. Our main goal is to bring the $\frac12$\textbf{BQP} model to light by exploring its relationship to known quantum complexity classes and showing that it achieves nontrivial advantages over classical computation. Our results suggest that $\frac12$\textbf{BQP} is significantly more powerful than \textbf{DQC}1 and \textbf{IQP} despite suffering from some of the same drawbacks as these models. Most of our proofs come from applying standard techniques or extending known results from other models to the $\frac12$\textbf{BQP} model.

It was previously known due to \cite{aaronson2017computational} that  $\frac12$\textbf{BQP} can sample from the output distribution of \textbf{DQC}1 computations. In particular, $\frac12$\textbf{BQP} arises as a straightforward generalization of \textbf{DQC}1 in the following sense. Intuitively, the \textbf{DQC}1 model can be thought of as running a quantum circuit on a random input string, but with one clean qubit used to extract information, (or equivalently $O(\log(n))$ clean qubits). $\frac12$\textbf{BQP} is the same setup, but we learn the random string at the end of the computation. 
We can simulate a \textbf{DQC}1 circuit $C$ in $\frac12$\textbf{BQP} by running $C$ and optimistically assuming the first qubit is $\ket{0}$. This setup correctly simulates the \textbf{DQC}1 circuit with probability $1/2$. This probability can be made arbitrarily close to $1$ by running more copies of $C$ in parallel. 

\cite{aaronson2017computational} asked the following question: ``can we show that the power of $\frac{1}{2}$\textbf{BQP} is strictly intermediate between \textbf{DQC}1 and \textbf{BQP}.'' As we will explain throughout this section, in this work we give several pieces of evidence for why the answer is in the affirmative. We note that while a complete problem for \textbf{DQC}1 is estimating ``weighted'' normalized trace of a unitary matrix with weights that are computable by reversible circuits with few ancilla bits, $\frac{1}{2}$\textbf{BQP} can do the same where the weights are computable by arbitrary polynomial-time functions (\cref{thm:half-bqp-estimates-polytime-weighted-trace}). We also define a complete problem for $\frac{1}{2}$\textbf{BQP} which generalizes weighted trace (see \cref{thm:half-bqp-complete-problem}). We furthermore show that $\frac{1}{2}$\textbf{BQP} can perform many other tasks that are not known to be solvable by \textbf{DQC}1, such as \textsc{Forrelation} and \textsc{Factoring}. We also show an oracle separation with \textbf{BQP} and introduce other plausible candidates for such a separation. We will explain each of these results in the following.
 
We first show that $\frac12$\textbf{BQP} can sample from yet another well known intermediate model of computation, namely \textbf{IQP}.

\begin{theorem*} [Restatement of \cref{thm:half-bqp-simulates-iqp}]
    $\frac12$\textbf{BQP} can sample from \textbf{IQP}.
\end{theorem*}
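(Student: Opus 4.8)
The plan is to exploit the rigid structure of an \textbf{IQP} circuit---a layer of Hadamards, a diagonal (hence commuting) unitary, and a second layer of Hadamards---to show that starting from a uniformly random computational basis state $\ket{w}$ instead of $\ket{0^n}$ only introduces a computational-basis bit-flip on the output, which can be undone by classical post-processing once $w$ is revealed. This is exactly the regime where $\frac12$\textbf{BQP} should shine: it samples from a circuit run on a random input but learns that input at the end.

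First I would fix the canonical form of an \textbf{IQP} computation on $n$ qubits: $C = H^{\otimes n} D\, H^{\otimes n}$ applied to $\ket{0^n}$, where $D$ is diagonal in the computational basis (a product of commuting gates such as $\mathrm{CZ}$ and single-qubit phase gates), with output $s$ sampled according to $|\bra{s} C \ket{0^n}|^2$. Then, using the equivalent description of $\frac12$\textbf{BQP} as running a circuit on a uniformly random basis state $\ket{w}$ and recovering $w$ at the end, I would run this same $C$ on $\ket{w}$ and track the Pauli corrections. The three identities I need are $H^{\otimes n}\ket{w} = Z^w H^{\otimes n}\ket{0^n}$ with $Z^w = \bigotimes_i Z_i^{w_i}$; the commutation $D Z^w = Z^w D$ (which holds precisely because $D$ is diagonal); and the conjugation $H^{\otimes n} Z^w = X^w H^{\otimes n}$ coming from $HZ = XH$. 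Chaining these gives
\begin{align*}
 C\ket{w} = H^{\otimes n} D Z^w H^{\otimes n}\ket{0^n} = H^{\otimes n} Z^w D\, H^{\otimes n}\ket{0^n} = X^w\, C\ket{0^n},
\end{align*}
so measuring yields outcome $y$ with probability $|\bra{y} X^w C\ket{0^n}|^2 = |\bra{y\oplus w} C\ket{0^n}|^2$.

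Finally, the \textbf{BPP} post-processing simply outputs $s = y \oplus w$; for every fixed $w$ this has distribution $|\bra{s} C\ket{0^n}|^2$, which is exactly the \textbf{IQP} output distribution, so averaging over the random $w$ leaves the sample correct. There is essentially no analytic obstacle here---the entire content is the commutation identity above, which turns the random-input error into a classical XOR. The one point to state carefully is that this works only because the middle layer $D$ commutes with $Z^w$ (for a generic $C$ the random input could not be corrected), and that $\frac12$\textbf{BQP} genuinely has $w$ available at post-processing time, which is precisely its defining feature and makes the final XOR a legitimate \textbf{BPP} step.
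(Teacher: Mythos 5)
Your proposal is correct and follows essentially the same route as the paper's proof: both pass the random-input Pauli correction through the circuit via $H^{\otimes n}X^w = Z^w H^{\otimes n}$, the commutation of $Z^w$ with the diagonal $D$, and $H^{\otimes n}Z^w = X^w H^{\otimes n}$, reducing the error to an output shift $X^w$ that is undone by XOR-ing with $w$ in post-processing. The only cosmetic difference is that the paper presents the argument as a sequence of equivalent circuit diagrams while you write the identical chain of identities algebraically.
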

\noindent To simulate \textbf{IQP}, we can think of a $\frac12$\textbf{BQP} circuit as a \textbf{BQP} circuit with random Pauli $X$ errors as the first layer of gates. Since \textbf{IQP} circuits are diagonal in the Pauli $X$ basis, this is equivalent to a \textbf{BQP} circuit with random $X$ errors in the last layer of gates. Thus, after learning the locations of these errors, we can correct them in post-processing. We remark that $\frac12$\textbf{BQP} can also simulate variations of \textbf{IQP} which are diagonal in any basis, as long as the change of basis is known and efficiently computable by a quantum circuit.
 
It is intriguing that $\frac12$\textbf{BQP} captures the power of both \textbf{DQC}1 and \textbf{IQP} despite these models arising due to very different resource restrictions. As a result of this, a separation between \textbf{DQC}1 and \textbf{IQP} would imply a separation between $\frac12$\textbf{BQP} and one of these models.
 
Next we study the relationship between $\frac{1}{2}$\textbf{BQP} and other computational models in the oracle setting.
We know that $\frac{1}{2}$\textbf{BQP} can sample from both \textbf{IQP} and \textbf{DQC}1 circuits. The following two theorems show that $\frac{1}{2}$\textbf{BQP} can solve several oracle problem problems which are not known to be solvable by either of the two models.

First, $\frac12$\textbf{BQP} can perform Fourier sampling, a task which is known to require exponentially many queries to perform on a class probabilistic machine.
\begin{theorem*} [Restatement of \cref{thm:fourier-sampling}]
    $\frac12$\textbf{BQP} can sample from the Fourier weight distribution of a boolean function. Concretely, given phase oracle access to a function 
    \[f:\{0,1\}^n \rightarrow \{-1,1\},\]
    $\frac12$\textbf{BQP} can sample from the distribution over $\{0,1\}^n$ such that
    \[\Pr[y] = \wh{f}(y)^2,\]
    where $\wh{f}(y)$ is the Fourier coefficient of $f$ corresponding to the string $y$ (See \cref{subsec:prelim-fourier-analysis} for a definition).
\end{theorem*}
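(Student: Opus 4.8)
The plan is to run the standard quantum Fourier sampling circuit $C = H^{\otimes n} U_f H^{\otimes n}$ (where $U_f\ket{x} = f(x)\ket{x}$ is the given phase oracle) on the random computational basis input $\ket{w}$ that the $\frac12$\textbf{BQP} model provides, and then exploit the fact that $w$ is learned at the end to classically undo the effect of not having started from $\ket{0^n}$.

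First I would track the state through $C$ on input $\ket{w}$. The first Hadamard layer produces $\frac{1}{\sqrt{2^n}}\sum_x (-1)^{w\cdot x}\ket{x}$; the phase oracle multiplies each amplitude by $f(x)$; and the second Hadamard layer yields an amplitude on $\ket{y}$ equal to $\frac{1}{2^n}\sum_x f(x)(-1)^{x\cdot(y\oplus w)} = \wh{f}(y\oplus w)$, using $\wh{f}(z) = \frac{1}{2^n}\sum_x f(x)(-1)^{x\cdot z}$. Hence measuring the first register returns $y$ with probability $\Pr[y \mid w] = \wh{f}(y\oplus w)^2$, i.e.\ a Fourier sample translated by the unknown $w$.

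The key observation is that this translation is exactly what the learned string $w$ lets us invert. In the \textbf{BPP} postprocessing step I would output $z = y \oplus w$. For every fixed $w$ the conditional distribution is $\Pr[z\mid w] = \wh{f}((z\oplus w)\oplus w)^2 = \wh{f}(z)^2$, which is independent of $w$; averaging over the uniformly random $w$ therefore leaves the marginal unchanged, giving $\Pr[z] = \wh{f}(z)^2$ as required. Equivalently, in the entangled picture the joint measurement outcome satisfies $\Pr[y,w] = \frac{1}{2^n}\wh{f}(y\oplus w)^2$, and summing over the fiber $\{(y,w): y\oplus w = z\}$ collapses to $\wh{f}(z)^2$.

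The main content is conceptual rather than a technical obstacle: at first glance the maximally mixed input seems to destroy Fourier sampling, since a random $\ket{w}$ injects an unknown phase pattern $(-1)^{w\cdot x}$ that would normally dephase the interference responsible for concentrating amplitude on $\wh{f}$. The point to verify carefully is that this phase pattern acts as a \emph{translation} in the Fourier domain, not as genuine noise, so learning $w$ at the end and applying a single XOR in postprocessing recovers the exact target distribution with no averaging artifacts. Checking that the per-$w$ conditional distributions coincide is the one step I would carry out explicitly, and it is a short calculation.
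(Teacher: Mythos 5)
Your proposal is correct and follows essentially the same route as the paper's proof: run $H^{\otimes n} U_f H^{\otimes n}$ on the random input $\ket{w}$, observe that the measured outcome $y$ is distributed as $\wh{f}(y \oplus w)^2$, and output $y \oplus w$ in postprocessing. Your explicit check that the conditional distribution of $y \oplus w$ is independent of $w$ (so averaging over $w$ introduces no artifacts) is a slightly more careful rendering of a step the paper leaves implicit, but the argument is the same.
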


Next, we show that $\frac12$\textbf{BQP} solves many of the standard oracle problems used to separate classical and quantum models of computation.

\begin{theorem*}[Restatement of \cref{thm:simons-problem}, \cref{thm:dj-bv} and \cref{thm:period-finding}]
    $\frac12$\textbf{BQP} solves the following oracle problems:
    \begin{itemize}
        \item Deutsch-Jozsa Problem
        \item Bernstein-Vazirani Problem
        \item Simon's Problem
        \item \textsc{Period Finding}
    \end{itemize}
\end{theorem*}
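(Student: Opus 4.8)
The plan is to exploit the equivalent formulation of $\frac12$\textbf{BQP} (from \cref{subsec:random-bqp}) in which we run a \textbf{BQP} circuit on a uniformly random computational basis state $\ket{w}$ and learn $w$ at the end. Starting from $\ket{w}$ rather than $\ket{0^n}$ is the same as conjugating the standard $\ket{0^n}$-initialized algorithm by the Pauli string $X^w = \bigotimes_i X^{w_i}$. For each of the four algorithms, the first operation is a Fourier-type transform (Hadamard for Deutsch--Jozsa, Bernstein--Vazirani, and Simon; the quantum Fourier transform for \textsc{Period Finding}), and this transform turns the input error $X^w$ into a known phase. Carrying this phase through the oracle and the final transform, I expect it to manifest as a known group shift of the measured string $y$: an XOR by $w$ in the first three problems and an additive shift mod $Q$ for \textsc{Period Finding}. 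Since $w$ is revealed at the end, \textbf{BPP} post-processing undoes the shift and recovers exactly the output distribution of the textbook algorithm, at which point its correctness and repetition bounds apply verbatim.

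Concretely, for Deutsch--Jozsa and Bernstein--Vazirani with a phase oracle, applying $H^{\otimes n}$ to $\ket{w}$ gives $2^{-n/2}\sum_x (-1)^{w\cdot x}\ket{x}$ since $H^{\otimes n}X^w = Z^w H^{\otimes n}$; after the phase oracle and a second $H^{\otimes n}$, the amplitude on $\ket{y}$ equals $2^{-n}\sum_x (-1)^{f(x)}(-1)^{x\cdot(y\oplus w)}$, which is precisely the standard amplitude on $\ket{y\oplus w}$. For Deutsch--Jozsa this means a constant $f$ forces $y=w$ and a balanced $f$ forces $y\neq w$, so the decision rule is to compare the measured $y$ against the learned $w$; for Bernstein--Vazirani with $f(x)=a\cdot x$ it gives $y=a\oplus w$, so we output $a=y\oplus w$.

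For Simon's problem the initial state is $\ket{w_1}\ket{w_2}$ (plus any oracle ancillas). After $H^{\otimes n}$ on the first register, the bit oracle $\ket{x}\ket{b}\mapsto\ket{x}\ket{b\oplus f(x)}$, and a final $H^{\otimes n}$, I would compute the amplitude on $\ket{y}\ket{z}$ by summing over the (exactly two) preimages $x_0,x_0\oplus s$ of $z\oplus w_2$; the resulting factor $(-1)^{x_0\cdot(y\oplus w_1)}\bigl[1+(-1)^{s\cdot(y\oplus w_1)}\bigr]$ is nonzero only when $s\cdot y=s\cdot w_1$. Thus $y':=y\oplus w_1$ is a uniform sample from $\{y':s\cdot y'=0\}$, exactly as in the standard algorithm, so after collecting $n-1$ independent corrected samples we solve the linear system for $s$. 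The same calculation for \textsc{Period Finding} uses $\mathrm{QFT}\ket{w_1}=Q^{-1/2}\sum_x \omega^{w_1 x}\ket{x}$ with $\omega=e^{2\pi i/Q}$: the effective frequency becomes $w_1+y$, so the measured $y$ concentrates near $kQ/r-w_1$, and post-processing sets $y':=(y+w_1)\bmod Q$ before running continued fractions.

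The step I expect to require the most care is verifying that the random initializations of the non-principal registers truly factor out, so that the corrected distribution is \emph{identical} to the textbook one rather than merely similar. Two points need checking: first, the second register value $w_2$ (and any workspace ancillas) must not entangle with the first register in a way that disturbs the interference---this holds because a garbage-free reversible oracle returns ancillas to their initial values and $w_2$ only relabels the discarded function-value register, as the Simon computation above shows explicitly. Second, for \textsc{Period Finding} one must confirm that applying the known additive shift $w_1$ before continued fractions does not degrade success, including in the usual case where $r\nmid Q$ and the peaks are only approximate; since the post-shift distribution of $y'$ coincides exactly with the standard period-finding distribution, the standard rounding and success-probability analysis transfers without change. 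Once these are in hand, the four results follow by invoking the known correctness of each algorithm.
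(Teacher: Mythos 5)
Your proposal is correct and follows essentially the same route as the paper: run the textbook circuit on the random basis state $\ket{w}$, observe that the output distribution is the standard one shifted by $w$ (XOR for Deutsch--Jozsa, Bernstein--Vazirani, and Simon; an additive shift mod $Q$ for \textsc{Period Finding}), and undo the shift in \textbf{BPP} post-processing once $w$ is revealed. Your Simon and period-finding calculations match the paper's almost line for line (including the preimage-pair interference factor and the $w_1+y$ frequency shift), and your explicit Deutsch--Jozsa/Bernstein--Vazirani arguments fill in exactly the details the paper omits as ``essentially identical.''
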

Formal proofs are given in \cref{sec:bqp-vs-bpp}. Intuitively, the proof of the above theorem is similar to the $\frac12$\textbf{BQP} algorithm to simulate \textbf{IQP}. It turns out that when translated to the $\frac12$\textbf{BQP} model, the standard algorithms for the above problems are equivalent to the standard \textbf{BQP} circuits with Pauli $X$ errors applied in the last layer of gates, which can be corrected in post-processing.

 
Having shown that $\frac12$\textbf{BQP} solves period finding, the next question to ask is whether $\frac12$\textbf{BQP} can solve \textsc{Order Finding} and \textsc{Factoring} outside of the oracle setting. While it is not immediately obvious how to implement the modular exponentiation step of Shor's order finding algorithm reversibly using few clean ancillas, it turns out that due to several nontrivial results from low-depth circuit complexity, the answer is affirmative.
\begin{theorem*}[Restatement of \cref{thm:order-finding}]
    \textsc{Order Finding} and \textsc{Factoring} are in  $\frac12$\textbf{BQP}.
\end{theorem*}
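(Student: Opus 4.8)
The plan is to reduce \textsc{Factoring} to \textsc{Order Finding} by the standard classical reduction, and then to implement Shor's order-finding algorithm inside $\frac12$\textbf{BQP}. Recall that $\frac12$\textbf{BQP} runs a \textbf{BQP} circuit on a uniformly random computational basis input $\ket{w}$, learns $w$ at the end, and applies \textbf{BPP} postprocessing. I would partition the working register into an exponent register (of $m$ qubits, $Q = 2^m$), a value register, and a scratch register, whose random initial contents I write as $w = (w_1, w_2, w_3)$. Having already shown that $\frac12$\textbf{BQP} solves \textsc{Period Finding} (\cref{thm:period-finding}), the whole difficulty is to build the modular exponentiation map $z \mapsto a^z \bmod N$ as an \emph{actual} circuit using only few clean ancillas, and then to argue that the three sources of randomness in $w$ are harmless once $w$ is learned.

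For the modular exponentiation, the key observation is that one may precompute the constants $c_i = a^{2^i}\bmod N$ classically in polynomial time, so that the quantum circuit only has to evaluate $z \mapsto \prod_{i : z_i = 1} c_i \bmod N = a^z \bmod N$. This is an iterated modular multiplication, which by the results of Hesse, Allender, and Barrington lies in logspace-uniform $\textbf{TC}^0 \subseteq \textbf{NC}^1$. I would then invoke Barrington's theorem to express each output bit as a polynomial-length width-$5$ permutation branching program over $S_5$, and evaluate it reversibly by maintaining the running $S_5$ element in an $O(1)$-qubit register, composing the input-controlled generators, copying out the output bit, and uncomputing to reset the register. Running this bit by bit and XORing each output bit into the value register realizes the out-of-place map $\ket{z}\ket{y}\ket{0} \mapsto \ket{z}\ket{y \oplus (a^z \bmod N)}\ket{0}$ using only $O(\log n)$ clean scratch qubits, which are returned to $\ket{0}$. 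This is the step I expect to be the main obstacle, and it is exactly where the ``nontrivial results from low-depth circuit complexity'' enter: the naive in-place modular-multiplication circuits need $\Theta(n)$ clean scratch qubits, whereas the branching-program evaluation needs only $O(\log n)$.

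It remains to handle the three random inputs, all of which are learned at the end. First, to create the exponent superposition I would apply a quantum Fourier transform modulo $Q$ to $\ket{w_1}$ rather than Hadamards, so that the random input contributes a clean frequency phase $e^{2\pi i w_1 z / Q}$; by the shift property of the QFT this merely translates the final measurement outcome by a known amount, corrected in postprocessing exactly as in the Bernstein--Vazirani and period-finding cases (cf.\ \cref{thm:half-bqp-simulates-iqp}). Second, the value register starts in $\ket{w_2}$, so the circuit produces $w_2 \oplus (a^z \bmod N)$; since XORing by the constant $w_2$ is a bijection, the level sets of $z \mapsto w_2 \oplus (a^z \bmod N)$ are still exactly the cosets $\{z_0 + jr\}$, so measuring the value register collapses the exponent register to the same periodic state as in standard Shor. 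Third, the scratch register, of size $O(\log n)$, starts in $\ket{w_3}$; since the branching-program circuit is correct only when the scratch is clean, I would postselect in classical postprocessing on the learned event $w_3 = 0$, which occurs with probability $2^{-O(\log n)} = 1/\poly(n)$.

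Finally, conditioned on $w_3 = 0$, the state before the inverse quantum Fourier transform is, up to the known phase $e^{2\pi i w_1 z/Q}$, identical to the one in Shor's algorithm, so the inverse QFT, continued-fraction postprocessing, and the classical order-to-factor reduction recover $r$ and a nontrivial factor of $N$. Since each run succeeds with probability $\Omega(1/\poly(n))$ --- dominated by the $1/\poly(n)$ postselection cost, together with the constant success probabilities of the random base $a$ and of the continued-fraction step --- repeating the procedure $\poly(n)$ times and taking \textbf{BPP} postprocessing boosts this to a constant, placing \textsc{Order Finding} and \textsc{Factoring} in $\frac12$\textbf{BQP}.
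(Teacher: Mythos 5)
Your proposal is correct and follows essentially the same route as the paper: reduce \textsc{Factoring} to \textsc{Order Finding} classically, reuse the $\frac12$\textbf{BQP} \textsc{Period Finding} analysis (with the learned $w_1$ shift and the harmless $w_2$ XOR), and implement $z \mapsto a^z \bmod N$ with precomputed powers $a^{2^i} \bmod N$, iterated product and remainder in $\textbf{NC}^1$, Barrington's theorem, and reversible width-$5$ branching-program evaluation with $O(\log n)$ clean ancillas obtained by postselecting on the learned scratch bits. The only cosmetic difference is the citation for the low-depth arithmetic (you invoke Hesse--Allender--Barrington's $\textbf{TC}^0$ results, the paper uses Beame--Cook--Hoover and Chiu--Davida--Litow for iterated product and division in $\textbf{NC}^1$), which does not change the argument.
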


The next result is an oracle separation between $\frac12$\textbf{BQP} and the polynomial hierarchy (\textbf{PH}).
\cite{aaronson2014forrelation} showed that \textsc{Forrelation} problem gives an optimal separation between quantum and classical computations in the query complexity model. 
Subsequently, \cite{raz-tal} defined an explicit distribution which translates this separation to the oracle model, proving that there exists an oracle $O$ relative to which \textbf{BQP}$^O \not\subseteq$ \textbf{PH}$^O$. 
We show that, perhaps surprisingly, the quantum algorithms for \textsc{Forrelation} and the Raz-Tal problem can be performed in the $\frac12$\textbf{BQP} model with only a slight modification. This demonstrates that the largest oracle separation possible between quantum and classical computation holds even within this restricted model of quantum computation. This is captured in the following theorem:

\begin{theorem*} [Restatement of \cref{thm:2-forrelation} and \cref{thm:raz-tal}]
    \blankline
    \begin{enumerate}
        \item \textsc{Forrelation} is contained in $\frac12$\textbf{BQP}.
        \item There exists an oracle $O$ such that $\frac12$\textbf{BQP}$^O \not\subseteq$ \textbf{PH}$^O$.
    \end{enumerate}
\end{theorem*}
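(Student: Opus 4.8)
The plan is to first place \textsc{Forrelation} in $\frac12$\textbf{BQP} by reusing the ``random input as a Pauli error layer'' viewpoint already exploited for \textbf{IQP} (\cref{thm:half-bqp-simulates-iqp}) and for the oracle problems of \cref{sec:bqp-vs-bpp}, and then to combine this with the Raz--Tal construction to obtain the oracle separation. Recall that the standard \textbf{BQP} algorithm for \textsc{Forrelation} runs $U = H^{\otimes n} D_g H^{\otimes n} D_f H^{\otimes n}$ on $\ket{0^n}$, where $D_f,D_g$ are the phase oracles $D_f\ket{x}=f(x)\ket{x}$, and accepts according to the probability $|\langle 0^n|U|0^n\rangle|^2 = \Phi_{f,g}^2$ of observing the all-zeros string; distinguishing $|\Phi_{f,g}|\ge 3/5$ from $|\Phi_{f,g}|\le 1/100$ reduces to estimating this amplitude. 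In $\frac12$\textbf{BQP} we do not start from $\ket{0^n}$ but from a uniformly random $\ket{w}=X^w\ket{0^n}$, with $w$ revealed only after measurement, so the task is to run $U$ subject to a random first-layer error $X^w$ and to repair the damage in post-processing once $w$ is known.

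First I would propagate $X^w$ through $U$ exactly as in the single-oracle problems. Using $H^{\otimes n}X^w = Z^w H^{\otimes n}$ and the fact that $Z^w$ commutes with the diagonal oracles, one finds $U X^w = Z^w\, H^{\otimes n} D_{g_w} H^{\otimes n} D_f H^{\otimes n}$, where $g_w(y)=g(y\oplus w)$. The leading $Z^w$ is a diagonal phase, invisible to the computational-basis measurement, hence harmless. Thus the measurement statistics coincide with those of the ideal circuit except that the \emph{second} oracle has been silently replaced by its shift $D_{g_w}$: the observed amplitude on a string $z$ is $(-1)^{w\cdot z}\,2^{-n/2}\sum_y g(y\oplus w)\hat f(y)(-1)^{y\cdot z}$, so the all-zeros amplitude is $\Phi_{f,g_w}$ rather than $\Phi_{f,g}$.

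This shifted oracle is precisely the obstruction absent from Deutsch--Jozsa, Bernstein--Vazirani, Simon, and period finding: those algorithms contain a single oracle layer reached after one Hadamard layer, so the propagated error arrives as a commuting $Z$-type error and emerges at the output as a correctable $X^w$; \textsc{Forrelation}, by contrast, has a second oracle reached after two Hadamard layers, where the error arrives as a non-commuting $X^w$ and shifts the oracle's argument. The heart of the proof, and the step I expect to be the main obstacle, is the ``slight modification'' that neutralizes this shift. The goal is to re-express or re-route the two oracle queries so that the random input again induces only a correctable output Pauli --- equivalently, so that every oracle is traversed by a phase-type error and the net effect on the measured distribution is a known, $w$-dependent relabeling that post-processing can undo after $w$ is learned. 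I would then verify that such a modification leaves the value $\Phi_{f,g}$ intact and that the resulting estimator of $\Phi_{f,g}^2$ still separates the two promise cases; this is exactly where the argument must genuinely use that $w$ becomes available in post-processing, rather than merely tolerate a corrupting error.

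Finally, for the oracle separation I would run the same $\frac12$\textbf{BQP} procedure on the real-\textsc{Forrelation} instances of \cite{raz-tal}, whose only change from the Boolean case is the distribution on $(f,g)$ and the acceptance threshold; the error-propagation and correction analysis is unaffected, so the Raz--Tal decision problem lies in $\frac12$\textbf{BQP}$^O$. The matching classical lower bound of \cite{raz-tal} is a statement about the problem itself and is oblivious to which quantum model solves it, so it continues to certify that the problem is not in \textbf{PH}$^O$. Combining the two yields an oracle $O$ with $\frac12$\textbf{BQP}$^O\not\subseteq$\textbf{PH}$^O$, strengthening the \textbf{BQP} versus \textbf{PH} separation to this restricted model.
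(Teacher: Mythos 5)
Your error-propagation analysis is correct and matches the paper's setup exactly: on input $\ket{w}$ the circuit $H^{\otimes n} D_g H^{\otimes n} D_f H^{\otimes n}$ behaves like the ideal \textsc{Forrelation} circuit with $g$ replaced by the shift $g_w(y)=g(y\oplus w)$, up to an irrelevant diagonal phase. But the step you explicitly defer---``re-express or re-route the two oracle queries so that the random input again induces only a correctable output Pauli''---is precisely the missing idea, and it is not how the paper resolves the problem; indeed it arguably cannot be resolved that way, because replacing $g$ by $g_w$ changes the output \emph{distribution} itself, not merely the labels of outcomes, so no $w$-dependent relabeling of the measured string recovers the ideal statistics. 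The paper's actual mechanism is statistical rather than corrective: run the uncorrected circuit, measure $z$, learn $w$, and output the sign $R=(-1)^{w\cdot z}$. Writing the conditional amplitude as $\alpha_z = 2^{-n/2}\sum_x (-1)^{w\cdot x} f(x)\,\widehat{g}_{x\oplus z}$, expanding $\Pr[z\mid w]=|\alpha_z|^2$, and averaging over the uniform $w$ makes the phase $(-1)^{w\cdot(x\oplus y\oplus z)}$ force $z=x\oplus y$, collapsing the sum to
\begin{align*}
\E[R] \;=\; \frac{1}{2^n}\sum_{x,y} f(x)f(y)\,\widehat{g}_x\,\widehat{g}_y \;=\; |\Phi_{f,g}|^2 .
\end{align*}
So the shift is never undone in any single run; it is averaged away across runs, with the learned $w$ entering only as a sign in an unbiased estimator of $|\Phi|^2$, after which a constant number of repetitions and a Chernoff bound decide the promise. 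Without this (or an equivalent) estimator, your proof of part 1 is incomplete.

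This gap propagates into part 2. Because the procedure estimates $|\Phi|^2$ rather than $\Phi$, the Raz--Tal step is not the routine transfer you describe: the distinguishing advantage against $\mathcal{D}$ degrades from the $\E_{\mathcal{D}}[\Phi]\ge \varepsilon/2$ guaranteed by Raz--Tal to a bound on $\E_{\mathcal{D}}[\Phi^2]$, and the paper must insert the extra claim $\E_{\mathcal{D}}[\Phi^2]\ge \E_{\mathcal{D}}[\Phi]^2 \ge \varepsilon^2/4$ (via the variance decomposition) and absorb the quadratically smaller advantage with polynomially more parallel copies in $\mathcal{D}_2$. You are right that the \textbf{PH} lower bound is a property of the problem and transfers for free; it is the upper bound (that $\tfrac12$\textbf{BQP} still distinguishes $\mathcal{D}_2$ from uniform) that needs this additional argument, which your proposal's ``the analysis is unaffected'' elides.
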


As a generalization of \textsc{Forrelation}, \cite{aaronson2014forrelation} defined $k$-\textsc{Forrelation} and showed that for $k = poly(n)$, $k$-\textsc{Forrelation} is complete for \textbf{BQP}. Since $2$-\textsc{Forrelation} is in $\frac12$\textbf{BQP}, then if $\frac12$\textbf{BQP} $\neq$ \textbf{BQP}, there exists some function $t(n)$ such that $t(n)$-\textsc{Forrelation} is not in $\frac12$\textbf{BQP}. This suggests a ``Forrelation hierarchy'' which captures the boundary between $\frac12$\textbf{BQP} and \textbf{BQP} and may be useful in finding an upper bound for $\frac12$\textbf{BQP}. We believe that this ``Forrelation hierarchy'' provides a meaningful chain of inclusions beginning with the power of Fourier Sampling and ending with \textbf{BQP}, and warrants further study for its own sake. We conjecture the boundary is at $k =3$, i.e., $\frac12$\textbf{BQP} cannot solve $3$-\textsc{Forrelation}.

The next result provides an oracle separation between $\frac12$\textbf{BQP} and \textbf{BQP} based on distinguishing unitary oracles which are close in Schatten-1 distance. This result shows that $\frac12$\textbf{BQP} suffers from the same drawbacks as \textbf{DQC}1 when distinguishing unitaries which are close in normalized trace. This result allows us to give an oracle separation using a classical boolean function based on a result from \cite{chen2023complexity}. Another consequence of this theorem is that  $\frac12$\textbf{BQP} cannot obtain the quadratic speedup for unstructured search given by Grover's algorithm (\cref{thm:grover}).

\begin{theorem*} [Informal statement of \cref{thm:bqp-separation}]
$\frac12$\textbf{BQP} cannot distinguish between unitary oracles $U$ and $U'$ such that $\frac{1}{2^n}\norm{U-U'}_1 \leq O(\frac{1}{2^n})$. Furthermore there exists a classical oracle (based on \cite{chen2023complexity}) that separates $\frac{1}{2}$\textbf{BQP} and \textbf{BQP}.
\end{theorem*}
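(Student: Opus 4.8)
The plan is to prove the two parts separately: first the trace-distance indistinguishability bound, and then the classical-oracle separation built on top of it.

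\textbf{Part 1 (indistinguishability).} The key observation is that the entire pre-measurement state of a $\frac12$\textbf{BQP} computation making $T$ calls to a unitary oracle $U$ on $n$ qubits can be written as $(C_U \otimes I)\ket{\Phi}$, where $\ket{\Phi} = \frac{1}{\sqrt{2^n}}\sum_{x}\ket{x}\ket{x}$ is the maximally entangled input, $C_U$ is the oracle-dependent circuit applied to the first register, and $I$ acts on the untouched second register. Since measuring both registers and running \textbf{BPP} postprocessing is a single oracle-independent channel applied to this state, the distinguishing advantage between $U$ and $U'$ is at most the trace distance between $(C_U\otimes I)\ket{\Phi}$ and $(C_{U'}\otimes I)\ket{\Phi}$, which for pure states is at most $\norm{\big((C_U - C_{U'})\otimes I\big)\ket{\Phi}}$. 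I would compute this norm via the identity $\norm{(M\otimes I)\ket{\Phi}}^2 = \frac{1}{2^n}\operatorname{Tr}(M^\dagger M) = \frac{1}{2^n}\norm{M}_F^2$, reducing the advantage to $\tfrac{1}{\sqrt{2^n}}\norm{C_U - C_{U'}}_F$.

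Next I would run a standard hybrid argument over the $T$ oracle calls. Writing the telescoping difference $C_U - C_{U'} = \sum_{k} L_k (U-U') R_k$ with $L_k, R_k$ products of unitaries, and using $\norm{AXB}_F \le \norm{A}_\infty \norm{X}_F \norm{B}_\infty$, gives $\norm{C_U - C_{U'}}_F \le T\,\norm{U-U'}_F$ (when the oracle acts on $m<n$ qubits the ancilla factor $2^{(n-m)/2}$ cancels against the $1/\sqrt{2^n}$, leaving the oracle's own dimension). Finally, unitarity gives $\norm{U-U'}_\infty \le 2$, so $\norm{U-U'}_F^2 = \sum_i \sigma_i^2 \le \norm{U-U'}_\infty \sum_i \sigma_i \le 2\norm{U-U'}_1$. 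Combining, the advantage is at most $T\sqrt{2\cdot\tfrac{1}{2^n}\norm{U-U'}_1}$, which is $2^{-\Omega(n)}$ whenever $T = \poly(n)$ and $\tfrac{1}{2^n}\norm{U-U'}_1 = O(1/2^n)$. This is exactly the claimed limitation and mirrors the known weakness of \textbf{DQC}1.

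\textbf{Part 2 (separation).} Part 1 immediately yields the Grover corollary (\cref{thm:grover}): the single-marked-item phase oracle $U_f = I - 2\ket{x^*}\bra{x^*}$ and the trivial oracle $I$ obey $\norm{U_f - I}_1 = 2 = O(1)$, so $\frac12$\textbf{BQP} cannot detect a marked item in $\poly(n)$ queries. It also gives an easy \emph{unitary}-oracle separation — distinguishing $I$ from the reflection $I - 2\ket{\psi}\bra{\psi}$ about a fixed known state $\ket{\psi}$ is trivial for \textbf{BQP} in $O(1)$ queries yet impossible for $\frac12$\textbf{BQP} by Part 1 — but the theorem demands a \emph{classical boolean function} oracle, which is more restrictive. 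For this I would invoke \cite{chen2023complexity}: their oracle separation between \textbf{BQP} and the trace-limited (\textbf{NISQ}-type) model rests on precisely the property proved in Part 1, namely that no polynomial-query machine of this form can resolve oracles close in normalized Schatten-$1$ distance. The plan is to substitute our Part-1 bound for their model-specific limitation lemma, verify that their lower-bound reduction and their (unaffected) \textbf{BQP} upper bound both go through, and conclude that there is a classical oracle $O$ with $\textbf{BQP}^{O} \not\subseteq \tfrac12\textbf{BQP}^{O}$.

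The first part is a routine hybrid/Frobenius-norm calculation whose only bookkeeping subtlety, the ancilla-dimension cancellation, I noted above. The main obstacle is Part 2: verifying that the hard instances of \cite{chen2023complexity} are genuinely families of classical boolean oracles whose induced unitaries are close in normalized trace — so that Part 1 applies — rather than instances (like plain unstructured search) that are also hard for \textbf{BQP}, and confirming that their query and norm conventions line up with ours. This compatibility check, not any new inequality, is the crux.
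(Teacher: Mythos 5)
Your Part~1 is correct and takes a genuinely different route from the paper's proof of \cref{thm:bqp-separation}. The paper fixes the general query circuit $V_0 O V_1 \cdots O V_r$, writes the fidelity of the two output states as $\frac{1}{2^n}\bigl|\Tr(C_U^\dagger C_{U'})\bigr|$, and runs an induction on $r$ using H{\"o}lder's inequality and the eigenvalue estimate of \cref{lemma:tr}, getting $F_r \geq 1 - O(r\sqrt{\varepsilon})$ and hence a query lower bound $\Omega(\delta^2/\sqrt{\varepsilon})$. You instead bound the advantage by the Euclidean distance $\bigl\|\bigl((C_U - C_{U'})\otimes I\bigr)\ket{\Phi}\bigr\| = \frac{1}{\sqrt{2^n}}\norm{C_U - C_{U'}}_F$ and telescope in the unitarily invariant Frobenius norm, giving advantage at most $T\sqrt{2\varepsilon}$, i.e.\ a lower bound $\Omega(\delta/\sqrt{\varepsilon})$. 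Your route dispenses with the induction and the custom trace lemma, your inequalities ($\norm{(M\otimes I)\ket{\Phi}}^2 = \frac{1}{2^n}\norm{M}_F^2$, $\norm{U-U'}_F^2 \leq 2\norm{U-U'}_1$) all check out, your ancilla bookkeeping matches the paper's observation that tensoring with identity preserves normalized Schatten-1 distance, and you even get a slightly better $\delta$-dependence. For the statement under review, this is a complete and arguably cleaner argument.

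One side claim is wrong, however: Part~1 does \emph{not} ``immediately yield'' \cref{thm:grover}. That theorem asserts an $\Omega(2^n\delta^2)$ bound, while your method gives only $\Omega(2^{n/2}\delta)$ for Grover oracles: since $\norm{U_f - I}_F = 2$ exactly, the Frobenius-norm argument can never beat advantage $\sim T/\sqrt{2^n}$, which fails to rule out a Grover-type speedup. The paper obtains the linear-in-$2^n$ bound from the refinement in \cref{lemma:tr} for diagonal oracles with $\pm 1$ eigenvalues, where the per-query loss is $2\varepsilon_2$ rather than $\sqrt{2\varepsilon_2}$; that structural improvement is invisible to your argument. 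This does not affect the statement you were asked to prove, but your Part~1 cannot be substituted wholesale for the paper's proof elsewhere.

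On Part~2, your plan coincides with the paper's, but you defer exactly the step that constitutes the proof. Two corrections. First, the paper does not import any limitation lemma from \cite{chen2023complexity} --- their NISQ lower bound is noise-based, so describing it as resting on ``precisely the property proved in Part~1'' is inaccurate; only the \emph{construction} is borrowed. Second, the deferred ``compatibility check'' is a two-line computation once the construction is written down: take a Simon function $f$ and pad it to $\widetilde{f}: \{0,1\}^{2n} \rightarrow \{0,1\}^n$ with $\widetilde{f}(x) = f(x_1,\dots,x_n)$ when $x_{n+1} = \dots = x_{2n} = 0$ and $\widetilde{f}(x) = 0$ otherwise. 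Two such index oracles $O_{\widetilde{f}}$ and $O_{\widetilde{g}}$ act identically except on the $2^{-n}$ fraction of basis states whose padding bits vanish, so their normalized Schatten-1 distance is $O(2^{-n})$ and Part~1 applies; meanwhile \textbf{BQP} solves the instance by running Simon's algorithm with the padding bits set to zero. Since you identified the right source and the right property to verify but exhibited neither the padded function nor the trace bound, your Part~2 is an accurate outline rather than a proof; the computation above is what is missing.
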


We note that the oracle problem defined in \cite{chen2023complexity} also establishes an oracle separation between \textbf{DQC}1 and \textbf{BQP}. The proof of the theorem is a straightforward hybrid argument where we show that any $\frac12$\textbf{BQP} circuit distinguishing the two oracles requires a large number of queries. 
 
We also study what restrictions one can place on $\frac12$\textbf{BQP} so that it can be simulated by \textbf{DQC}1. The goal is to understand which resources make $\frac12$\textbf{BQP} stronger than \textbf{DQC}1. On the other hand, if we can find a resource that when added to \textbf{DQC}1 recovers $\frac12$\textbf{BQP}, this would place an upper bound on the power of $\frac12$\textbf{BQP}. Concretely, we show that \textbf{DQC}1 can solve promise-$\frac12$\textbf{BQP} complete problems when the $\frac12$\textbf{BQP} post-processing is sufficiently sparse in the Fourier basis.
\begin{theorem*}[Informal statement of \cref{thm:when-dqc1-simulates-half-bqp}]
    \textbf{DQC}1 can simulate a $\frac12$\textbf{BQP} computation when the $\frac12$\textbf{BQP} post-processing has a learnable Fourier spectrum.
\end{theorem*}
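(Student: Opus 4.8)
The plan is to rewrite the acceptance probability of a $\frac12$\textbf{BQP} computation as a short linear combination of \emph{unweighted} normalized traces, each of which is a \textbf{DQC}1-native quantity, and then to invoke the learnability hypothesis to assemble that combination efficiently. Using the weighted-trace formulation behind \cref{thm:half-bqp-estimates-polytime-weighted-trace} and \cref{thm:half-bqp-complete-problem}, a $\frac12$\textbf{BQP} computation with circuit $C$, fixed diagonal output observable $\Pi$, and real post-processing weight $g:\{0,1\}^n\to[-1,1]$ attached to the learned string $w$ accepts with probability
\[
p_{\mathrm{acc}} \;=\; \frac{1}{2^n}\sum_{w\in\{0,1\}^n} g(w)\,\bra{w}C^\dagger \Pi\, C\ket{w}.
\]
First I would Fourier-expand the weight, $g(w)=\sum_{s}\wh{g}(s)\,(-1)^{s\cdot w}$, and push the characters inside the trace. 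Since $(-1)^{s\cdot w}$ is exactly the eigenvalue of the diagonal Pauli-$Z$ string $D_s=\bigotimes_{i}Z_i^{s_i}$ on $\ket{w}$, this yields
\[
p_{\mathrm{acc}} \;=\; \sum_{s}\wh{g}(s)\,\frac{1}{2^n}\operatorname{Tr}\!\big(D_s\,C^\dagger \Pi\, C\big).
\]

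The point of this rewriting is that each summand is the normalized trace of an efficiently implementable operator with \emph{no} residual weight: $D_s$ is a Pauli string realized by a reversible circuit with zero ancilla, so the character $(-1)^{s\cdot w}$ lies inside \textbf{DQC}1's accessible weight class, unlike a generic poly-time $g$. To make each term a genuine normalized trace of a unitary I would expand $\Pi=(I+Z_{\mathrm{out}})/2$, so that $\operatorname{Tr}(D_s C^\dagger \Pi C)=\tfrac12\operatorname{Tr}(D_s)+\tfrac12\operatorname{Tr}\!\big(D_s C^\dagger Z_{\mathrm{out}} C\big)$; the first term vanishes unless $s=0$, and the second is $\frac{1}{2^n}\operatorname{Tr}(U_s)$ for the bona fide unitary $U_s=D_s C^\dagger Z_{\mathrm{out}} C$, which is precisely the \textbf{DQC}1 complete problem, estimable to additive error $\delta$ in time $\poly(n,1/\delta)$ with no extra ancilla and with the $\frac{1}{2^n}$ normalization intact. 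Thus the entire difficulty of arbitrary poly-time weights is converted into the combinatorial task of knowing which characters appear in $g$ and with what amplitude.

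This is where the learnability hypothesis enters. Assuming the Fourier mass of $g$ is (approximately) concentrated on a polynomial-size set $S$ that can be identified together with the coefficients $\{\wh{g}(s)\}_{s\in S}$ from query access to $g$ --- available since $g$ is poly-time computable, e.g.\ via a Goldreich--Levin / Kushilevitz--Mansour heavy-coefficient search --- the \textbf{DQC}1 machine under classical control would: (i) run the learning procedure to obtain $S$ and the $\wh{g}(s)$; (ii) for each $s\in S$ run the \textbf{DQC}1 trace-estimation subroutine on $U_s$ to precision $\delta$, obtaining $\wh{\tau}_s$; and (iii) output $\sum_{s\in S}\wh{g}(s)\,\wh{\tau}_s$ as its estimate of $p_{\mathrm{acc}}$. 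The accumulated error is at most $\norm{\wh{g}}_1\,\delta$ plus the truncation tail $\sum_{s\notin S}|\wh{g}(s)|$; by Parseval $\norm{\wh{g}}_2\le 1$, so over $|S|=\poly(n)$ one has $\norm{\wh{g}}_1\le\sqrt{|S|}=\poly(n)$, and choosing $\delta=1/\poly(n)$ together with a concentration bound making the tail $o(1)$ keeps the total error below the $\frac12$\textbf{BQP} promise gap.

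I expect the main obstacle to be pinning down the right notion of ``learnable Fourier spectrum'' so that both sub-tasks succeed simultaneously within the \textbf{DQC}1 budget. The support-finding must return a genuinely polynomial-size set with a \emph{provable} tail bound, since otherwise the truncation error in step (iii) is uncontrolled; and, more delicately, the heavy-coefficient search must itself be implementable with the allowed resources --- ideally purely classically from the description of $g$ --- so that it does not secretly smuggle in non-\textbf{DQC}1 power. A secondary point is handling post-processing that depends jointly on the measured output $y$ and the learned string $w$: there the observable becomes $w$-dependent and one must instead Fourier-expand in $w$ to get $\sum_s \frac{1}{2^n}\operatorname{Tr}(D_s C^\dagger M^{(s)} C)$, after which the same argument goes through provided each diagonal component $M^{(s)}$ remains \textbf{DQC}1-implementable --- a condition that the sparsity-plus-learnability hypothesis must be stated to guarantee.
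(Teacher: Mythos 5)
Your core approach is essentially the paper's: Fourier-expand the post-processing weight, absorb each character $(-1)^{s\cdot w}$ into a Pauli string $Z^s$ inside a normalized trace, estimate each resulting trace with the \textbf{DQC}1 trace-estimation primitive, learn the heavy coefficients classically via coefficient estimation and Goldreich--Levin (the paper's \cref{fact:learning-fourier} and \cref{fact:goldreich-levin}), and output the truncated weighted sum, with the error split into estimation error weighted by the $\ell_1$ mass on the learned set plus the $\ell_1$ tail --- which the paper controls by making the tail bound $\sum_{(s,s')\notin M}|\wh f_{s,s'}|\le\varepsilon$ part of the hypothesis, answering your worry about a ``provable tail bound.'' In the special case you treat in detail, namely product-form post-processing $g(w)\cdot\Pi$ with a fixed diagonal acceptance observable, your argument is correct, and the $\Pi=(I+Z_{\mathrm{out}})/2$ device is a legitimate way to land on a bona fide unitary.

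The gap is in the general case, which is what \cref{thm:when-dqc1-simulates-half-bqp} actually addresses: the $\frac12$\textbf{BQP} post-processing is a function $f(w,z)$ of \emph{both} the learned string and the measurement outcome, and it need not factor. You defer this to a closing caveat, expanding only in $w$ and requiring that each diagonal block $M^{(s)}$ be ``\textbf{DQC}1-implementable'' --- but these $M^{(s)}$ are generic real-valued, non-unitary diagonal matrices (their entries are the partial transforms $\tfrac{1}{2^n}\sum_w f(w,z)(-1)^{s\cdot w}$), so as stated this condition is both unresolved and, it turns out, unnecessary as an extra hypothesis. The missing step is simply to Fourier-expand in $z$ as well: writing $f(w,z)=\sum_{s,s'}\wh f_{s,s'}(-1)^{s\cdot w+s'\cdot z}$ gives
\[
A=\sum_{s,s'}\wh f_{s,s'}\,\frac{1}{2^n}\Tr\bigl(Z^{s}C^\dagger Z^{s'}C\bigr),
\]
so every term is automatically the normalized trace of a unitary $V_{s,s'}=Z^{s}C^\dagger Z^{s'}C$, with no implementability side condition and no need for the $(I+Z_{\mathrm{out}})/2$ split; the only hypothesis needed is joint sparsity and learnability of the two-variable spectrum $\{\wh f_{s,s'}\}$ over $\{0,1\}^{2n}$. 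This double expansion is exactly what the paper does, and once it is in place your learning-plus-trace-estimation argument goes through verbatim.
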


Overall, our results suggest that $\frac12$\textbf{BQP} is nontrivially positioned in the sub-universal quantum complexity landscape somewhere above certain known classes such as \textbf{DQC}1 and below \textbf{BQP}. They also suggest specific directions for further study with the goal of obtaining a clearer picture of the boundary between \textbf{BPP} and \textbf{BQP}.

\subsection{Related work}
\label{subsec:related-work}
We now briefly discuss some related works. Most similar to $\frac12$\textbf{BQP} is the \textbf{NISQ} model of computation \cite{chen2023complexity}, which captures noisy \textbf{BQP} circuits without fault tolerant error correction. It turns out that the \textbf{NISQ} complexity class yields very similar oracle separations to $\frac12$\textbf{BQP}. As an example, \cite{chen2023complexity} shows that \textbf{NISQ} solves a variation of Simon's problem in which the oracle is made robust against noise, giving a separation from \textbf{BPP}. More recently, \cite{chia2024oracle} showed that the Raz Tal oracle separation between \textbf{BQP} and \textbf{PH,raz-tal} translates to the \textbf{NISQ} model in the presence of depolarizing noise below a certain threshold, and when there is no noise after the first oracle query. This version of \textbf{NISQ} is similar to $\frac12$\textbf{BQP}, since they both study quantum circuits with noise in the beginning. In this sense, $\frac12$\textbf{BQP} and this version of \textbf{NISQ} seem to be close in computational power, and we are able to prove very similar results in both models.
 However, $\frac12$\textbf{BQP} and the version of \textbf{NISQ} studied in \cite{chia2024oracle} represent different scenarios and the proofs for the given oracle separations use different techniques. In particular, the \textbf{NISQ} model studies depolarizing noise below a certain threshold, where $\frac12$\textbf{BQP} studies gate based noise which occurs with probability $\frac{1}{2}$ (way above the noise threshold). Furthermore, $\frac12$\textbf{BQP} represents the scenario in which we learn the specific noise that occurred. It is also not clear how to solve \textsc{Period Finding} in \textbf{NISQ}. In particular, while $\frac{1}{2}$\textbf{BQP} can tolerate arbitrary high depth circuits for certain computations, \textbf{NISQ} cannot. On the other hand since \textbf{NISQ} circuits begin close to the all zeros state they have an advantage for low depth circuits (without any restriction on the commutation structure of the gates). We note that despite similarities to \cite{chia2024oracle}, our results were obtained independently.


\subsection{Open problems and future directions}
\label{subsec:open-problems}
We now give some open problems.
\begin{itemize}
    \item Is there an oracle $O$ relative to which \textbf{DQC}1$^O \subsetneq$ $\frac12$\textbf{BQP}$^O$?
    We believe that \textbf{DQC}1 is strictly contained in $\frac12$\textbf{BQP}, since we do not believe \textbf{DQC}1 can perform Fourier sampling or any of the related oracle problems. However, making this separation concrete is an open problem.
    
    \item What minimal resource can we provide to \textbf{DQC}1 so that it becomes equivalent to $\frac12$\textbf{BQP}?
    In \cref{subsec:when-dqc1-simulates-half-bqp} we will see that we can represent a $\frac12$\textbf{BQP} computation as a weighted sum of \textbf{DQC}1 computations. Finding an oracle which strengthens \textbf{DQC}1 to $\frac12$\textbf{BQP} would place an upper bound on the power of $\frac12$\textbf{BQP} and help clarify the exact relationship between \textbf{DQC}1 and $\frac12$\textbf{BQP}.

    \item What is the relationship between $\frac12$\textbf{BQP} and other intermediate quantum models of computation?
    Since we have shown that $\frac12$\textbf{BQP} contains certain intermediate quantum models, it is reasonable to ask about other models such as \textsc{BosonSampling} or \textbf{NISQ}. Can we show containments, or that these models are incomparable? We conjecture that $\frac12$\textbf{BQP} is incomparable to \textbf{NISQ} and \textbf{BPP}$^{\textbf{QNC}1}$.

    \item Consider a generalization of \textbf{NISQ} where we get to learn something about the content of noise (e.g. by measuring the environment surrounding the device) at the end of the computation. Can we show that by adding this modification we obtain a model that is strictly more powerful than \textbf{NISQ}? $\frac{1}{2}$\textbf{BQP} can be viewed as an instance where an extremely high rate of noise is applied in the beginning. 
    
    \item Can $\frac12$\textbf{BQP} solve $3$-\textsc{Forrelation}?
    We conjecure that $\frac12$\textbf{BQP} cannot solve $3$-\textsc{Forrelation} and give some intuition for why this might be the case.
    \item More broadly, is there some $t(n)$ such that $\frac12$\textbf{BQP} cannot solve $t(n)$-\textsc{Forrelation} for some $t(n)$?
    If our conjecture is false, it is still important to know where $\frac12$\textbf{BQP} sits in the Forrelation hierarchy. Resolving this problem is important for understanding the limits of $\frac12$\textbf{BQP} and its relationship to \textbf{BQP}.
    \item Can $\frac12$\textbf{BQP} implement the quantum Fourier transform over arbitrary cyclic groups? While $\frac12$\textbf{BQP} can implement the quantum Fourier transform over cyclic groups over a modulus which is a power of $2$, we have not been able to implement the quantum Fourier transform for arbitrary cyclic groups. In \cref{subsubsec:hsp} we show that if this is possible, then $\frac12$\textbf{BQP} can solve the arbitrary hidden subgroup problem for abelian groups.
    \item Can $\frac12$\textbf{BQP} implement the quantum phase estimation procedure? If this is the case, then $\frac12$\textbf{BQP} can implement Kitaev's algorithm for the quantum Fourier transform over arbitrary cyclic groups.
\end{itemize}

\subsection*{Acknowledgements}

S. M.\ and D. J.\ are grateful to the National Science Foundation (NSF CCF-2013062) for supporting this project. We thank V. Podolskii, A. Blumer, A. Motamedi, J. Jeang, M. Joseph, W. Khangtragool, M. Coudron, L. Schaeffer, U. Chabaud, A. Bouland, G. Kuperberg, and S. Aaronson for insightful conversations. 

\section{Preliminaries}

\subsection{Complexity theory background}
\subsubsection*{Computational complexity}
In computational complexity theory, we are interested in characterizing computational models based on which problems they can solve under resource constraints, usually space or time. Formally, a \textit{complexity class} $A$ is the set of decision problems solvable by a given computational model, or equivalently the set of languages decidable by the model. Throughout this work, we will sometimes abuse terminology by stating that a non-decision problem is ``solved" by a decision class (such as $\frac12$\textbf{BQP} solving \textsc{Factoring}). These cases should be clear from context. Furthermore, unless stated otherwise, all of the problems discussed have equivalent statements as decision problems. 
 
Characterizing complexity classes generally takes one of two forms: containment, or separation.
Showing a containment $A \subseteq B$ usually involves describing a procedure that allows the model describing B to efficiently simulate the model describing A.
On the other hand, showing a separation $A \neq B$ involves finding a problem that is in either A or B but not both.
Characterizing complexity classes is practically relevant in that it determines what types of computational problems are practically feasible under a given model of computation.
 
A relevant subset of complexity theory is quantum complexity theory, 
which deals with models of computation that leverage quantum effects.
An important open question in this area is whether quantum models of computation are provably more powerful than classical models.
It is widely believed that quantum computers should provide a theoretical computational advantage,
and so there has been much research attempting to prove, or at least provide evidence for this advantage.
 
We now give informal descriptions of several standard complexity classes which are relevant to our discussion. For a formal treatement, we recommend the Complexity Zoo \cite{aaronson2005complexity}.
\subsubsection*{Standard complexity classes}
\textbf{P} is the set of decision problems solvable by a polynomial-time Turing machine. Similarly, \textbf{BPP} is the set of decision problems solvable by a polynomial-time Turing machine with access to random bits. In general, \textbf{P} and \textbf{BPP} are considered to represent problems which can be solved efficiently by practically feasible models of classical computing.
 
\textbf{NP} is the set of decision problems corresponding to existential quantifiers over polynomial-time computable boolean functions. That is, \textbf{NP} problems ask whether there exists a string $x \in \{0,1\}^n$ such that $f(x) = 1$, where $f$ is computable by a polynomial-time Turing machine.
 
Similarly, \textbf{Co-NP} is the set of decision problems corresponding to universal quantifiers over polynomial-time computable boolean functions. \textbf{Co-NP} problems ask whether for all $x \in \{0,1\}^n$, $f(x) = 1$, where again $f$ is computable by a polynomial-time Turing machine.
 
The Polynomial Hierarchy \textbf{PH} defines a generalization of \textbf{NP} and \textbf{Co-NP} corresponding to combinations of existential and universal quantifiers over polynomial-time computable boolean functions. In particular, \textbf{PH} is defined as
\[\textbf{PH} = \bigcup_i \bm{\Sigma_i}\]
where $\bm{\Sigma_i}$ corresponds to problems of the form
\[\exists x_1, \forall x_2, \dots, Q_i x_{i}, f(x_1, x_2,\dots,x_i) = 1.\] 
Here $Q_i$ is either $\exists$ or $\forall$ depending on whether $i$ is even or odd, and again $f$ is computable by a polynomial-time Turing machine. Importantly, \textbf{PH} is a particularly strong complexity class, widely believed to be significantly more powerful than any physically realizable model of computation. Furthermore, it is a widely believed conjecture that \textbf{PH} is not equivalent to any of its finite levels, that is \textbf{PH} $\neq \bm{\Sigma_i}$ for any finite $i$. 
 
\textbf{BQP} is the set of decision problem solvable by uniformly-generated polynomial-size quantum circuits with error probability $\leq \frac{1}{3}$. Informally, \textbf{BQP} represents problems that are efficiently solvable by a universal quantum computer. Importantly, \textbf{BQP} contains \textbf{BPP}.

\subsubsection*{Oracles and Oracle Separations}
An \textit{oracle} can be thought of as a black-box solving some decision problem or implementing some function in one time step,
and an \textit{oracle model} is a model with query access to some oracle.
Concretely, oracle access means that the model may query the oracle to obtain the correct output to any instance of the corresponding problem or function in one time step. For classical computations, this is usually straightforward. An oracle $O_f$ implementing a function $f$ outputs $f(x)$ when queried on input $x$. However, to use oracles in a quantum circuit, we must be careful about how the function is implemented as a unitary transformation. The two most common forms of quantum oracle are the phase oracle and the index oracle, which we now define.
\begin{definition}[Phase oracle]
    Let $f: \{0,1\}^n \rightarrow \{0,1\}$ and let $O_f$ be a phase oracle implementing $f$. Then $O_f$ acts according to
    \[O_f: \ket{x} \mapsto (-1)^{f(x)}\ket{x},\]
    where $x \in \{0,1\}^n$.
\end{definition}
\begin{definition}[Index oracle]
    Let $f: \{0,1\}^n \rightarrow \{0,1\}^m$ and let $O_f$ be an index oracle implementing $f$. Then $O_f$ acts according to
    \[O_f: \ket{x}\ket{y} \mapsto \ket{x}\ket{y \oplus f(x)},\]
    where $x \in \{0,1\}^n$ and $y \in \{0,1\}^m$.
\end{definition}
The class of problems solvable by the model describing class A with access to oracle O is denoted A$^\text{O}$.
If there exists a problem which is in A$^\text{O}$ but not B$^\text{O}$, 
we say that A and B are separated relative to O. 
This is called an \textit{oracle separation}.
Note that A$^\text{O} \neq$ B$^\text{O}$ does not necessarily imply A $\neq$ B.
Furthermore A $\neq$ B does not always imply A$^\text{O} \neq$ B$^\text{O}$, depending on the choice of $A$, $B$ and $O$.
 
There are several well-known oracle separation results relating \textbf{P} and \textbf{BQP}.
An early result is the Deutsch-Jozsa problem, followed by Simon's problem \cite{deutsch1992rapid,simon1997power}.
The Deutsch-Jozsa gives an oracle separation between \textbf{BQP} and \textbf{P}, and Simon's problem gives an oracle separation between \textbf{BQP} and \textbf{BPP}. 
More recently, Raz and Tal \cite{raz-tal} gave an oracle separation between \textbf{BQP} and \textbf{PH} based on \textsc{Forrelation} \cite{aaronson2014forrelation}.
 
In most cases, the oracle O is not efficiently implementable nor practically relevant.
However, oracle separations are still useful for studying relationships between complexity classes.

\subsubsection*{Asymptotics}
Following the standard convention, we use $O(\cdot), \Omega(\cdot)$ to hide universal constants.

\subsection{The one-clean-qubit model (\textbf{DQC}1)}

Inspired by nuclear magnetic resonance (NMR) quantum computation, the one-clean-qubit model was defined in \cite{Knill_1998} to capture the computational power of one bit of quantum information. In this model, we run a uniformly generated polynomial-size quantum circuit $C$ on $n+1$ registers, consisting of $1$ clean qubit and $n$ qubits in the maximally mixed state. Formally, the input to the circuit is described by
\[\rho = \ketbra{0}{0} \otimes \frac{I}{2^n}.\]
The \textbf{DQC}1 complexity class is the set of problems solvable by polynomially many \textbf{DQC}1 circuits with \textbf{BPP} pre and post-processing.
 
It is important that no intermediate measurements are allowed,
otherwise we could initialize the input state to all zeros and recover \textbf{BQP}.
 
\textbf{DQC}1 is well studied and several important results are known about its computational power. It is believed to be strictly intermediate between \textbf{BPP} and \textbf{BQP}, though this is still open. One important result is that \textit{Normalized Trace Estimation} is \textbf{DQC}1-complete.
\begin{fact}[\cite{Knill_1998}]
    \label{fact:dqc1-trace}
    Given a description of a quantum circuit $C$ implementing unitary $U$, estimating the quantity
    \[\frac{1}{2^n}\Tr(U)\]
    within inverse polynomial additive error is complete for  \textbf{DQC}1.
\end{fact}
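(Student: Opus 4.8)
The plan is to prove both directions of the completeness statement: that normalized trace estimation lies in \textbf{DQC}1 (containment), and that every \textbf{DQC}1 computation reduces to it (hardness).

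For containment, I would use the Hadamard test. Starting from the \textbf{DQC}1 input $\ketbra{0}{0}\otimes \frac{I}{2^n}$, apply a Hadamard to the clean qubit, then a controlled-$U$ with the clean qubit as control and the $n$ maximally mixed qubits as target, and finally a second Hadamard to the clean qubit before measuring it. Since $U$ is given as a polynomial-size circuit, its controlled version is also polynomial size, so this is a legal \textbf{DQC}1 circuit. Tracing out the mixed register, the clean-qubit reduced state acquires off-diagonal entries proportional to $\frac{1}{2^n}\Tr(U)$, and a direct computation gives $\Pr[\text{outcome }0] = \frac{1}{2}\big(1 + \mathrm{Re}\,\tfrac{1}{2^n}\Tr(U)\big)$. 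Inserting a phase gate $S^\dagger$ on the clean qubit before the final Hadamard extracts the imaginary part analogously. Thus sampling the clean-qubit measurement estimates $\frac{1}{2^n}\Tr(U)$ to inverse polynomial additive error.

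For hardness, I would show that the output of an arbitrary \textbf{DQC}1 circuit is itself a normalized trace. Let $C$ be the \textbf{DQC}1 circuit on $n+1$ qubits and consider the expectation $\langle Z\rangle = \Tr[(Z\otimes I)\,C(\ketbra{0}{0}\otimes \tfrac{I}{2^n})C^\dagger]$ of Pauli $Z$ on the clean qubit, which determines the acceptance probability. Writing $\ketbra{0}{0} = \frac{1}{2}(I + Z)$ and expanding, the $I$ term contributes $\frac{1}{2^{n+1}}\Tr(Z\otimes I) = 0$, leaving
\[
\langle Z\rangle = \frac{1}{2^{n+1}}\Tr\big[(Z\otimes I)\,C\,(Z\otimes I)\,C^\dagger\big].
\]
The operator $W = (Z\otimes I)\,C\,(Z\otimes I)\,C^\dagger$ is a product of unitaries, hence unitary and polynomial-size, and $\Tr(W)$ is real since $\Tr(W^\dagger)=\Tr(W)$ by cyclicity of the trace together with the fact that $Z\otimes I$ is Hermitian. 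Therefore estimating the \textbf{DQC}1 output to inverse polynomial precision is exactly estimating $\frac{1}{2^{n+1}}\Tr(W)$ to inverse polynomial precision, which gives the reduction.

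The main obstacle is less in the algebra than in the bookkeeping that keeps both reductions inside \textbf{DQC}1: in the containment direction I must verify that the single clean qubit can serve as the Hadamard-test control while the target register stays maximally mixed, and in the hardness direction I must confirm that the constructed $W$ introduces no additional clean qubits and that the \textbf{BPP} pre- and post-processing allowed in \textbf{DQC}1 suffices to assemble the decision from the estimate. The remaining quantitative point is the error analysis: since each run yields a single bit, estimating the acceptance probability (and hence the trace) to inverse polynomial additive error requires averaging over polynomially many independent repetitions, which a Hoeffding bound guarantees.
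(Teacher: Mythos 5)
Your proposal is correct, and it actually proves more than the paper does for this statement: the paper cites \cite{Knill_1998} and never writes out a proof of \cref{fact:dqc1-trace} itself. The closest thing in the paper is the proof it gives for the $f$-weighted generalization (the fact attributed to \cite{shor2008estimating}), and your containment direction coincides with that argument exactly --- the Hadamard test with the clean qubit as control, $\Pr[0] = \frac12\bigl(1 + \Re\,\frac{1}{2^n}\Tr(U)\bigr)$, the $S^\dagger$ phase-gate modification for the imaginary part, and Hoeffding-style repetition for inverse-polynomial precision. Your hardness direction is the standard Knill--Laflamme reduction that the paper leaves entirely to the citation: writing $\ketbra{0}{0} = \frac12(I+Z)$, killing the $I$ term by $\Tr(Z\otimes I)=0$, and packaging the survivor as $\frac{1}{2^{n+1}}\Tr(W)$ with $W = (Z\otimes I)\,C\,(Z\otimes I)\,C^\dagger$ unitary, polynomial-size, and of real trace (your cyclicity argument for realness checks out). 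The one bookkeeping point worth keeping explicit, which you already flag, is that this reduction handles a single \textbf{DQC}1 circuit whose decision is read off the clean qubit; extending to the paper's definition (polynomially many circuits with \textbf{BPP} post-processing) just requires one trace estimate per circuit, each to inverse-polynomial accuracy, which the promise gap of the class tolerates. No gaps; both directions are sound.
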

Specifically, the above statement means that we can simulate \textbf{DQC}1 circuits by estimating the normalized trace of a unitary circuit, and there is a \textbf{DQC}1 computation that can estimate the normalized trace within additive error. Another useful fact is that \textbf{DQC}1 circuits are not strengthened by using up to $\log(n)$ clean qubits.
\begin{fact}[\cite{ambainis2000computing, shor2008estimating}]
    \label{fact:dqc1-dqclogn}
    \textbf{DQC}(k) = \textbf{DQC}1 for $k = O(\log(n))$.
\end{fact}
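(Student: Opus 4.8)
The plan is to prove the two inclusions separately. The forward inclusion $\textbf{DQC}1 \subseteq \textbf{DQC}(k)$ is immediate: a machine equipped with $k \geq 1$ clean qubits can simply leave all but one of them unused (treating them as inert workspace), thereby simulating any \textbf{DQC}1 computation verbatim. All the content lies in the reverse direction, showing $\textbf{DQC}(k) \subseteq \textbf{DQC}1$ when $k = O(\log n)$, and my approach is to reduce a \textbf{DQC}($k$) computation to polynomially many normalized-trace estimations of the kind \cref{fact:dqc1-trace} guarantees \textbf{DQC}1 can perform.

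Concretely, I would start from the \textbf{DQC}($k$) input $\rho = \ketbra{0^k}{0^k} \otimes \frac{I}{2^n}$ on $m = n+k$ qubits and write the acceptance bias of the computation with circuit $U$ and (Pauli) output observable $M$ as $\langle M \rangle = \Tr[M U \rho U^\dagger]$. The key step is the Pauli expansion of the clean register, $\ketbra{0^k}{0^k} = \frac{1}{2^k}\sum_{S \subseteq [k]} Z_S$, where $Z_S$ denotes $Z$ on the qubits indexed by $S$ and identity elsewhere. Substituting yields $\langle M \rangle = \frac{1}{2^k}\sum_{S \subseteq [k]} q_S$ with $q_S = \frac{1}{2^m}\Tr\!\left[M\,U\,(Z_S \otimes I_n)\,U^\dagger\right]$. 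Taking $M$ to be the Pauli observable measured on the clean register, the operator $M\,U\,(Z_S \otimes I_n)\,U^\dagger$ is unitary (the Paulis $Z_S$ and $M$ are folded into the circuit as gates), so each $q_S$ is a real normalized trace that \textbf{DQC}1 estimates to inverse-polynomial additive error via the Hadamard test underlying \cref{fact:dqc1-trace}.

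The final step is the error and resource accounting, which is where I would be most careful. Because $k = O(\log n)$, the number of terms is $2^k = \poly(n)$, so I estimate each $q_S$ to precision $\delta$ and form the weighted sum $\frac{1}{2^k}\sum_S \hat{q}_S$ in \textbf{BPP} post-processing. The total additive error is then at most $\delta$, since the $\frac{1}{2^k}$ weights exactly compensate the $2^k$ terms, and a union bound over the $\poly(n)$ per-term failure probabilities keeps the overall confidence high using only $\poly(n)$ invocations of \textbf{DQC}1 circuits. This matches the definition of \textbf{DQC}1 as polynomially many \textbf{DQC}1 circuits with \textbf{BPP} pre- and post-processing, closing the inclusion.

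I expect the main obstacle to be precisely this interplay between the number of terms and their weights: it is the cancellation of the $2^k$ summands against the $\frac{1}{2^k}$ prefactor that simultaneously keeps the aggregate error inverse-polynomial and the number of \textbf{DQC}1 calls polynomial, which is exactly why the threshold is $k = O(\log n)$ and not larger — for super-logarithmic $k$ the number of Pauli terms becomes super-polynomial and the reduction breaks down. A secondary subtlety to handle is that if the output is read off as a projector onto the clean register returning to $\ket{0^k}$ rather than a single Pauli, that projector must itself be Pauli-expanded, turning the single sum into a double sum; this merely squares the number of terms and so remains polynomial, leaving the argument intact.
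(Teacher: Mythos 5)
First, note that the paper never proves this statement: it is imported as a Fact with citations to \cite{ambainis2000computing, shor2008estimating}, so the comparison here is with the standard literature argument rather than with an in-paper proof. Your proposal is essentially that standard argument: expand the clean register in the Pauli basis, $\ketbra{0^k}{0^k} = \frac{1}{2^k}\sum_{S \subseteq [k]} Z_S$, and reduce the acceptance bias to polynomially many normalized-trace estimations of the kind \cref{fact:dqc1-trace} provides. The structure is right: each $q_S$ is indeed real (it is the normalized trace of a product of two Hermitian unitaries), each unitary $M U (Z_S \otimes I_n) U^\dagger$ has an efficient circuit description, and your closing remark about Pauli-expanding a projector-valued output observable is handled correctly.

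There is one slip, and it sits exactly at the point you call the crux. With $m = n+k$ and $q_S = \frac{1}{2^m}\Tr[M U (Z_S \otimes I_n) U^\dagger]$, the correct identity is $\langle M \rangle = \sum_{S \subseteq [k]} q_S$, with no $\frac{1}{2^k}$ prefactor: the $\frac{1}{2^k}$ from the Pauli expansion is already absorbed into $\frac{1}{2^m} = \frac{1}{2^k}\cdot\frac{1}{2^n}$. (Your formula would hold if $q_S$ were normalized by $2^n$ instead of $2^m$, but it is the $2^m$-normalized quantity that a \textbf{DQC}1 machine estimates, since its Hadamard test traces over all $m$ maximally mixed qubits.) Consequently the claim that per-term precision $\delta$ yields aggregate error $\delta$ because ``the $\frac{1}{2^k}$ weights exactly compensate the $2^k$ terms'' is false as written; the aggregate error is $2^k\delta$. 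This does not break the proof: since $2^k = \poly(n)$, you simply demand per-term precision $\delta/2^k$, which is still inverse polynomial, and the number of \textbf{DQC}1 invocations stays polynomial. But the reason $k = O(\log n)$ is the threshold should then be stated as: $2^k$ must be polynomial both so that the number of trace estimations is polynomial and so that the required per-term precision remains inverse polynomial --- not as an exact cancellation of weights against term counts.
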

One more fact is that \textbf{BPP} cannot simulate \textbf{DQC}1 circuits unless the polynomial hierarchy collapses.
\begin{fact}[\cite{ambainis2000computing}]
    \label{fact:bpp-cant-sample-dqc1}
    \textbf{BPP} cannot sample from \textbf{DQC}1 circuits unless \textbf{PH} collapses to its third level.
\end{fact}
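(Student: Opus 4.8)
The plan is to follow the now-standard template of Bremner, Jozsa and Shepherd for ruling out efficient classical sampling, combining Stockmeyer's approximate-counting theorem with Toda's theorem. Suppose, toward a contradiction, that there is a \textbf{BPP} algorithm that samples from the output distribution of an arbitrary \textbf{DQC}1 circuit. The goal is to derive from this assumption that \textbf{PH} collapses to its third level. The two ingredients I need are (i) that the individual output probabilities of \textbf{DQC}1 circuits are hard to approximate multiplicatively --- specifically $\#\textbf{P}$-hard --- and (ii) that an exact classical sampler makes these same probabilities approximable inside the third level of \textbf{PH}.

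For ingredient (i), I would exploit the trace-estimation structure underlying \textbf{DQC}1 (\cref{fact:dqc1-trace}). In the standard Hadamard-test realization, the clean qubit is prepared in $\ket{+}$, a controlled-$U$ is applied to the maximally mixed register, and the clean qubit is measured; the probability of the outcome $0$ equals $\frac{1}{2}\left(1 + \frac{1}{2^n}\Re\,\Tr(U)\right)$. Hence exact knowledge of this single output probability determines $\Re\,\Tr(U)$ exactly. I would then choose $U$ to be the permutation unitary induced by a reversible circuit computing a $\#\textbf{P}$ counting predicate, so that $\Tr(U)$ counts fixed points and therefore encodes a $\#\textbf{P}$-hard quantity. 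To promote exact hardness to hardness of multiplicative approximation (which is what Stockmeyer will deliver), I would instead encode a \textbf{GapP} function whose multiplicative approximation is already $\#\textbf{P}$-hard --- for instance via the $\#\textbf{P}$-hardness of multiplicatively approximating the permanent of a general matrix --- into the measurement amplitude, so that a multiplicative estimate of the output probability yields a multiplicative estimate of the hard quantity.

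For ingredient (ii), I would invoke Stockmeyer's theorem: given a \textbf{BPP} sampler for a distribution, any of its output probabilities $p(y)$ can be approximated to within a $(1\pm 1/\poly)$ multiplicative factor by a function in $\textbf{FBPP}^{\textbf{NP}}$, which lies in the third level of \textbf{PH}. Combining (i) and (ii), the assumed sampler would place the multiplicative approximation of a $\#\textbf{P}$-hard quantity inside the third level of \textbf{PH}, giving $\#\textbf{P} \subseteq \textbf{FBPP}^{\textbf{NP}}$. By Toda's theorem $\textbf{PH} \subseteq \textbf{P}^{\#\textbf{P}}$, so the entire hierarchy would be absorbed into the third level, i.e.\ $\textbf{PH} = \bm{\Sigma_3}$, the desired collapse.

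The main obstacle is ingredient (i): making the output probabilities genuinely $\#\textbf{P}$-hard to approximate multiplicatively even though \textbf{DQC}1 is (conjecturally) far weaker than \textbf{BQP}. The key point is that the hardness concerns computing or approximating the probabilities themselves, not the decision power of the model; the right tool is a postselection argument (the power of \textbf{DQC}1 augmented with postselection, or equivalently the direct encoding of a \textbf{GapP} gap into a measurement amplitude) together with a hardness-of-approximation result such as the inapproximability of the permanent. Arranging the precision bookkeeping so that Stockmeyer's multiplicative guarantee is strong enough to recover the hard quantity, while keeping the encoding circuit within the \textbf{DQC}1 format, is where the care is required.
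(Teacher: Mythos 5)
The paper itself does not prove this statement: it is imported as a black-box fact with a citation (and the proof in the literature is due to Morimae, Fujii, and Fitzsimons, ``Hardness of classically simulating the one-clean-qubit model,'' rather than the cited Ambainis--Schulman--Vazirani paper). So your proposal has to stand on its own. Your scaffolding is the standard and correct one --- Stockmeyer approximate counting places the sampler's output probabilities inside the third level of \textbf{PH}, and Toda's theorem converts $\#\textbf{P}$-hardness of multiplicatively approximating those probabilities into the collapse --- and your ingredient (ii) and the final bookkeeping are fine.

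The genuine gap is ingredient (i), and it is a step that fails rather than a detail to be arranged. In the Hadamard-test realization you start from, the measured probability is $\Pr[0]=\frac12+\frac12\cdot\frac{\Re\Tr(U)}{2^n}$, a quantity of order $1$. For probabilities bounded away from zero, a multiplicative $(1\pm 1/\poly)$ estimate is the same thing as an additive $1/\poly$ estimate, so Stockmeyer would only hand you $\Re\Tr(U)/2^n$ up to additive inverse-polynomial error --- and that problem is not $\#\textbf{P}$-hard: it is exactly what \textbf{DQC}1 itself solves (\cref{fact:dqc1-trace}), and for your permutation-unitary choice it is even in \textbf{BPP} (sample random inputs and count fixed points). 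No choice of hard quantity encoded in $\Tr(U)$ can rescue this: the hardness is destroyed by the additive offset of $\frac12$, i.e.\ by the fact that only one qubit is measured. What the proof actually requires --- and what is the real content of the Morimae--Fujii--Fitzsimons result --- is a \textbf{DQC}1 circuit in which an \emph{exponentially small} joint outcome probability over \emph{many} measured qubits equals, up to the known factor $2^{-n}$, a squared GapP quantity. For instance, measuring all $n+1$ qubits one has
\[
\Pr[0^{n+1}]\;=\;\frac{1}{2^n}\,\norm{(\bra{0}\otimes I)\,U^\dagger\ket{0^{n+1}}}^2 ,
\]
so it suffices to design $U^\dagger$ to run an arbitrary polynomial-size circuit $D$ on $k$ wires of the all-zeros state and then reversibly OR those wires into the clean wire, giving $\Pr[0^{n+1}]=\abs{\bra{0^k}D\ket{0^k}}^2/2^n$; multiplicative approximation of $\abs{\bra{0^k}D\ket{0^k}}^2$ is $\#\textbf{P}$-hard by the standard GapP argument \cite{bremner2011classical}. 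This construction is precisely what overcomes the averaging over the $2^n$ maximally mixed inputs (equivalently, it is what one must do to show postselected \textbf{DQC}1 contains \textbf{PostBQP}), and it is exactly the piece your write-up defers to ``where the care is required.'' Two smaller points: the permanent is the natural hard quantity for linear optics \cite{aaronson2011computational}, not here --- gap functions of Boolean circuits are what circuit amplitudes encode; and the statement is sensitive to how many qubits are measured, since with only the clean qubit measured (as in your setup) the Stockmeyer route provably yields nothing.
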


Intuitively, the \textbf{DQC}1 model can be thought of as running a quantum circuit on a random input string,
but with one clean qubit (equivalently $O(\log n)$ clean qubits) used to extract information.
This perspective works well to capture the \textbf{DQC}1-complete problem of normalized trace estimation,
in which we use our clean qubit in the Hadamard test to sample random elements from the diagonal of a unitary.
 
However, there is a significant drawback in the one-clean-qubit model compared to \textbf{BQP}.
Although we can run any circuit we want on our random input string,
there are many quantum algorithms which we cannot implement in \textbf{DQC}1,
since their implementation may require a superlogarithmic number of clean ancilla qubits. In fact, it is not even clear how to simulate classical computation using \textbf{DQC}1 circuits. \footnote{As a consequence, if we define a variation of \textbf{DQC}1 without polynomial-time classical pre-processing, this variation does not necessarily contain \textbf{P}.}.
Since many quantum algorithms use classical algorithms as subroutines, \textbf{DQC}1 is significantly disadvantaged by its limited number of clean ancillae compared to \textbf{BQP}.
 
Due to this limitation, it is meaningful to ask which classical functions are computable by \textbf{DQC}1 circuits. Clearly \textbf{DQC}1 can implement functions computable by reversible circuits with up to $O(\log n)$ clean ancillae. Notably, this class of functions contains logarithmic-depth circuits (\textbf{NC}$^1$). This is a consequence of Barrington's theorem \cite{barrington1986bounded, ambainis2000computing}. Thus, while \textbf{DQC}1 circuits may not use arbitrary classical functions as subroutines, they can use \textbf{NC}$^1$ functions as subroutines, or more generally any classical function computable by reversible circuits with a logarithmic number of clean ancillae, as previously stated.
 
Using classical functions as a subroutine in \textbf{DQC}1 circuits leads to a generalization of normalized trace estimation which has proven useful for designing \textbf{DQC}1 algorithms and will be relevant to our discussion. 

\subsubsection*{Weighted Normalized Trace Estimation}
So far we have seen that normalized trace estimation is a DQC1-complete problem.
We have also seen that DQC1 = DQC($\log n$).
Combining these two facts, 
we see that DQC1 can compute a weighted trace over a subset of $n$-bit strings.
\begin{fact}[\cite{shor2008estimating}]
  Let $f: \{0,1\}^n \rightarrow [-1,1]$ be a function with a polynomial-size reversible circuit implementation using $O(\log n)$ ancillas.
  Then it is \textbf{DQC}1-complete to estimate the f-weighted normalized trace of a circuit C within inverse polynomial additive error, 
  where the f-weighted normalized trace is defined as
  \[W_f\Tr(C) = \frac{1}{2^n}\sum_{x \in \{0,1\}^n} f(x) \bra{x}C\ket{x}\]
\end{fact}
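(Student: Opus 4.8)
The statement asserts \textbf{DQC}1-\emph{completeness}, which has two halves: hardness (every \textbf{DQC}1 computation reduces to estimating some $W_f\Tr$) and containment (\textbf{DQC}1 can estimate $W_f\Tr(C)$ to inverse-polynomial additive error). The hardness half is immediate, and I would dispatch it first: taking $f \equiv 1$, which is trivially computed by a reversible circuit with no ancillas, reduces $W_f\Tr(C)$ to the ordinary normalized trace $\frac{1}{2^n}\Tr(C)$, which is \textbf{DQC}1-complete by \cref{fact:dqc1-trace}; hence the weighted problem is at least as hard. All of the real work is therefore in the containment direction.

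For containment, the plan is to fold the weight $f$ into a slightly larger circuit whose \emph{ordinary} normalized trace already equals $W_f\Tr(C)$, and then invoke \cref{fact:dqc1-trace}. When $f$ takes values in $\{-1,1\}$ this is cleanest: the diagonal phase $D_f\ket{x} = f(x)\ket{x}$ satisfies $\frac{1}{2^n}\Tr(D_f C) = \frac{1}{2^n}\sum_x f(x)\bra{x}C\ket{x} = W_f\Tr(C)$, and $D_f$ is implemented by reversibly computing $g(x) = \tfrac{1-f(x)}{2} \in \{0,1\}$ into a clean ancilla, applying $Z$, and uncomputing. Since $f$ is reversibly computable with $O(\log n)$ clean ancillas and $\textbf{DQC}1 = \textbf{DQC}(\log n)$ by \cref{fact:dqc1-dqclogn}, these ancillas are available, so $\frac{1}{2^n}\Tr(D_f C)$ is \textbf{DQC}1-estimable.

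For general $f \colon \{0,1\}^n \to [-1,1]$ I would adjoin a single fresh ancilla and a controlled $Y$-rotation. Writing $f(x) = \cos g(x)$ and setting $R = \sum_x \ketbra{x}{x} \otimes R_Y(2g(x))$, a short calculation shows that for the augmented circuit $M = (C \otimes I)\,R$ on $n+1$ qubits, both the $\ket{0}$- and $\ket{1}$-branches of the ancilla contribute $\sum_x f(x)\bra{x}C\ket{x}$ to $\Tr(M)$, so that $\frac{1}{2^{n+1}}\Tr(M) = W_f\Tr(C)$. Estimating this ordinary normalized trace via \cref{fact:dqc1-trace} then yields $W_f\Tr(C)$.

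The delicate point, and the step I expect to be the main obstacle, is reconciling the weight gadget with the $O(\log n)$ clean-ancilla budget. Writing $f(x)$ out exactly would require polynomially many bits, but since we only need inverse-polynomial additive accuracy in the final estimate, it suffices to compute $g(x)$ (equivalently $f(x)$) to $O(\log n)$ bits of precision, incurring only inverse-polynomial error in $f$ and hence in $W_f\Tr(C)$. This truncated computation fits within the $O(\log n)$ clean ancillas guaranteed by \cref{fact:dqc1-dqclogn}, and the controlled rotation $R_Y(2g(x))$ is then assembled as a product of rotations controlled on these $O(\log n)$ bits, with the workspace uncomputed afterward. Verifying that this precision/ancilla trade-off closes, i.e., that the accumulated truncation error stays inverse-polynomial while the entire weight gadget remains reversible on $O(\log n)$ clean qubits, is the crux of the argument.
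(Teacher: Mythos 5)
Your proof is correct, but it takes a genuinely different route from the paper's. The paper never folds $f$ into the traced unitary: it runs the Hadamard test on controlled-$C$ as usual while, in parallel, a gate acting on $O(\log n)$ clean ancillas (available since $\textbf{DQC}1 = \textbf{DQC}(\log n)$, \cref{fact:dqc1-dqclogn}) writes $f(x)$ into a register that is simply measured at the end; the classical post-processing then averages the product of the measured $f(x)$ with the $\pm 1$ Hadamard-test outcome, whose expectation is exactly $\Re \, W_f\Tr(C)$ (the standard modification gives the imaginary part). This is the same measure-and-correlate template the paper later reuses for \cref{thm:half-bqp-estimates-polytime-weighted-trace}. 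Hardness is dispatched exactly as you do it, by noting the weighted trace generalizes the plain trace. Your approach instead gives a many-one reduction to the canonical complete problem itself: augment $C$ so that its \emph{ordinary} normalized trace equals $W_f\Tr(C)$, then invoke \cref{fact:dqc1-trace}. What this buys is a cleaner reduction statement, plus the genuinely nice observation that the $R_Y$ gadget's target ancilla may be left maximally mixed, since both diagonal entries of $R_Y(2g(x))$ equal $\cos g(x) = f(x)$. What it costs is exactly the bookkeeping you flag as the crux: the angle-precision analysis and the reversible synthesis of the controlled rotation (e.g., as $\poly(n)$ multiply-controlled rotations, one per value of the $O(\log n)$-bit register holding the truncation of $f(x)$), and, in the $\{-1,1\}$ case, a point worth making explicit: the compute-$Z$-uncompute workspace must be \emph{clean}, because if those ancillas were maximally mixed, as in a naive application of \cref{fact:dqc1-trace} to the full $(n + O(\log n))$-qubit circuit, the $(-1)^{b \oplus g(x)}$ phases would cancel and the full normalized trace would be identically zero. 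You do route around this correctly by invoking \cref{fact:dqc1-dqclogn} so that the Hadamard test runs with clean gadget ancillas and estimates the trace over the mixed register only; the paper's approach sidesteps this issue (and all precision analysis) entirely, which is why its proof is shorter.
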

Note that for $f$-weighted trace estimation to be meaningful, 
$f$ must be nonzero on at least a polynomial fraction of all input strings.
Otherwise, the weighted trace will be exponentially close to $0$.
We can think of the quantity as a weighted trace over a sufficiently large subspace of our Hilbert space.
We now give a proof to illustrate the \textbf{DQC}1 procedure for estimating $W_f \Tr(C)$.
\begin{proof}
  Consider the following circuit which first computes $f$ and then implements the Hadamard test. Here the $f$ gate implements
  \[ f: |0^{O(\log n)}\rangle\ket{j} \mapsto \ket{f(j)}\ket{j}. \]
\begin{figure}[H]
  \centering
  \begin{quantikz}
      \ket{0} &&&& \gate{H} & \ctrl{2} & \gate{H} & \meter{} \\
      \ket{0^{O(\log n)}} & \qwbundle{O(\log n)} && \gate[2]{f} &&&& \meter{} \\
      \frac{I}{2^n} & \qwbundle{n} &&&& \gate{C} &&
  \end{quantikz}
\end{figure}
For each run of the circuit, 
the first register outputs $\ket{0}$ with probability $\frac{1}{2} + \frac{\Re(\bra{x}C\ket{x})}{2}$,
where $x$ is a random input string,
and the next $O(\log n)$ registers output $f(x)$ deterministically.
Thus, we can estimate the real component of the $f$-weighted trace.
Similarly, using the standard procedure for the Hadamard test, we can compute the imaginary component.
\end{proof}
Note that since the $f$-weighted normalized trace is a generalization of the normalized trace,
this problem is DQC1-complete.
We will see in \cref{thm:half-bqp-estimates-polytime-weighted-trace} that $\frac12$\textbf{BQP} can compute a stronger version of the weighted normalized trace where $f$ is any polynomial-time classical function. 

\subsection{Instantaneous quantum circuits (\textbf{IQP})}

Another relevant sub-universal model is Instantaneous Quantum Polynomial-time (\textbf{IQP}), defined in \cite{Shepherd_2009,shepherd2010quantum}. \textbf{IQP} is a restricted version of the quantum circuit model and has been studied as a candidate for quantum advantage experiments.
 
An \textbf{IQP} circuit is a quantum circuit $C$ consisting of gates which are diagonal in the Pauli $X$ basis.
Equivalently, via a change of basis, we can think of an \textbf{IQP} circuit as $C = H^{\otimes n}DH^{\otimes n}$, where $D$ is diagonal in the Pauli $Z$ basis. For our purposes, we will adopt this definition. The complexity class \textbf{IQP} as the set of problems solvable by an \textbf{IQP} circuit with polynomial-time pre and post-processing.
\begin{figure}[H]
    \centering
    \begin{quantikz}
        \ket{0} & \gate{H^{\otimes n}} & \gate{D} & \gate{H^{\otimes n}} & \meter{}
    \end{quantikz}
    \caption{An \textbf{IQP} circuit. $D$ is diagonal.}
\end{figure}
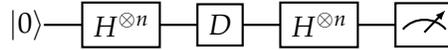
\textbf{IQP} is believed to be intermediate between \textbf{BPP} and \textbf{BQP}, and similarly to \textbf{DQC}1, \textbf{BPP} cannot simulate \textbf{IQP} unless the polynomial hierarchy collapses.
\begin{fact} [\cite{bremner2011classical}]
    \textbf{BPP} cannot sample from \textbf{IQP} circuits unless \textbf{PH} collapses to its third level.
\end{fact}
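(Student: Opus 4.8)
The plan is to reduce the statement to Aaronson's characterization of postselected quantum computation, following the strategy of Bremner, Jozsa, and Shepherd. The central object is \emph{postselected} \textbf{IQP}: an \textbf{IQP} circuit $C = H^{\otimes n} D H^{\otimes n}$ together with a designated set of output qubits on which we condition a fixed readout value. I would write the resulting class $\mathrm{postIQP}$, with the analogous postselected classical and universal-quantum classes denoted $\mathrm{postBPP}$ and $\mathrm{postBQP}$. The first and crucial step is to establish $\mathrm{postIQP} = \textbf{PP}$. The containment $\mathrm{postIQP} \subseteq \mathrm{postBQP}$ is immediate, since an \textbf{IQP} circuit is a special quantum circuit. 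For the reverse, I would argue that postselection removes the commutativity restriction: using measurement-based gadgets, conditioning on ancilla outcomes lets a $\mathrm{Pauli}$-$X$-diagonal (hence commuting) circuit effectively implement non-commuting gates, giving $\mathrm{postBQP} \subseteq \mathrm{postIQP}$. Combined with Aaronson's theorem $\mathrm{postBQP} = \textbf{PP}$, this yields $\mathrm{postIQP} = \textbf{PP}$.

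Next I would convert a hypothetical classical sampler into a postselected classical algorithm and then invoke the collapse. Suppose \textbf{BPP} could sample from the output distribution of every \textbf{IQP} circuit. To simulate a $\mathrm{postIQP}$ computation, we run the classical sampler and discard the runs whose postselection register disagrees with the target value, accepting according to the surviving runs; this is exactly a $\mathrm{postBPP}$ procedure, so $\mathrm{postIQP} \subseteq \mathrm{postBPP}$. It is known that $\mathrm{postBPP}$ (equivalently $\textbf{BPP}_{\mathrm{path}}$) lies in the third level of \textbf{PH}. Chaining these facts gives $\textbf{PP} = \mathrm{postIQP} \subseteq \mathrm{postBPP} \subseteq$ third level of \textbf{PH}. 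Finally, Toda's theorem $\textbf{PH} \subseteq \textbf{P}^{\textbf{PP}}$, together with $\textbf{PP}$ now sitting inside a fixed level of the hierarchy, forces \textbf{PH} to collapse to its third level by the standard level-counting argument, which is the desired conclusion.

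The main obstacle I expect is the first step, specifically the direction $\mathrm{postBQP} \subseteq \mathrm{postIQP}$: one must verify that postselection genuinely restores universality despite the severe restriction that all gates commute. The gadget construction — encoding each non-commuting gate of a universal circuit as a commuting interaction followed by a postselected measurement — is the technically delicate part, and care is needed to keep both the postselection success probabilities and the overall circuit size polynomially bounded. Everything downstream is routine: the definition of postselection handles the sampler-to-$\mathrm{postBPP}$ conversion, the placement of $\textbf{BPP}_{\mathrm{path}}$ is a cited result, and the final collapse is immediate from Toda's theorem. I note that an alternative route avoids postselection entirely, instead showing that the output probability $\Pr[0^n]$ encodes a $\#\textbf{P}$-hard quantity and applying Stockmeyer approximate counting to place its multiplicative approximation in $\textbf{FBPP}^{\textbf{NP}}$; I prefer the postselection route here because it delivers the sharp ``third level'' conclusion with less bookkeeping.
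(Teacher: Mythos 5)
The paper never proves this statement itself---it is quoted as a Fact with a citation to Bremner--Jozsa--Shepherd---and your proposal correctly reconstructs exactly the argument of that cited work: $\mathrm{postIQP} = \mathrm{postBQP} = {}$\textbf{PP} via postselected Hadamard-type gadgets (with Aaronson's $\mathrm{postBQP} = {}$\textbf{PP}), conversion of a hypothetical \textbf{BPP} sampler into a $\mathrm{postBPP}$ simulation of $\mathrm{postIQP}$, the Han--Hemaspaandra--Thierauf placement of $\mathrm{postBPP} = \textbf{BPP}_{\mathrm{path}}$ within the third level, and Toda's theorem plus the closure of relativized \textbf{BPP} under polynomial-time Turing reductions for the collapse. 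Your outline is sound and matches the source the paper relies on, including your correct identification of $\mathrm{postBQP} \subseteq \mathrm{postIQP}$ (the gadget step, where commuting gates plus postselection simulate mid-circuit Hadamards) as the only technically delicate point.
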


\subsection{Fourier analysis of boolean functions}
\label{subsec:prelim-fourier-analysis}
\subsubsection*{Fourier representation basics}
Let $f: \{-1,1\}^n \rightarrow \mathbb{R}$. The Fourier representation of $f$ is given by
\begin{align*}
    f(x) = \sum_{S \subset [n]} \wh{f}_S \cdot \chi_S(x).
\end{align*}
where
\begin{align*}
    \chi_S = \prod_{i \in S} x_i.
\end{align*}
and 
\begin{align*}
    \wh{f}_S = \langle f, \chi_S \rangle = \frac{1}{2^n} \sum_{x} f(x) \chi_S.
\end{align*}
\noindent We call $\wh{f}_S$ the Fourier coefficient of $f$ corresponding to the set $S$.

For our purposes it will be convenient to represent $f$ as a function on the boolean cube, $f: \{0,1\}^n \rightarrow \mathbb{R}$, and we will use this notation going forward. In this case, 
\begin{align*}
    \chi_s(x) = (-1)^{x \cdot s}
\end{align*}
where $s$ is the characteristic string representing some set $S \subset [n]$ That is, $s_i = 1$ if $i \in S$ and $s_i = 0$ otherwise. Here $\cdot$ denotes the dot product over $\mathbb{F}_2^n$, i.e., $x \cdot y = x_1 \cdot y_1 + \ldots + x_n \cdot y_n \mod 2$, for $x, y \in \mathbb{F}_2^n$.

In this notation,
\begin{align*}
    f(x) = \sum_{s \in \{0,1\}^n} \wh{f}_s (-1)^{x \cdot s}
\end{align*}
where 
\[\wh{f}_s = \frac{1}{2^n} \sum_{x \in \{0,1\}^n} f(x)(-1)^{x \cdot s}.\]
We will sometimes use $\wh{f}(s)$ to represent $\wh{f}_s$.

\subsubsection*{Learning Fourier coefficients}

We now give some important facts about learning Fourier coefficients, which will be useful for our analysis of \cref{thm:when-dqc1-simulates-half-bqp}.

\begin{fact}[See \cite{o2014analysis}]
    \label{fact:learning-fourier}
    Given query access to $f: \{0,1\}^n \rightarrow [-1,1]$, for any $s \in \{0,1\}^n$ we can estimate $\wh{f}_s$ within $\pm \varepsilon$ with probability $1 - \delta$ in time $poly(n,\frac{1}{\varepsilon^2})\log(\frac{1}{\delta})$.
\end{fact}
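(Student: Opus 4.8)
The plan is to recognize that the Fourier coefficient is exactly an expectation over the uniform distribution on the cube, and then estimate it by empirical averaging together with a standard concentration bound. Starting from the definition, rewrite
\[
\wh{f}_s = \frac{1}{2^n}\sum_{x \in \{0,1\}^n} f(x)(-1)^{x\cdot s} = \EE_{x}\!\left[f(x)(-1)^{x\cdot s}\right],
\]
where $x$ is drawn uniformly from $\{0,1\}^n$. This form immediately suggests a Monte Carlo estimator: draw $m$ independent uniform samples $x^{(1)},\dots,x^{(m)}$, query the oracle to obtain the values $f(x^{(i)})$, and form the empirical average
\[
\wt{Y} = \frac{1}{m}\sum_{i=1}^m f(x^{(i)})(-1)^{x^{(i)}\cdot s}.
\]
By linearity of expectation $\EE[\wt{Y}] = \wh{f}_s$, so the estimator is unbiased.

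The key quantitative observation is that each summand $Y_i := f(x^{(i)})(-1)^{x^{(i)}\cdot s}$ lies in the interval $[-1,1]$, since $f$ maps into $[-1,1]$ and the character $(-1)^{x\cdot s}$ is $\pm 1$. The $Y_i$ are independent and identically distributed and bounded, so Hoeffding's inequality (applied with interval width $2$) gives
\[
\Pr\!\left[\,\bigl|\wt{Y} - \wh{f}_s\bigr| > \varepsilon\,\right] \leq 2\exp\!\left(-\frac{m\varepsilon^2}{2}\right).
\]
Choosing $m = \lceil 2\varepsilon^{-2}\ln(2/\delta)\rceil$ makes the right-hand side at most $\delta$, which establishes both the claimed additive accuracy $\pm\varepsilon$ and the confidence $1-\delta$.

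Finally I would account for the running time. We use $m = O(\varepsilon^{-2}\log(1/\delta))$ oracle queries; each sample requires drawing $n$ uniform bits, one query to $f$, and computing the parity $x^{(i)}\cdot s \bmod 2$, which costs $O(n)$ classical time. Hence the total running time is $O\!\left(n\,\varepsilon^{-2}\log(1/\delta)\right) = \poly(n, 1/\varepsilon^2)\log(1/\delta)$, as claimed. There is no serious obstacle here — this is essentially a textbook empirical-estimation argument — and the only point requiring care is that the summands are confined to an interval of width $2$, which is precisely what lets a single concentration inequality simultaneously deliver the polynomial dependence on $1/\varepsilon^2$ and the logarithmic dependence on $1/\delta$.
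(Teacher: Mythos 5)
Your proof is correct and matches the standard argument behind this fact: the paper itself states it without proof, citing O'Donnell's textbook, and the textbook proof is exactly your route --- recognize $\wh{f}_s$ as $\EE_x[f(x)(-1)^{x\cdot s}]$, form the unbiased empirical average from $m$ uniform samples, and apply a Hoeffding/Chernoff bound to bounded $[-1,1]$ summands to get $m = O(\varepsilon^{-2}\log(1/\delta))$ queries. Your accounting of the Hoeffding constant (interval width $2$) and of the $O(n)$ per-sample cost is accurate, so nothing is missing.
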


\begin{fact}[Goldreich-Levin Theorem \cite{goldreich1989hard}]
    \label{fact:goldreich-levin}
    Given query access to $f: \{0,1\}^n \rightarrow [-1,1]$ and a parameter $0 < \tau \leq 1$, there is a $poly(n, \frac{1}{\tau})$ algorithm that with high probability outputs a list $L = \{s_1, s_2,\dots, s_l\}, |L| \leq \frac{4}{\tau^2}$ containing all $s_i$ such that 
    $|\wh{f}_s| \geq \tau$. (Here $f(x)$ is rational and expressible using $poly(n)$ bits).
\end{fact}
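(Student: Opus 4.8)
The plan is to use the standard prefix-bucketing divide-and-conquer strategy, which reduces the search for heavy Fourier coefficients to repeatedly estimating the total Fourier mass lying in a combinatorial ``bucket.'' Organize the coefficient labels $s \in \{0,1\}^n$ as leaves of a binary tree, where a node at depth $j$ is identified with a prefix $k \in \{0,1\}^j$ and corresponds to the bucket $B_k = \{s : s_1 \cdots s_j = k\}$. The central quantity is the bucket weight
\[
W_k = \sum_{s \in B_k} \wh{f}(s)^2 .
\]
The algorithm maintains a frontier of ``active'' prefixes, starting from the empty prefix at the root. At each level it splits every active prefix $k$ into its two children $k0$ and $k1$, estimates the two child weights, and discards any child whose estimated weight falls below a threshold. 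After $n$ levels the surviving prefixes are full strings, which we output (optionally verifying each candidate directly via \cref{fact:learning-fourier}).

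The key subroutine is estimating $W_k$ from queries to $f$. First I would establish the identity
\[
W_k = \EE_{x,y \in \{0,1\}^j,\, z \in \{0,1\}^{n-j}} \left[ (-1)^{k \cdot (x \oplus y)} \, f(xz) \, f(yz) \right],
\]
which follows by expanding both copies of $f$ in the Fourier basis and observing that the character sums over $x$, $y$, and $z$ collapse (using $\EE_x[(-1)^{x \cdot (a \oplus k)}] = [a = k]$) to precisely the terms whose first block equals $k$ and whose second blocks agree. Since each summand $(-1)^{k \cdot (x \oplus y)} f(xz) f(yz)$ lies in $[-1,1]$, averaging over $O(\tau^{-4} \log(1/\delta'))$ independent samples (two queries each) estimates $W_k$ within additive error $\tau^2/4$ with confidence $1 - \delta'$ by Hoeffding's inequality.

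Correctness and efficiency both rest on Parseval, $\sum_s \wh{f}(s)^2 = \EE_x[f(x)^2] \leq 1$. The pruning rule keeps a child iff its estimated weight is at least $\tau^2/2$; since the estimate is within $\tau^2/4$ of the truth, every surviving bucket has true weight at least $\tau^2/4$, and because the buckets at any fixed level are disjoint, at most $4/\tau^2$ of them survive. This bounds the frontier at every level and yields the claimed list length $|L| \leq 4/\tau^2$. Conversely, any $s$ with $|\wh{f}(s)| \geq \tau$ contributes $\wh{f}(s)^2 \geq \tau^2$ to every one of its prefix buckets, so each such bucket has estimated weight at least $3\tau^2/4 > \tau^2/2$ and is never pruned, guaranteeing $s$ appears in the final list. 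The total query count is $n \cdot O(\tau^{-2}) \cdot O(\tau^{-4} \log(1/\delta')) = \poly(n, 1/\tau)$, where a union bound over the $O(n \tau^{-2})$ weight estimations fixes $\delta'$ and contributes only logarithmic overhead.

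The main obstacle is getting the bucket-weight estimator right: deriving the expectation identity above, and—more delicately—choosing the pruning threshold and per-estimate error so that simultaneously (i) no heavy coefficient is ever dropped, (ii) the active frontier provably stays of size $O(\tau^{-2})$ despite estimation noise, and (iii) the union bound over all estimates keeps the global failure probability small. Once these parameters are balanced, the remaining steps are routine.
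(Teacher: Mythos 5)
Your proof is correct and is the standard prefix-bucketing (Kushilevitz--Mansour style) argument for the Goldreich--Levin theorem; the paper itself states this result as an imported fact with a citation and gives no proof, so there is nothing to diverge from. Your bucket-weight identity, the $\tau^2/4$-accuracy/$\tau^2/2$-threshold balancing, and the Parseval-based frontier bound of $4/\tau^2$ all check out and match the textbook treatment (e.g., O'Donnell) that the cited fact refers to.
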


\section{Four Definitions for $\frac{1}{2}$\textbf{BQP} }
\label{sec:three-definitions}

\subsection{$\frac{1}{2}$\textbf{BQP} as a model of computation on entangled states }

In \cite{aaronson2017computational}, $\frac12$\textbf{BQP} was defined to be the following model of quantum computing: We start with $2n$ qubits initially in the maximally entangled state:
\[\frac{1}{\sqrt{2^n}}\sum_{x \in \{0,1\}^{n}}\ket{x}\ket{x}.\]
We then apply a polynomial-size quantum circuit from a \textbf{BPP} uniform family of quantum circuits to the left-most $n$ qubits but not to the other $n$ qubits.
At the end of the computation,
we measure all $2n$ qubits and perform \textbf{BPP} post-processing. No intermediate measurements are allowed (or else we trivially recover \textbf{BQP}).
We call the first $n$ qubits the \textit{left} qubits and the last $n$ qubits the \textit{right} qubits.
Note that if we could only measure the left half of the state but not the right half, the model would be equivalent to computation over the maximally mixed state and would not be useful for performing computation. This means that the \textbf{BPP} post-processing plays a central role in the power of $\frac12$\textbf{BQP}. This is in contrast to \textbf{BQP} computations, where \textbf{BQP} algorithms can be used as subroutines within \textbf{BQP} circuits (\textbf{BQP}$^\textbf{BQP}$ = \textbf{BQP}). As a result of this observation, it is unlikely that $\frac12$\textbf{BQP}$^{\frac12\textbf{BQP}}$ = $\frac12$\textbf{BQP}.
 
As another important point, because we do not know $w$ until after running the $\frac12$\textbf{BQP} circuit, we are only afforded a small number of clean qubits in our computation. This is prohibitive for the $\frac12$\textbf{BQP} model because there are many \textbf{BQP} algorithms that use pure ancilla qubits to implement classical functions reversibly. Concretely, optimistically assuming the first $k$ qubits are $\ket{0}$ and verifying at the end yields $O(\log(n))$ pure ancilla qubits in the $\frac12$\textbf{BQP} model with probability $\frac{1}{poly(n)}$. It is not clear how to obtain more pure ancilla qubits.

$\frac{1}{2}$\textbf{BQP} is the following family of decision problems. 
\begin{definition}[Definition based on computation on entangled states]
    The $\frac12$\textbf{BQP} complexity class is the set of languages $L \subseteq \{0,1\}^*$ for which there exists a \textbf{BPP} uniform family of $\frac{1}{2}$\textbf{BQP} circuits based on computation on one-half of maximally entangled states as outlined above such that for any $x \in \{0,1\}^*$, if $x \in L$, the computation accepts with probability $\geq \frac{2}{3}$ and otherwise rejects with probability $\geq \frac23$.
\end{definition}
Going forward, we use $\frac{1}{2}$\textbf{BQP} to both mean the model of computation and the set of decision problems solvable by this model. 
\begin{figure}[H]
        \centering
        \begin{quantikz} 
            \lstick[2]{$\ket{\Phi}_{RL}$} & \qwbundle{n} & \gate{C}  & \meter{} \\
            \qw & \qwbundle{n} && \meter{}
        \end{quantikz}
        \caption{A $\frac12$\textbf{BQP} circuit, where $\ket{\Phi}_{RL} = \frac{1}{\sqrt{2^n}} \sum_{x \in \{0,1\}^n} \ket{x}\ket{x}$}
\end{figure}

\subsection{$\frac12$\textbf{BQP} as computation over random basis states}
\label{subsec:random-bqp}
We now give an equivalent formulation of $\frac12$\textbf{BQP}, which is framed in terms of quantum computation over random basis states.
\begin{definition}[Quantum Computation over Random Strings (\textbf{Random-BQP})]
    Suppose we have an $n$-qubit quantum register initialized to a uniformly random computational basis state, $\ket{w}$ for $w \in \{0,1\}^{n}$. We don't know $w$ ahead of time.
We are allowed to run a uniformly generated quantum circuit $C$ without intermediate measurements. At the end of the circuit, we sample from $C\ket{w}$ and we also learn $w$. \textbf{Random-BQP} is the class of problems solvable by the above model with polynomial time classical pre and post-processing.
\end{definition}
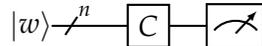
\begin{figure}[H]
    \centering
    \begin{quantikz}
        \ket{w} & \qwbundle{n} & \gate{C} & \meter{}
    \end{quantikz}
    \caption{An equivalent circuit for $\frac12$\textbf{BQP}. At the end of the computation we learn $w$.}
\end{figure}
\begin{claim}
    \textbf{Random-BQP} = $\frac12$\textbf{BQP}.
\end{claim}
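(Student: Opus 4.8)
The plan is to prove both inclusions through a single observation. Because the circuit $C$ is applied only to the left register and acts as the identity on the right register, the computational-basis measurement of the right register commutes with $(C \otimes I)$; this is just the principle of deferred measurement, and it lets me slide the right-register measurement from the end of a $\frac{1}{2}$\textbf{BQP} computation to its beginning without changing the joint distribution of outcomes.

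First I would record the elementary fact that the reduced state of $\ket{\Phi}_{RL} = \frac{1}{\sqrt{2^n}}\sum_{x}\ket{x}_L\ket{x}_R$ on the right register is the maximally mixed state $I/2^n$, so measuring the right register in the computational basis yields a uniformly random outcome $w \in \{0,1\}^n$, and conditioned on this outcome the left register collapses to $\ket{w}$. Combined with the commutation observation, this gives the key identity: the joint distribution of (left outcome, right outcome) produced by ``apply $C$ to the left, then measure both registers'' is identical to that produced by ``measure the right register to obtain a uniform $w$, collapse the left register to $\ket{w}$, apply $C$, then measure the left register.'' Writing $P(y,w) = \tfrac{1}{2^n}\abs{\bra{y}C\ket{w}}^2$ for both processes makes the equality transparent. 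The second process is precisely a \textbf{Random-BQP} computation, with the uniformly random seed $w$ revealed at the end.

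For $\frac{1}{2}$\textbf{BQP} $\subseteq$ \textbf{Random-BQP}, I would take an arbitrary $\frac{1}{2}$\textbf{BQP} machine and use the identity above to rewrite its output as that of a \textbf{Random-BQP} machine running the same circuit $C$ and feeding the identical pair $(y,w)$ into the same \textbf{BPP} post-processing. For the reverse inclusion, I would simulate a \textbf{Random-BQP} machine by preparing $\ket{\Phi}_{RL}$, applying $C$ to the left, and measuring both halves: the right outcome supplies the uniformly random $w$ (matching the random initial basis state) and the left outcome is a sample from $C\ket{w}$, while the \textbf{BPP} pre- and post-processing carry over unchanged. Since the acceptance probabilities agree exactly in both directions, the $\frac{2}{3}$ completeness and soundness thresholds are preserved.

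The argument is essentially routine, so the only real obstacle is bookkeeping. One must check that the two computational-basis measurements commute both with $(C\otimes I)$ and with each other, so that the order in which the two halves are measured is immaterial, and that the \textbf{BPP} pre-processing generating the circuit and the \textbf{BPP} post-processing act on identically distributed data in the two models. I expect no conceptual difficulty beyond making the correspondence between the random seed $w$ and the right-register measurement outcome fully explicit.
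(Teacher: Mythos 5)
Your proposal is correct and follows essentially the same route as the paper: both arguments observe that the computational-basis measurement of the right register commutes with $C \otimes I$, so it can be moved to the start of the computation, where it collapses the left register to a uniformly random $\ket{w}$, making the two models produce the identical joint distribution $\frac{1}{2^n}\abs{\bra{y}C\ket{w}}^2$. Your write-up is simply a more explicit version (stating both inclusions and the distributional identity) of the paper's terser argument.
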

\begin{proof}
    To see equivalence of the two models, consider the case where we measure the right half of the $\frac12$\textbf{BQP} state at the beginning of the computation but cannot modify the circuit $C$ based on this measurement outcome. The measurement collapses the left half into a basis state $\ket{w}$ according to a uniform distribution. We then run circuit $C$ on the left half and at the end obtain a sample from $C\ket{w}$, as well as the previously measured $w$.
    Since in the $\frac12$\textbf{BQP} circuit we do not apply any gates to the right half, this is the same as measuring both halves at the end.
\end{proof}
This formulation is often more convenient for proving results and in general we will prefer it over the bipartite formulation.
 
There is one more formulation of $\frac12$\textbf{BQP} that will be useful. Consider a standard \textbf{BQP} circuit with input $\ket{0^n}$, except the first layer of the circuit consists of the gate $X^w$ for some uniformly random $w \in \{0,1\}^n$, where
\[X^w = X^{w_1} \otimes X^{w_2} \otimes \dots \otimes X^{w_n}.\]
Again we learn the string $w$ at the end of the computation. It is easy to see that this is equivalent to a $\frac12$\textbf{BQP} computation over a random string. This formulation is useful because it sometimes allows us to take advantage of commutation relations by ``passing'' the $X$ gates forward through the circuit (see \cref{subsec:half-bqp-simulates-iqp} for an example).
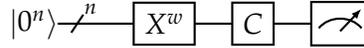
\begin{figure}[H]
    \centering
        \begin{quantikz}
            \ket{0^n} & \qwbundle{n} & \gate{X^w} & \gate{C} & \meter{}
        \end{quantikz}
    \caption{Another equivalent $\frac12$\textbf{BQP} circuit. At the end of the computation we learn $w$.}
\end{figure}


\subsection{$\frac12$\textbf{BQP} as a generalization of one-clean-qubit}

$\frac12$\textbf{BQP} can be thought of as a generalization of the \textbf{DQC}1 (one-clean-qubit) model defined in \cite{Knill_1998}. 
Intuitively, the \textbf{DQC}1 model can be thought of as running a quantum circuit on a random input string, but with one clean qubit used to extract information, or equivalently $\log(n)$ clean qubits. This is in contrast to the $\frac12$\textbf{BQP} model where we learn the entire random input string at the end of the circuit. It is not hard to see that $\frac12$\textbf{BQP} can simulate \textbf{DQC}1, and this was shown in \cite{aaronson2017computational}. For completeness, we give the proof here.
\begin{fact}
    \label{fact:halfbqp-simulates-dqc1}
    $\frac12$\textbf{BQP} can simulate \textbf{DQC}1 \cite{aaronson2017computational}.
\end{fact}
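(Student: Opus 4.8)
The plan is to simulate a $\textbf{DQC}1$ computation inside $\frac12$\textbf{BQP} by exploiting the ``optimistic assumption'' technique already sketched in the introduction: run the $\textbf{DQC}1$ circuit as if the clean qubit were genuinely in the $\ket{0}$ state, then use the fact that in $\frac12$\textbf{BQP} we learn the initial random string $w$ at the end of the computation to \emph{verify} that this assumption held, discarding (i.e., post-selecting away in post-processing) the runs where it did not. I will work in the \textbf{Random-BQP} formulation, since it makes the role of the learned string $w$ most transparent.

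\medskip
\noindent First I would recall that by \cref{fact:dqc1-dqclogn} it suffices to simulate $\textbf{DQC}(1)$, and moreover that a single clean qubit is enough for the canonical $\textbf{DQC}1$-complete task of normalized trace estimation via the Hadamard test (\cref{fact:dqc1-trace}). So let $C$ be a $\textbf{DQC}1$ circuit acting on one clean qubit together with $n$ maximally mixed qubits. I would then construct a $\frac12$\textbf{BQP} (equivalently \textbf{Random-BQP}) computation on $n+1$ qubits initialized to a uniformly random basis state $\ket{w}$ with $w = (w_0, w_1\ldots w_n) \in \{0,1\}^{n+1}$. The key observation is that the marginal of the last $n$ qubits of $\ket{w}$ over the uniform choice of $w$ is exactly the maximally mixed state $\frac{I}{2^n}$, matching the $\textbf{DQC}1$ input on those registers; the only mismatch is the first qubit, which should be $\ket{0}$ but is instead the random bit $\ket{w_0}$.

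\medskip
\noindent The heart of the simulation is to run $C$ on the register $\ket{w}$ exactly as the $\textbf{DQC}1$ machine would, measure, and then, in the \textbf{BPP} post-processing, inspect the learned value $w_0$. I would argue that conditioned on the event $w_0 = 0$ (which occurs with probability $\tfrac12$), the input to $C$ is distributed identically to the genuine $\textbf{DQC}1$ input $\ketbra{0}{0}\otimes \frac{I}{2^n}$, so the conditional output distribution is precisely that of the $\textbf{DQC}1$ computation. Since $\frac12$\textbf{BQP} learns $w_0$ at the end, it can discard every run with $w_0 = 1$ and keep only the faithful runs; standard amplification (running $\poly(n)$ independent copies, or equivalently reserving $O(\log n)$ would-be-clean qubits and keeping only the branch where all of them read $0$, which still succeeds with probability $\frac{1}{\poly(n)}$) boosts the overall success probability above $\tfrac23$, meeting the definitional requirement for $\frac12$\textbf{BQP}. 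The \textbf{BPP} pre- and post-processing allowed to $\textbf{DQC}1$ is absorbed into the \textbf{BPP} post-processing of the $\frac12$\textbf{BQP} machine.

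\medskip
\noindent \textbf{The main obstacle} I expect is not conceptual but bookkeeping: one must check that no \emph{intermediate} measurement is needed to implement the ``keep only $w_0=0$'' step, since $\frac12$\textbf{BQP} (like $\textbf{DQC}1$) forbids intermediate measurements, and that the conditioning on $w_0$ is performed entirely in classical post-processing \emph{after} both halves are measured — which is exactly what the \textbf{Random-BQP} formulation guarantees, because we learn $w$ only at the very end. A secondary point worth stating carefully is that restricting to the branch $w_0 = 0$ does not correlate with the outcome distribution on the remaining $n$ registers in any way that corrupts the simulation, i.e., the factorization $\ketbra{0}{0}\otimes \frac{I}{2^n}$ is genuinely recovered conditionally; this follows immediately from the product structure of the uniform distribution over $w$. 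Once these points are verified, the fact follows.
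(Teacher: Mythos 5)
Your proposal is correct and follows essentially the same approach as the paper: optimistically run the \textbf{DQC}1 circuit assuming the clean qubit is $\ket{0}$, use the string $w$ learned at the end to keep only the runs where this assumption held (each succeeding with probability $\tfrac12$), and amplify by running polynomially many copies in parallel. Your write-up simply spells out details the paper leaves implicit, such as the conditional factorization of the input state and the absorption of \textbf{DQC}1 pre/post-processing into the $\frac12$\textbf{BQP} post-processing.
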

\begin{proof}
    To simulate a \textbf{DQC}1 computation in $\frac12$\textbf{BQP} we optimistically assume that the first bit of $w$ is $0$, then run the \textbf{DQC}1 circuit. Since $w$ is uniformly random, the probability of successfully simulating the \textbf{DQC}1 computation in one attempt is $1/2$, which can be amplified by repeating the procedure in parallel. Concretely, if we run $k$ copies of the \textbf{DQC}1 circuit in parallel, the probability that none of them correctly simulate the \textbf{DQC}1 circuit is $\frac{1}{2^k}$.
\end{proof}

As an immediate corollary to \cref{fact:halfbqp-simulates-dqc1} and \cref{fact:bpp-cant-sample-dqc1}, \textbf{BPP} cannot sample from $\frac12$\textbf{BQP} circuits unless \textbf{PH} collapses. As an alternative proof, if we postselect on $w$ being the all zeros state, as well as postselect on the output qubit of our computation, we recover \textbf{PostBQP}. It is well known due to \cite{aaronson2005quantum} that this implies \textbf{BPP} cannot sample from $\frac12$\textbf{BQP} circuits unless \textbf{PH} collapses.

We have seen that $\frac12$\textbf{BQP} can simulate arbitrary \textbf{DQC}1 computations. Can we show that $\frac12$\textbf{BQP} is strictly stronger than \textbf{DQC}1? Recall that a complete problem for \textbf{DQC}1 is giving additive estimate to a weighted trace where the weight function has the restriction that it has a reversible classical circuit that does not use more than logarithmic number of ancilla bits. In this Section we first show that the $\frac12$\textbf{BQP} model can solve the weghted trace problem without this restriction. Then in section \cref{subsec:half-bqp-complete-problem} we give a complete problem for $\frac12$\textbf{BQP} which includes the general weighted trace problem as a special case. While this result does not prove a separation between $\frac12$\textbf{BQP} and \textbf{DQC}1, it is the first of many compelling pieces of evidence that $\frac12$\textbf{BQP} is a stronger model of computation than \textbf{DQC}1.
\begin{theorem}
  \label{thm:half-bqp-estimates-polytime-weighted-trace}
  Let $f: \{0,1\}^n \rightarrow [-1,1]$ be computable by any polynomial-time (\textbf{BPP}) classical algorithm.
  Then $\frac12$\textbf{BQP} can estimate the f-weighted normalized trace of a circuit C within inverse polynomial additive error.
\end{theorem}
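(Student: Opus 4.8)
The plan is to run the Hadamard test inside a $\frac12$\textbf{BQP} machine exactly as in the \textbf{DQC}1 procedure for weighted trace estimation, but to defer the evaluation of the weight function $f$ to the classical post-processing stage. This is the entire point of the theorem: whereas \textbf{DQC}1 must compute $f$ \emph{inside} the circuit (forcing $f$ to admit a reversible implementation with only $O(\log n)$ ancillas), a $\frac12$\textbf{BQP} machine learns the initial random string $w$ at the end of the computation, so it can evaluate $f(w)$ by an arbitrary \textbf{BPP} algorithm during post-processing, with no restriction on its reversible circuit complexity.

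Concretely, I would work in the \textbf{Random-BQP} formulation from \cref{subsec:random-bqp} on $n+1$ qubits, initialized to a uniformly random basis state $\ket{w}$ with $w = (w_0, x)$, where $w_0 \in \{0,1\}$ and $x \in \{0,1\}^n$. I reserve qubit $0$ as the Hadamard-test ancilla and the remaining $n$ qubits as the input register to $C$. The circuit applies $H$ to qubit $0$, then controlled-$C$ (control qubit $0$, target the input register, which is poly-size given $C$), then $H$ to qubit $0$, and finally measures everything, learning $w$ together with the ancilla measurement bit $b_0$. Since we learn $w$, we can check in post-processing whether the assumption $w_0 = 0$ held, exactly as in the clean-qubit trick used to prove \cref{fact:halfbqp-simulates-dqc1}; the post-circuit outcomes on the target register are simply ignored.

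For the analysis, condition on $w_0 = 0$, which occurs with probability $\tfrac12$ and leaves $x$ uniform on $\{0,1\}^n$ and independent of the ancilla. The standard Hadamard-test calculation gives $\Pr[b_0 = 0 \mid w_0 = 0, x] = \tfrac12 + \tfrac12\Re\bra{x}C\ket{x}$, hence $\EE[(-1)^{b_0}\mid w_0=0,x] = \Re\bra{x}C\ket{x}$. I would then define the classical post-processing estimator
\[
    Y = \begin{cases} 2\,f(x)\,(-1)^{b_0} & w_0 = 0,\\[2pt] 0 & w_0 = 1,\end{cases}
\]
where $f(x)$ is evaluated by the given \textbf{BPP} algorithm. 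A short computation shows $\EE[Y] = \tfrac{1}{2^n}\sum_x f(x)\Re\bra{x}C\ket{x} = \Re\big(W_f\Tr(C)\big)$, and since $|f|\le 1$ we have $|Y|\le 2$. Averaging $Y$ over $O(\varepsilon^{-2}\log\delta^{-1})$ independent runs and applying a Hoeffding bound yields an estimate of the real part within additive error $\varepsilon$ with probability $1-\delta$. The imaginary part is obtained identically after inserting the usual $S^\dagger$ phase gate on the ancilla before the second Hadamard, which replaces $\Re$ by $\Im$ throughout.

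The technical content here is light; the one point requiring care is that $f$ is computed classically and possibly only approximately by the \textbf{BPP} algorithm, so I would fold the inverse-polynomial evaluation error of $f$ (and its failure probability) into the overall additive-error budget and the success probability, which is routine. I do not anticipate a genuine obstacle: the essential move is simply recognizing that learning $w$ lets us apply an arbitrary polynomial-time weight in post-processing rather than synthesizing it reversibly inside the circuit as \textbf{DQC}1 must.
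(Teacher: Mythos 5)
Your proposal is correct and follows essentially the same route as the paper's proof: run the Hadamard test on the random input, learn $w$, evaluate $f(w)$ by the arbitrary \textbf{BPP} algorithm in post-processing, and apply a concentration bound, with the imaginary part handled by the standard phase-gate modification. Your version is in fact slightly more careful than the paper's on two minor points --- you make explicit that the clean ancilla is obtained by conditioning on $w_0 = 0$ (the paper's figure simply draws $\ket{0}$), and your $(-1)^{b_0}f(x)$ estimator directly targets $\Re\big(W_f\Tr(C)\big)$, whereas the paper estimates $\EE\big[f(w)\,\mathbb{1}[b_0{=}0]\big]$ and must separately estimate and subtract $\tfrac12\EE_w[f(w)]$ --- but these are cosmetic differences, not a different argument.
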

\begin{proof}
    Recall that the $f$-weighted normalized trace is given by 
    \[W_f\Tr(C) = \frac{1}{2^n}\sum_{x \in \{0,1\}^n} f(x) \bra{x}C\ket{x}.\]
    Consider the following $\frac12$\textbf{BQP} circuit which implements the Hadamard test.
    \begin{figure}[H]
        \centering
        \begin{quantikz}
            \ket{0} && \gate{H} & \ctrl{1} & \gate{H} & \meter{} \\
            \ket{w} & \qwbundle{n} && \gate{C} && 
        \end{quantikz}
    \end{figure}
    The probability of measuring $0$ on the first register for a fixed $w$ is 
    \begin{align}
        \Pr(0) = \frac{1}{2} + \frac{\Re(\bra{w}U\ket{w})}{2}.
    \end{align}
    Since we learn $w$, we can compute $f(w)$ in post-processing.
    Let $A$ be the expectation of $f$ over all runs of the circuit which output $0$ on the first register. Then
    \begin{align}
    \begin{split}
        A &= \frac{1}{2^n} \sum_{w \in \{0,1\}^n} f(w) \Bigl( \frac{1}{2} + \frac{\Re(\bra{w}U\ket{w})}{2} \Bigr) \\
        &= \frac{1}{2^{n+1}} \sum_{w \in \{0,1\}^n} f(w) + \frac{1}{2^n} \sum_{w \in \{0,1\}^n} f(w) \frac{\Re(\bra{w}U\ket{w})}{2} \\
        &= \frac{1}{2}\underset{w \in \{0,1\}^n}{\E}[f(w)] + \frac{1}{2^{n+1}} \sum_{w \in \{0,1\}^n} f(w) \Re(\bra{w}U\ket{w})
    \end{split}
    \end{align}
    By a standard Chernoff bound, we can estimate $2A$ and $\underset{w \in \{0,1\}^n}{\E}[f(w)]$ within inverse polynomial additive error. Taking their difference yields an additive approximation to the real component of $W_f\Tr(C)$. To estimate the imaginary component, we use the same procedure with the standard modification to the Hadamard test circuit.
\end{proof}
Thus, where the \textbf{DQC}1 weighted trace function must be efficiently computable by a reversible circuit with $O(\log n)$ ancilla bits, $\frac12$\textbf{BQP} can compute a weighted trace using any polynomial-time computable function.

\subsection{$\frac12$\textbf{BQP} as a ball permutational model}
\label{subsec:ball-perm}

The third definition of $\frac12$\textbf{BQP} is based on the power of quantum computations based on gates that allow many conserved quantities. Consider a gate set from an algebra $\mathcal{A}$ acting on a Hilbert space $\mathcal{H}$ that has a nontrivial commutant $\mathcal{E}$ (set of operators that commute with $\mathcal{A}$; nontrivial means it contains elements that are not proportional to $I$). We can think of $\mathcal{E}$ as noise, and we want to use gates from $\mathcal{A}$ to perform computations that are protected against operators in $\mathcal{E}$. Previous work \cite{lidar2014review} indicates that one can encode universal quantum computations in special ``decoherence-free'' subspaces (corresponding to irreducible subspaces of $\mathcal{A}$) that are immune from noise. In particular, under the action of $\mathcal{A}$ and $\mathcal{E}$ we can decompose the Hilbert space $\mathcal{H} \cong \bigoplus_{\lambda} V_\lambda \otimes W_\lambda$, where elements of $\mathcal{A}$ only affect the left tensor product $V_\lambda$ and every element of $\mathcal{E}$ affets the right side $W_\lambda$. Each $V_\lambda$ can be viewed as a decoherence-free subspace. If we prove that the action of the gate set generates a dense image inside $V_\lambda$ with large enough dimensionality, assuming details about programmability can be worked out, we can perform universal quantum computation if we start from and measure in the $V_\lambda$ basis. $\frac12$\textbf{BQP} models the situation where we start the computation from input states that are entangled across decoherence-free subspaces ($V_\lambda$) and subspaces that are subject to decoherence ($W_\lambda$), and we get to measure the content of both subspaces in the end. 

One concrete example of this abstract scenario was worked out in \cite{aaronson2017computational} in the context of studying the so-called quantum ball permutations. In the quantum ball permutational model, the Hilbert space is spanned by an orthonormal basis of $n!$ permutations of an $n$ element set (e.g., permutations of $n$ colored balls placed on a line). We denote this Hilbert space by $\mathbb{C} [S_n]$, where $S_n$ is the symmetric group. In this model, we start with the arrangement $|12\ldots n\rangle$ of particles. At each step of the computation, we get to pick a pair of particles and apply a permutational gate $\ket{ab} \mapsto c \ket{ab} + is \ket{ba}$. Each of these gates belongs to the algebra of left permutations $\mathcal{A}$, meaning applying a permutational element from left by permuting the physical location of particles. In the language of representation theory, this corresponds to the regular left representation of the symmetric group $L : S_n \rightarrow End (\mathbb{C} [S_n])$. For a permutation basis $\ket{\sigma} \in \mathbb{C} [S_n]$ and permutation $\tau \in S_n$, $L_\sigma : \ket{\sigma} \mapsto \ket{\tau \circ \sigma}$ (where $\tau \circ \sigma$ is the composition of $\tau$ with $\sigma$ from the left).  $\mathcal{E}$ corresponds to the set of right permutations, meaning permuting the labels of the particles (e.g., identifying $1$ with $2$ and $2$ with $1$ elsewhere). In other words, each element of $\mathcal{E}$ is a linear combination (with complex numbers) of right representations $R : S_n \rightarrow End (\mathbb{C} [S_n])$. For a permutation basis $\ket{\sigma} \in \mathbb{C} [S_n]$ and permutation $\tau \in S_n$, $R_\sigma : \ket{\sigma} \mapsto \ket{\sigma \circ \tau^{-1}}$ (where $\sigma \circ \tau^{-1}$ is the composition of $\tau$ with $\sigma$ from the right). Clearly, elements of $\mathcal{A}$ and $\mathcal{E}$ commute with each other. It turns out that they are furthermore the entire commutants of each other.

Under the action of $\mathcal{A}$ and $\mathcal{E}$ we can decompose $\mathbb{C} [S_n] \cong \bigoplus_\lambda V_\lambda \otimes W_\lambda$. It turns out that the dimensions of the left and right representaitons are equal, i.e., $\dim (V_\lambda) = \dim (W_\lambda) =: d_\lambda$. These dimensions satisfy $\sum_\lambda d_\lambda^2 = n!$. Furthermore, the basic (identity) input state $\ket{12\ldots n}$ is equally supported everywhere, i.e.,
$$
    \ket{12\ldots n} = \sum_\lambda \frac{d_\lambda}{\sqrt{n!}} \ket{\phi_\lambda}
$$
where $\ket{\phi_\lambda}$ is the maximally entangled state along $V_\lambda \otimes W_\lambda$. 

Now, suppose we start with $\ket{12\ldots n}$, apply a permutational circuit $C$, and measure the permutational basis. The amplitude corresponding to measuring the identity permutation is equal to 
\begin{align}
    \begin{split}
        \bra{12\ldots n} C \ket{12 \ldots n} &= \sum_\lambda \frac{d^2_\lambda}{{n!}} \bra{\phi_\lambda} c_\lambda \otimes I \ket{\phi_\lambda} \\
        &= \sum_\lambda \frac{d^2_\lambda}{{n!}} \frac{Tr(c_\lambda)}{d_\lambda}
    \end{split}
\end{align}
where $C \cong \bigoplus_\lambda c_\lambda \otimes I$. If we can evaluate each normalized trace $\frac{Tr(c_\lambda)}{d_\lambda}$ within additive error, we can estimate the amplitude. Using the observation that normalized trace is a complete problem for the one-clean-qubit model\textbf{DQC}1, \cite{aaronson2017computational} proved that individual amplitudes of $C$ in the permutational basis can be evaluated using \textbf{DQC}1. The main difficulty in establishing this result is that the permutational basis is not binary, and the \textbf{DQC}1 cannot afford more than logarithmic ancilla bits. A concise representation of permutational bases was used to establish this result. They furthermore showed that one can perform universal quantum computation if we start from specific irreducible representations.

In summary, the power of the ball-permuting model significantly depends on the input from which the model starts and is measured. If the model starts from permutational bases, we get a weak model where individual amplitudes can be estimated within \textbf{DQC}1. If we begin from other input states (corresponding to irreducible representations of the symmetric group), we get \textbf{BQP}. \cite{aaronson2017computational} asked the following question: ``Are there specific inputs which give us a model that is intermediate between \textbf{DQC}1 and \textbf{BQP}?'' For that purpose they introduced \textbf{SampTQP} which we call $\frac12$\textbf{BQP}. In other words, $\frac12$\textbf{BQP} is equivalent to the power of ball permutational model where we start from $\ket{\phi_\lambda}$ and measure the quantum state in the basis of $V_\lambda \otimes W_\lambda$. We conjecture that a similar scenario holds for any gate set model, which allows many conserved quantities.

\subsection{A complete problem for $\frac12$\textbf{BQP}}
\label{subsec:half-bqp-complete-problem}
We now give a complete problem for $\frac12$\textbf{BQP}, based on estimating the expectation value of the post-processing function under the distribution implied by the $\frac12$\textbf{BQP} quantum circuit. The motivation for defining this problem is to have a closed-form expression that captures the power of $\frac12$\textbf{BQP}, analogous to the completeness of the normalized trace for \textbf{DQC}1. As we will see, Fourier analysis on this closed-form expression will allow us to relate $\frac12$\textbf{BQP} and \textbf{DQC}1. In particular, \textbf{DQC}1 can simulate a $\frac12$\textbf{BQP} computation when the $\frac12$\textbf{BQP} classical post-processing has a learnable Fourier spectrum. Beyond this result, our hope is that having a closed-form expression may help with proving $\frac12$\textbf{BQP} lower bounds going forward.

\begin{theorem}
  \label{thm:half-bqp-complete-problem}
  Let $f: \{0,1\}^n \times \{0,1\}^n \rightarrow [-1,1]$ be computable by a polynomial-time Turing machine and
  let $C$ be a polynomial-size quantum circuit.
  Then it is promise-$\frac12$\textbf{BQP} complete to estimate the following quantity within inverse polynomial additive error:
  \[A = \frac{1}{2^n}\sum_{w,z \in \{0,1^n\}} \Big| \bra{z}C\ket{w} \Big|^2 f(w,z). \]
\end{theorem}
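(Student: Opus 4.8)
The plan is to prove both directions of the completeness claim separately, and the whole argument rests on one identity. Using the \textbf{Random-BQP} formulation of \cref{subsec:random-bqp}, write $w$ for a uniformly random $n$-bit input string and let $z$ be the outcome of measuring $C\ket{w}$ in the computational basis, so that $\Pr[z \mid w] = |\bra{z}C\ket{w}|^2$. Then
\[
A = \frac{1}{2^n}\sum_{w,z} \big|\bra{z}C\ket{w}\big|^2 f(w,z) = \EE_{w \text{ uniform}}\ \EE_{z \sim C\ket{w}}[f(w,z)] = \EE_{(w,z)}[f(w,z)].
\]
In other words, $A$ is exactly the expectation of the post-processing function under the natural $\frac12$\textbf{BQP} sampling distribution, which is what makes it a plausible complete problem.

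For containment, I would run $C$ on the random input $\ket{w}$, measure to obtain $z$, and learn $w$; since $f$ is polynomial-time computable and both $w,z$ are available in post-processing, I can evaluate $f(w,z)\in[-1,1]$. Repeating $\poly(n)$ times in parallel and averaging, a standard Chernoff bound estimates $A = \EE[f(w,z)]$ within inverse-polynomial additive error with high probability, exactly as in the Hadamard-test argument of \cref{thm:half-bqp-estimates-polytime-weighted-trace}. Comparing the estimate against the promised threshold then decides the estimation problem, placing it in promise-$\frac12$\textbf{BQP}.

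For hardness, I would take an arbitrary promise-$\frac12$\textbf{BQP} machine with circuit $C$ and \textbf{BPP} post-processing and encode its acceptance probability as a quantity of the form $A$. The one genuine subtlety is that the post-processing is randomized while the theorem demands a deterministic $f$. I would resolve this by enlarging the input register: append $m$ qubits so that the random input is $(w,r)$ with $w\in\{0,1\}^n$, $r\in\{0,1\}^m$, and take the circuit to be $C\otimes I_m$. Since the identity leaves $r$ untouched and a $\frac12$\textbf{BQP} machine learns its entire input, both $r$ and the measured outcome $z\sim C\ket{w}$ are available in post-processing, with $r$ uniformly random. Setting $f\big((w,r),(z,r')\big) = \mathrm{acc}(w,z,r)$, where $\mathrm{acc}\in\{0,1\}\subseteq[-1,1]$ is the deterministic computation performed by the post-processing on random string $r$, gives a deterministic polynomial-time $f$ on $\{0,1\}^{n+m}\times\{0,1\}^{n+m}$. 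Because $C\otimes I_m$ acts trivially on the $r$-register the transition amplitude factorizes, and one checks
\[
A = \EE_{r}\ \EE_{(w,z)}[\mathrm{acc}(w,z,r)] = \Pr[\text{the machine accepts}].
\]
Hence an additive estimate of $A$ finer than the promise gap decides the original problem.

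The main obstacle I anticipate is purely the bookkeeping in the hardness direction: correctly folding the \textbf{BPP} randomness into the extended input register while tracking the normalization $\tfrac{1}{2^{n+m}}$, and confirming that the promise gap of the original machine survives so that an inverse-polynomial additive estimate of $A$ is enough to distinguish the cases. Once the identity $A = \EE_{(w,z)}[f(w,z)]$ is established, the sampling-based upper bound and the reduction are both essentially mechanical.
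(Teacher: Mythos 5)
Your proposal is correct and follows essentially the same route as the paper's proof: containment via repeated sampling and a Chernoff bound, and hardness by taking $f$ to be the acceptance indicator of the post-processing, with the \textbf{BPP} randomness absorbed into extra input registers that the circuit leaves untouched. Your hardness direction merely spells out in more detail (the $C\otimes I_m$ construction and the factorization of the amplitude) what the paper states as a brief caveat.
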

Note that as with the weighted trace, for this sum to be meaningful,
$f$ must be non-zero on a polynomial fraction of input pairs.
\begin{proof}
    To see containment in $\frac12$\textbf{BQP}, notice that one run of a $\frac12$\textbf{BQP} circuit outputs $f(w,z)$ with probability
    \[\Pr(w,z) = \frac{1}{2^n} \Bigl| \bra{z}C\ket{w} \Bigr|^2.\]
    By a standard Chernoff bound, we can estimate $A$ within additive error using polynomially many samples from $C$ which can be run in parallel on our $\frac12$\textbf{BQP} machine.
     
    For the completeness proof, let $f$ be the classical post-processing function which is $1$ if the $\frac12$\textbf{BQP} circuit accepts and $0$ if $\frac12$\textbf{BQP} rejects. Then $A$ is exactly the acceptance probability of the $\frac12$\textbf{BQP} machine, and estimating $A$ within additive error is sufficient to determine whether to accept or reject.
     
    As a small caveat, note that $f$ is deterministic, while the post-processing may have randomness. This randomness can be simulated by including more registers in the $\frac12$\textbf{BQP} circuit and using this randomness as part of the input to $f$.
\end{proof}

\subsection{When can \textbf{DQC}1 simulate $\frac12$\textbf{BQP}?}
\label{subsec:when-dqc1-simulates-half-bqp}

So far we gave evidence that $\frac12$\textbf{BQP} is probably more powerful than \textbf{DQC}1. To understand the boundary between these two complexity classes, in this Section, we study the limitations one needs to impose on the definition of $\frac12$\textbf{BQP} in order to recover \textbf{DQC}1. In particular we show that there is a \textbf{DQC}1 simulation of $\frac12$\textbf{BQP} if the classical post-processing function has a learnable Fourier spectrum (see \cref{thm:when-dqc1-simulates-half-bqp} below).

 In order to achieve this we use a Fourier representation of $\frac12$\textbf{BQP} computation. As we have seen, the problem of estimating
\begin{align*}
    A = \frac{1}{2^n}\sum_{w,z} \Bigl|\bra{z}C\ket{w}\Bigr|^2 f(w,z)
\end{align*}
is $\frac12$\textbf{BQP}-complete. If we write $f$ using its Fourier decomposition, we obtain a weighted sum of normalized traces of unitaries, which can be estimated in \textbf{DQC}1. This suggests that we can study the boundary between \textbf{DQC}1 and $\frac12$\textbf{BQP} in terms of Fourier coefficients. The analysis is as follows:
\begin{align}
\begin{split}
    A &= \frac{1}{2^n}\sum_{w,z} \Bigl|\bra{z}C\ket{w}\Bigr|^2 f(w,z) \\
    &= \frac{1}{2^n}\sum_{w,z} \Tr(\ketbra{w}{w}C^\dagger \ketbra{z}{z} C) f(w,z) \\
    &= \frac{1}{2^n}\sum_{w,z} \Tr(\ketbra{w}{w}C^\dagger \ketbra{z}{z} C) \sum_{s,s'} \wh{f}_{s,s'} (-1)^{s\cdot w + s' \cdot z} \\
    &= \frac{1}{2^n}\sum_{s,s'} \wh{f}_{s,s'} \sum_{w,z} (-1)^{s\cdot w + s' \cdot z} \Tr(\ketbra{w}{w}C^\dagger \ketbra{z}{z} C) \\
    &= \frac{1}{2^n}\sum_{s,s'} \wh{f}_{s,s'} \Tr(\sum_{w,z} (-1)^{s\cdot w + s' \cdot z} \ketbra{w}{w}C^\dagger \ketbra{z}{z} C) \\ 
    &= \frac{1}{2^n}\sum_{s,s'} \wh{f}_{s,s'} \Tr(\Bigl(\sum_w (-1)^{s\cdot w}\ketbra{w}{w}\Bigr)C^\dagger \Bigl(\sum_z (-1)^{s' \cdot z} \ketbra{z}{z}\Bigr) C) \\
    &= \frac{1}{2^n}\sum_{s,s'} \wh{f}_{s,s'} \Tr(Z^sC^\dagger Z^{s'}C) \\
    &= \sum_{s,s'} \wh{f}_{s,s'} \frac{1}{2^n}\Tr(Z^sC^\dagger Z^{s'}C)
\end{split}
\end{align}
Thus it is $\frac12$\textbf{BQP}-complete to estimate 
\begin{align*}
    A = \sum_{s,s'} \wh{f}_{s,s'} \frac{1}{2^n}\Tr(Z^sC^\dagger Z^{s'}C).
\end{align*}
We call this the \textit{Fourier representation} of a $\frac12$\textbf{BQP} computation.
Note that for fixed $s,s'$, $\frac{1}{2^n}\Tr(Z^sC^\dagger Z^{s'}C)$ is just the normalized trace of some unitary $V_{s,s'} = Z^sC^\dagger Z^{s'}C$. This suggests a relationship between \textbf{DQC}1 and $\frac12$\textbf{BQP}. In particular,
as a corollary to the $\frac12$\textbf{BQP}-completeness of the Fourier representation, any $\frac12$\textbf{BQP} computation such that $f(w,z)$ is given by a sufficiently sparse representation in the Fourier basis is computable in \textbf{DQC}1. This is captured by the following theorem:
\begin{theorem}
    \label{thm:when-dqc1-simulates-half-bqp}
    Let $C$ be a polynomial-size quantum circuit. Let $f:\{0,1\}^n \times \{0,1\}^n \rightarrow [-1,1]$ be a polynomial-time computable function such that there exists $M \subset \{0,1\}^{2n}$ with $|M| \leq poly(n)$ satisfying
    \[\sum_{\substack{(s,s') \in \{0,1\}^n \times \{0,1\}^n \\ (s,s') \notin M}} |\wh{f}_{s,s'}| \leq \varepsilon\]
    where $\varepsilon = \frac{1}{poly(n)}$.
    Then the quantity
    \begin{align*}
        A = \frac{1}{2^n}\sum_{w,z \in \{0,1\}^n} \Bigl|\bra{z}C\ket{w}\Bigr|^2 f(w,z)
    \end{align*}
    can be estimated within $\pm (\varepsilon + \frac{1}{poly(n)})$ by a machine which can run polynomially many \textbf{DQC}1 circuits in parallel with \textbf{BPP} pre and post-processing.
\end{theorem}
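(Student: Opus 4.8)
The plan is to build the \textbf{DQC}1 simulation directly on top of the Fourier representation derived just above the theorem, namely
\[
A = \sum_{s,s'} \wh{f}_{s,s'}\,\frac{1}{2^n}\Tr(Z^s C^\dagger Z^{s'} C).
\]
Writing $V_{s,s'} = Z^s C^\dagger Z^{s'} C$, each summand is a normalized trace of a polynomial-size unitary, so $\big|\frac{1}{2^n}\Tr(V_{s,s'})\big| \le 1$, and by \cref{fact:dqc1-trace} each can be estimated to inverse-polynomial additive error by a \textbf{DQC}1 circuit. The obstacle is that the sum ranges over exponentially many pairs $(s,s')$; the sparsity hypothesis is exactly what lets us truncate it to a polynomial-size set that a \textbf{BPP} machine can find and then feed to polynomially many parallel \textbf{DQC}1 trace estimations.

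First I would truncate. Since $\big|\frac{1}{2^n}\Tr(V_{s,s'})\big| \le 1$ and $\sum_{(s,s')\notin M}|\wh{f}_{s,s'}| \le \varepsilon$, the quantity $A$ differs from $A_M := \sum_{(s,s')\in M} \wh{f}_{s,s'}\frac{1}{2^n}\Tr(V_{s,s'})$ by at most $\varepsilon$. However, the algorithm is only promised that such an $M$ exists, so it must \emph{find} the heavy Fourier mass itself. Treating $f$ as a polynomial-time computable function on $\{0,1\}^{2n}$ valued in $[-1,1]$, I would run the Goldreich--Levin algorithm (\cref{fact:goldreich-levin}) with threshold $\tau = 1/poly(n)$ to produce a list $L$ with $|L| \le 4/\tau^2 = poly(n)$ that contains every $(s,s')$ with $|\wh{f}_{s,s'}| \ge \tau$, and then estimate $\wh{f}_{s,s'}$ for each $(s,s')\in L$ to additive error $\varepsilon'$ via \cref{fact:learning-fourier}. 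In parallel, for each $(s,s')\in L$ I would run a \textbf{DQC}1 circuit estimating $\frac{1}{2^n}\Tr(V_{s,s'})$ to additive error $\varepsilon''$ (note $V_{s,s'}$ is polynomial size because $Z^s,Z^{s'}$ are products of single-qubit $Z$ gates, and cyclicity of the trace shows $\frac{1}{2^n}\Tr(V_{s,s'})$ is real, so a single Hadamard test suffices). The output is $\hat A = \sum_{(s,s')\in L} \hat f_{s,s'}\,\hat t_{s,s'}$, where $\hat f_{s,s'}$ and $\hat t_{s,s'}$ are the estimated coefficient and trace.

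The crux of the error analysis is showing that truncating to the learned list $L$, rather than the promised set $M$, costs only $1/poly(n)$. I would split the missing pairs $(s,s')\notin L$ according to whether they lie in $M$: those outside $M$ contribute total $\ell_1$-mass at most $\varepsilon$ by hypothesis, while those inside $M$ but outside $L$ each satisfy $|\wh{f}_{s,s'}| < \tau$ (otherwise Goldreich--Levin would have caught them) and number at most $|M| \le poly(n)$, so they contribute at most $|M|\,\tau$. Hence $\sum_{(s,s')\notin L}|\wh{f}_{s,s'}| \le \varepsilon + |M|\,\tau$, and choosing $\tau$ a sufficiently small inverse polynomial (permissible since $|M| = poly(n)$) makes $|M|\,\tau \le 1/poly(n)$. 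The remaining error is per-term estimation error: bounding $\big|\hat f_{s,s'}\hat t_{s,s'} - \wh{f}_{s,s'}\frac{1}{2^n}\Tr(V_{s,s'})\big| \le \varepsilon' + \varepsilon'' + \varepsilon'\varepsilon''$ using $|\wh{f}_{s,s'}|,|t_{s,s'}|\le 1$, and summing over the $|L| = poly(n)$ terms gives at most $|L|(\varepsilon'+\varepsilon''+\varepsilon'\varepsilon'')$, again $\le 1/poly(n)$ for suitable inverse-polynomial $\varepsilon',\varepsilon''$. Combining the truncation and estimation bounds yields $|\hat A - A| \le \varepsilon + 1/poly(n)$, as required. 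I expect the main obstacle to be precisely this bookkeeping step relating the learned list $L$ to the existentially-promised set $M$; everything else is an assembly of \cref{fact:dqc1-trace}, \cref{fact:goldreich-levin}, and \cref{fact:learning-fourier}.
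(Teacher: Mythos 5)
Your proposal is correct and takes essentially the same approach as the paper's proof: truncate the Fourier representation of $A$, use Goldreich--Levin plus direct coefficient estimation to learn the heavy Fourier coefficients, estimate each normalized trace $\frac{1}{2^n}\Tr(Z^s C^\dagger Z^{s'} C)$ with polynomially many parallel \textbf{DQC}1 circuits, and sum the products. If anything, your explicit bookkeeping relating the learned list $L$ to the promised set $M$ (splitting the untreated tail into pairs outside $M$, contributing at most $\varepsilon$, and pairs in $M$ missed by the threshold $\tau$, contributing at most $|M|\tau$) spells out the ``straightforward error analysis'' that the paper's proof leaves implicit.
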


\begin{proof}
    First, note that every $(s,s') \notin M$ must have $|\wh{f}_{s,s'}| \leq \varepsilon$. Thus, by \cref{fact:learning-fourier} and \cref{fact:goldreich-levin}, there is a polynomial-time algorithm to estimate all Fourier coefficients in $M$ within inverse polynomial additive error. Furthermore, we can use our \textbf{DQC}1 circuits to estimate 
    \[D_{s,s'} = \frac{1}{2^n}\Tr(Z^sC^\dagger Z^{s'}C)\]
    within inverse polynomial additive error for all $(s,s') \in M$.
    Let $\widetilde{f_{s,s'}}$ and $\widetilde{D_{s,s'}}$ be our estimates for the Fourier coefficients and normalized traces, respectively. Then a straightforward error analysis shows that computing the sum 
    \[\sum_{(s,s') \in M} \widetilde{{f}_{s,s'}} \widetilde{D_{s,s'}}\]
    yields an estimate of $A$ within $\pm (\varepsilon + \frac{1}{poly(n)})$.
\end{proof}
Intuitively, the theorem says that \textbf{DQC}1 can simulate a $\frac12$\textbf{BQP} computation when $f$ is sufficiently sparse in the Fourier basis.

\subsection{$\frac12$\textbf{BQP} can simulate \textbf{IQP}}
\label{subsec:half-bqp-simulates-iqp}

We now show that $\frac12$\textbf{BQP} can simulate \textbf{IQP}.
\begin{theorem}
    \label{thm:half-bqp-simulates-iqp}
    $\frac12$\textbf{BQP} can sample from the output distribution of \textbf{IQP} machines.
\end{theorem}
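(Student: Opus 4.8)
The plan is to use the third formulation of $\frac12$\textbf{BQP} from \cref{subsec:random-bqp}, in which we run a circuit on $X^w\ket{0^n}$ for a uniformly random, initially unknown $w \in \{0,1\}^n$ and learn $w$ at the end for use in \textbf{BPP} post-processing. First I would take the \textbf{IQP} circuit in the standard form $C = H^{\otimes n} D H^{\otimes n}$ with $D$ diagonal in the Pauli $Z$ basis, and run exactly this circuit as the $\frac12$\textbf{BQP} circuit. The state immediately before measurement is then $C X^w \ket{0^n}$, and the entire argument rests on commuting the random Pauli layer $X^w$ through $C$.

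The key step is the observation that an \textbf{IQP} circuit is diagonal in the Pauli $X$ basis and therefore commutes with every $X^w$. Concretely I would verify this by passing the $X$ layer through the leading Hadamards using $HXH = Z$ (so that $H^{\otimes n} X^w = Z^w H^{\otimes n}$), noting that $Z^w$ commutes with the diagonal gate $D$, and then passing $Z^w$ back out through the trailing Hadamards using $HZH = X$. This yields the exact identity $C X^w \ket{0^n} = X^w C \ket{0^n}$, i.e., the random bit-flip layer simply re-emerges as an $X^w$ layer acting on the ideal \textbf{IQP} output state $C\ket{0^n}$. This is the concrete realization of the ``passing the $X$ gates forward through the circuit'' idea recorded in \cref{subsec:random-bqp}.

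From here the simulation is immediate. Measuring $X^w C \ket{0^n}$ in the computational basis produces outcome $y$ with probability $\lvert \bra{y} X^w C \ket{0^n}\rvert^2 = \lvert \bra{y \oplus w} C \ket{0^n}\rvert^2$, so the measured string is exactly an ideal \textbf{IQP} sample masked by the error string $w$. Since we learn $w$ at the end, in \textbf{BPP} post-processing I would compute $y \oplus w$, which is distributed precisely as the result of measuring $C\ket{0^n}$; feeding this into the original \textbf{IQP} post-processing completes the sampling. I do not anticipate a genuine obstacle, since the argument is an exact operator identity rather than an approximation; the only thing to be careful about is bookkeeping, namely confirming that the correction $y \mapsto y \oplus w$ is classical \textbf{BPP} work and that the resulting sample is exact, not merely close in distribution. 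Finally, the same argument generalizes verbatim to any circuit diagonal in a basis obtained from the computational basis by an efficiently implementable change-of-basis circuit $B$: one replaces the outer Hadamards by $B$ and tracks how $X^w$ conjugates through $B$, recovering a correctable Pauli mask on the output.
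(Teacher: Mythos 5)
Your proposal is correct and follows essentially the same route as the paper's proof: commuting the random $X^w$ layer through $C = H^{\otimes n} D H^{\otimes n}$ via $HXH=Z$, the diagonality of $D$, and $HZH=X$, then correcting the output by $y \mapsto y \oplus w$ in classical post-processing. Even your closing remark on generalizing to circuits diagonal in an efficiently computable basis mirrors the paper's own observation.
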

\begin{proof}
    Let $C = H^{\otimes n}DH^{\otimes n}$ be an \textbf{IQP} circuit. Note that $D$ is diagonal in the Pauli $Z$ basis.
    Now consider the following $\frac12$\textbf{BQP} circuit:
    \begin{figure}[H]
        \centering
        \begin{quantikz}
            \ket{w} & \qwbundle{n} & \gate{H^{\otimes n}} & \gate{D} & \gate{H^{\otimes n}} & \meter{}
        \end{quantikz}
    \end{figure}
    Equivalently, we can consider
    \begin{figure}[H]
        \centering
        \begin{quantikz}
            \ket{0^n} & \qwbundle{n} & \gate{X^w} & \gate{H^{\otimes n}} & \gate{D} & \gate{H^{\otimes n}} & \meter{}
        \end{quantikz}
    \end{figure}
    Taking advantage of commutation relations yields the following sequence of equivalent circuits:
    \begin{figure}[H]
        \centering
        \;\;
        \begin{quantikz}
            \ket{0^n} & \qwbundle{n} & \gate{X^w} & \gate{H^{\otimes n}} & \gate{D} & \gate{H^{\otimes n}} & \meter{}
        \end{quantikz} \\
        =
        \begin{quantikz}
            \ket{0^n} & \qwbundle{n} & \gate{H^{\otimes n}} & \gate{Z^w} & \gate{D} & \gate{H^{\otimes n}} & \meter{}
        \end{quantikz} \\
        =
        \begin{quantikz}
            \ket{0^n} & \qwbundle{n} & \gate{H^{\otimes n}} & \gate{D} & \gate{Z^w} & \gate{H^{\otimes n}} & \meter{}
        \end{quantikz} \\
        =
        \begin{quantikz}
            \ket{0^n} & \qwbundle{n} & \gate{H^{\otimes n}} & \gate{D} & \gate{H^{\otimes n}} & \gate{X^w} & \meter{}
        \end{quantikz}
    \end{figure}
    which is equivalent to the original \textbf{IQP} computation except the output is shifted by $w$. Thus, to simulate \textbf{IQP}, we sample from this circuit and shift the measured string by $w$.
\end{proof}
Note that we could have simply used the fact that an \textbf{IQP} circuit $C$ is diagonal in the Pauli $X$ basis to commute $X^w$ to the end. However, we think the given proof is a nice demonstration of how ``passing'' $X^w$ through the circuit can be useful.
 
Having seen that $\frac12$\textbf{BQP} can simulate both \textbf{DQC}1 and \textbf{IQP}, it is clear that $\frac12$\textbf{BQP} is at least as powerful as these models. However, it is important to ask whether this inclusion is strict. While this question remains unresolved, we will see that $\frac12$\textbf{BQP} solves many standard oracle problems that are not believed to be in \textbf{DQC}1 or \textbf{IQP}.


\section{Separating $\frac12$\textbf{BQP} and \textbf{BPP}}
\label{sec:bqp-vs-bpp}

In this section we show that $\frac12$\textbf{BQP} solves many of the usual oracle problems used to demonstrate quantum advantage over classical computation. In particular, $\frac12$\textbf{BQP} can solve the Deutsch-Jozsa, Bernstein-Vazirani, Simon's, and Period Finding problems (in the black box setting). We only give proofs for Simon's problem and Period Finding, since the proofs for Deutsch-Jozsa and Bernstein-Vazirani are essentially identical to the proof for Simon's problem. We end this Section by showing that \textsc{Order Finding} and \textsc{Factoring} can be implemented in $\frac12$\textbf{BQP}.

Since all these problems (besides Deutsch-Jozsa) are not believed to be in \textbf{DQC}1 or \textbf{IQP}, these results strongly suggest that $\frac12$\textbf{BQP} captures a strictly stronger complexity class. Furthermore, since these separations provide some of the strongest known separations between \textbf{BQP} and \textbf{BPP}, they suggest that in some sense, $\frac12$\textbf{BQP}, in terms of algorithmic capabilities, might not be very far from \textbf{BQP}. However, as we show in \cref{subsec:bqp-separation}, there is an oracle separation between $\frac12$\textbf{BQP} and \textbf{BQP}, showing that $\frac12$\textbf{BQP} suffers from some of the same limitations as \textbf{DQC}1. We furthermore show that $\frac{1}{2}$\textbf{BQP} cannot obtain the quadratic speed up given by Grover's algorithm (See \cref{thm:grover}). 
All together, these results evidence the status of $\frac12$\textbf{BQP} as a nontrivial intermediate model of quantum computation.

\subsection{Fourier Sampling}
\label{subsubsec:fourier-sampling}
\textsc{Fourier Sampling} is defined in \cite{aaronson2014forrelation}. We now show that it gives a sampling query complexity separation between \textbf{BPP} and $\frac12$\textbf{BQP}.

\begin{definition}[\textsc{Fourier Sampling,aaronson2014forrelation}]
    Given oracle access to a boolean function $f: \{0,1\}^n \rightarrow \{-1,1\}$, sample from a distribution 
    $D$ over $\{0,1\}^n$ such that $\norm{D - D_f} \leq \varepsilon$ where $D_f$ is the distribution defined by 
    \begin{align*}
        \Pr[y] = \hat{f}(y)^2 = \Biggl(\frac{1}{2^n} \sum_{x \in \{0,1\}^n} (-1)^{x\cdot y}f(x)\Biggr)^2
    \end{align*}
\end{definition}
\textsc{Fourier Sampling} is solved for $\varepsilon = 0$ in \textbf{BQP} using $1$ query by first applying $H^{\otimes n}$, then the function oracle, then $H^{\otimes n}$ again and measuring. 
Furthermore, there is a theorem due to \cite{aaronson2014forrelation} that for small constant $\varepsilon$, the randomized classical query complexity of \textsc{Fourier Sampling} is $\Omega(2^n/n)$.
We now give the result:

\begin{theorem}
    \textsc{Fourier Sampling} is solvable in $\frac12$\textbf{BQP} using $1$ query.
    \label{thm:fourier-sampling}
\end{theorem}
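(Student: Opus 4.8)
The plan is to reuse the commutation-pushing strategy from the proof that $\frac12$\textbf{BQP} simulates \textbf{IQP} (\cref{thm:half-bqp-simulates-iqp}), since the standard \textbf{BQP} algorithm for \textsc{Fourier Sampling} has exactly the sandwiched structure $H^{\otimes n} O_f H^{\otimes n}$ that made the \textbf{IQP} argument work. First I would set $C = H^{\otimes n} O_f H^{\otimes n}$, the one-query \textbf{BQP} circuit that, applied to $\ket{0^n}$ and measured in the computational basis, outputs $y$ with probability $\wh{f}(y)^2$. I would then invoke the third (the $X^w$) formulation of $\frac12$\textbf{BQP} from \cref{subsec:random-bqp}, so that the $\frac12$\textbf{BQP} machine runs the circuit $H^{\otimes n} O_f H^{\otimes n} X^w \ket{0^n}$ for a uniformly random, initially unknown $w$, learning $w$ only at the end.

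The key steps are the commutation relations that push $X^w$ from the front of the circuit all the way to the output. Using $HX = ZH$ on the first Hadamard layer turns $X^w$ into $Z^w$ sitting just after $H^{\otimes n}$, giving $H^{\otimes n} O_f Z^w H^{\otimes n}$. The crucial observation is that the phase oracle $O_f$ is diagonal in the Pauli $Z$ basis (it acts as $\ket{x} \mapsto (-1)^{f(x)}\ket{x}$, or $\ket{x}\mapsto f(x)\ket{x}$ for the $\pm 1$ convention), and $Z^w$ is also diagonal, so the two commute and we may write $H^{\otimes n} Z^w O_f H^{\otimes n}$. Finally, applying $HZ = XH$ on the second Hadamard layer converts $Z^w$ back into an $X^w$ now placed after the whole circuit, yielding $X^w \, (H^{\otimes n} O_f H^{\otimes n}) \ket{0^n}$.

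Thus the measured string is exactly $y \oplus w$, where $y$ is distributed according to $\wh{f}(y)^2$. Because the $\frac12$\textbf{BQP} model reveals $w$ at the end of the computation, the \textbf{BPP} post-processing simply XORs the measured outcome with $w$ to recover a sample $y$ from the target distribution, and only a single query to $O_f$ is used. I expect the only real content to be the middle step — recognizing that the phase oracle, being diagonal, commutes with $Z^w$ — which is precisely the same mechanism as in the \textbf{IQP} simulation; everything else is bookkeeping of Pauli conjugation under Hadamards. I would remark that this mirrors the earlier observation that standard \textbf{BQP} algorithms, viewed in $\frac12$\textbf{BQP}, amount to circuits with Pauli-$X$ errors that can be corrected in post-processing once the error pattern $w$ is known.
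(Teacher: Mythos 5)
Your proof is correct, but it takes a different route from the one the paper actually writes down. The paper's proof works directly with the random-basis-state formulation: it expands the state $H^{\otimes n} U_f H^{\otimes n}\ket{w}$ in the computational basis, reads off $\Pr[y] = \wh{f}(y \oplus w)^2$, and then has the post-processing output $y' = y \oplus w$. You instead use the $X^w$ formulation and conjugate $X^w$ through the circuit via $HX = ZH$, the diagonality of the phase oracle (so $O_f$ commutes with $Z^w$), and $HZ = XH$, concluding that the circuit equals $X^w\, H^{\otimes n} O_f H^{\otimes n}$ applied to $\ket{0^n}$; the same XOR correction then follows. The two arguments are equivalent in content, and in fact the paper flags your route explicitly in the remark following its proof: ``Note that we could have used the proof technique from \cref{thm:half-bqp-simulates-iqp} because $U_f$ is diagonal in the Pauli $Z$ basis.'' What your route buys is structural transparency: it exhibits \textsc{Fourier Sampling} as a special case of the \textbf{IQP} simulation, making clear that the only property being used is diagonality of the oracle. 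What the paper's direct amplitude computation buys is an explicit formula for the conditional output distribution given $w$, which then serves as the template for the later analyses (Simon's problem, \textsc{Period Finding}) where the oracle is an index oracle rather than a phase oracle, so a pure Pauli-commutation argument of your kind is no longer available and one must track how $w$ enters both registers.
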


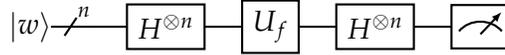
\begin{figure}[H]
        \centering
        \begin{quantikz} 
            \ket{w} & \qwbundle{n} & \gate{H^{\otimes n}} & \gate{U_f} & \gate{H^{\otimes n}} & \meter{}
        \end{quantikz}
        \caption{A $\frac12$\textbf{BQP} circuit for \textsc{Fourier Sampling}}
        \label{fig:fourier-sampling}
\end{figure}
\begin{proof}
    The circuit is given in \cref{fig:fourier-sampling}. After measuring a label $y$ and learning $w$, the algorithm outputs $y \oplus w$.
    
    The state at the end of the computation is given by
    \begin{align}
    \begin{split}
        & \frac{1}{2^n}\sum_{x,y \in \{0,1\}^n}(-1)^{w\cdot x + x\cdot y} f(x) \ket{y} \\
        =& \frac{1}{2^n}\sum_{y \in \{0,1\}^n}\sum_{x \in \{0,1\}^n}(-1)^{x \cdot (w \oplus y)} f(x) \ket{y} 
    \end{split}
    \end{align}
    and the probability of measuring a label $y$ is
    \begin{align}
    \begin{split}
        \Pr[y] &= \Biggl(\frac{1}{2^n}\sum_{x \in \{0,1\}^n}(-1)^{x \cdot (w \oplus y)} f(x) \Biggr)^2 \\
        &= \wh{f}(y \oplus w)^2.
    \end{split}
    \end{align}
    Now let $y' = y \oplus w$ be the output of the algorithm after sampling $y$ and learning $w$. Then the algorithm outputs $y'$ with probability $\wh{f}(y')^2$. Thus, \textsc{Fourier Sampling} is solvable in $\frac12$\textbf{BQP} using $1$ quantum query.
\end{proof}
Note that we could have used the proof technique from \cref{thm:half-bqp-simulates-iqp} because $U_f$ is diagonal in the Pauli $Z$ basis.

\subsection{Simon's Problem}
\label{subsubsec:simons-problem}
Simon's problem was defined in \cite{simon1997power} to give an oracle separation between \textbf{BQP} and \textbf{BPP}. In this section we show that Simon's problem is solvable by $\frac12$\textbf{BQP}, and therefore gives an oracle separation between $\frac12$\textbf{BQP} and \textbf{BPP}. Since \textbf{DQC}1 is not known to contain Simon's problem, this result also provides evidence that $\frac12$\textbf{BQP} is strictly above \textbf{DQC}1.
\begin{definition}[Simon's problem]
    Given oracle access to a function $f:\{0,1\}^n \rightarrow \{0,1\}^n$ with the promise that $\exists s \in \{0,1\}^n$ such that $\forall x,y \in \{0,1\}^n$, \[f(x) = f(y) \iff x \oplus y \in \{0^n,s\},\]
    find $s$.
\end{definition}
\begin{theorem}
    \label{thm:simons-problem}
    Simon's problem is solvable in $\frac12$\textbf{BQP}
\end{theorem}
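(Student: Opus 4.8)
The plan is to adapt the standard \textbf{BQP} algorithm for Simon's problem to the $\frac12$\textbf{BQP} model, exactly in the spirit of \cref{thm:fourier-sampling} and \cref{thm:half-bqp-simulates-iqp}. Recall the usual Simon circuit: starting from $\ket{0^n}\ket{0^n}$, apply $H^{\otimes n}$ to the first register, query the index oracle $O_f$ to get $\frac{1}{\sqrt{2^n}}\sum_x \ket{x}\ket{f(x)}$, then apply $H^{\otimes n}$ to the first register again and measure it. The measured string $y$ is uniformly distributed over the orthogonal complement $\{y : y \cdot s = 0\}$, and collecting $O(n)$ such samples lets us solve for $s$ by Gaussian elimination over $\mathbb{F}_2$. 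In the $\frac12$\textbf{BQP} model the first register is not $\ket{0^n}$ but a uniformly random $\ket{w}$, so I would use the third formulation of $\frac12$\textbf{BQP} and write the input as $X^w \ket{0^n}$ in the first register (the ancilla register for $f$'s output can be cleanly initialized to $\ket{0^n}$ by optimistically assuming those $O(\log n)$-or-more bits of $w$ are zero, or more carefully, by noting we learn $w$ and can condition on the relevant bits being $0$; I would address this bookkeeping explicitly).

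\begin{proof}[Proof sketch]
First I would run the $\frac12$\textbf{BQP} circuit that applies $H^{\otimes n}$, the Simon oracle $O_f$, and $H^{\otimes n}$ to the query register whose input is $\ket{w}$, learning $w$ at the end. The key computation is to track the effect of the random $X^w$ on the query register. Passing $X^w$ through the first layer of Hadamards converts it to $Z^w$ by the relation $H^{\otimes n} X^w = Z^w H^{\otimes n}$, so the circuit is equivalent to one in which, after the clean Simon algorithm applied to $\ket{0^n}$, a phase $Z^w$ acts in the Fourier domain. Concretely, the state just before the final Hadamards in the clean algorithm is $\frac{1}{\sqrt{2^n}}\sum_x \ket{x}\ket{f(x)}$, and the extra $Z^w$ contributes a relative phase $(-1)^{w\cdot x}$. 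After applying the final $H^{\otimes n}$ and measuring, a direct amplitude calculation analogous to the one in \cref{thm:fourier-sampling} shows that the measured string $y$ satisfies $y \oplus w \in \{z : z \cdot s = 0\}$; equivalently, the \emph{shifted} string $y' = y \oplus w$ is distributed exactly as the output of the clean Simon algorithm, i.e. uniformly over $s^\perp = \{z : z\cdot s = 0\}$.

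Since we learn $w$, the post-processing computes $y' = y \oplus w$ for each run. After $O(n)$ repetitions in parallel we obtain $O(n)$ independent samples $y'_1,\dots,y'_k$ drawn uniformly from $s^\perp$, and with high probability these span the $(n-1)$-dimensional space $s^\perp$; standard Gaussian elimination over $\mathbb{F}_2$ then recovers the unique nonzero $s$ orthogonal to all of them. This whole linear-algebra step is classical \textbf{BPP} post-processing, which $\frac12$\textbf{BQP} is permitted. The main obstacle, and the one point worth handling with care, is the treatment of the output register for the index oracle $O_f$: in the honest $\frac12$\textbf{BQP} model \emph{every} input qubit starts maximally mixed, so I would argue that the oracle's $m$-bit output register can be accommodated either by absorbing the random initialization $\ket{w_{\mathrm{out}}}$ (the XOR with $f(x)$ simply relabels the garbage we never measure from the first register, leaving the interference in the query register untouched), or by the standard device of optimistically assuming those bits are $0$ and verifying via $w$ at the end, succeeding with probability $2^{-m}$ amplified by parallel repetition. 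Either way the distribution over $y'$ is unaffected, since the output register is disentangled from the query register after the final Hadamards and never participates in the interference that produces the uniform distribution over $s^\perp$.
\end{proof}
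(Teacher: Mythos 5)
Your proposal follows essentially the same route as the paper's proof: run the standard Simon circuit on the random input $\ket{w_1}\ket{w_2}$, show that the measured string $y$ satisfies $s\cdot(y\oplus w_1)=0$, shift by the learned $w_1$ in post-processing, and finish with Gaussian elimination over $\mathbb{F}_2$; your first treatment of the oracle's output register (absorbing the random $\ket{w_2}$, since XOR-ing all of $f$'s values by a fixed string preserves the collision structure and hence the interference) is exactly what the paper does. One caution: your fallback alternative --- optimistically assuming the $m$ output bits are zero, ``succeeding with probability $2^{-m}$ amplified by parallel repetition'' --- does not work here, since $m=n$ makes that probability exponentially small and unamplifiable by polynomially many repetitions; fortunately your primary argument never needs it.
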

\begin{proof}
    The algorithm is the same as Simon's algorithm, but we apply a bitwise xor to the output in classical post-processing as we did for Fourier Sampling. We start with a uniformly random computational basis state $\ket{w}$ and for convenience we denote $\ket{w} = \ket{w_1}\ket{w_2}$.
    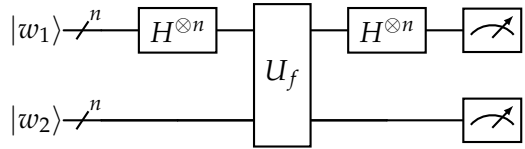
\begin{figure}[H]
        \centering
        \begin{quantikz}
            \ket{w_1} & \qwbundle{n} & \gate{H^{\otimes n}} & \gate[2]{U_f} & \gate{H^{\otimes n}} & \meter{} \\
            \ket{w_2} & \qwbundle{n} & \qw & \qw & \qw & \meter{}
        \end{quantikz}
        \caption{a circuit for Simon's algorithm}
    \end{figure}
    The analysis is as follows:
    \begin{align}
    \begin{split}
        \ket{w_1}\ket{w_2} \xmapsto{H^{\otimes n}}& \frac{1}{\sqrt{2^n}}\sum_x (-1)^{w_1 \cdot x} \ket{x}\ket{w_2} \\
        \xmapsto{U_f}& \frac{1}{\sqrt{2^n}}\sum_x (-1)^{w_1 \cdot x} \ket{x}\ket{f(x) \oplus w_2}
        \\
        \xmapsto{H^{\otimes n}}& \frac{1}{2^n}\sum_x \sum_y (-1)^{w_1 \cdot x}(-1)^{x \cdot y} \ket{y}\ket{f(x) \oplus w_2} \\
        =& \frac{1}{2^n}\sum_y \sum_x (-1)^{x \cdot (w_1 \oplus y)} \ket{y}\ket{f(x) \oplus w_2}
    \end{split}
    \end{align}
    Now the probability of measuring a label $y$ is given by
    \begin{align}
        \Pr[y] = \frac{1}{2^{2n}} \Biggl|\sum_x (-1)^{x \cdot (w_1 \oplus y)}\ket{f(x) \oplus w_2}\Biggr|^2
    \end{align}
    For $z \in \Ima(f)$ let $a_z \neq b_z$ be the inputs such that $f(a_z) = f(b_z) = z$. Note that $a_z \oplus b_z = s$.
    Now
    \begin{align}
    \begin{split}
        \Pr[y] &= \frac{1}{2^{2n}} \Biggl|\sum_{z \in \Ima(f)} \Bigl[ (-1)^{a_z\cdot(w_1 \oplus y)} + (-1)^{b_z\cdot(w_1 \oplus y)} \Bigr] \ket{z \oplus w_2} \Biggr|^2 \\
        &= \frac{1}{2^{2n}} \sum_{z \in \Ima(f)} \Bigl[ (-1)^{a_z\cdot(w_1 \oplus y)} + (-1)^{b_z\cdot(w_1 \oplus y)} \Bigr]^2 \\
        &= \frac{1}{2^{2n}} \sum_{z \in \Ima(f)} \Bigl[ (-1)^{a_z\cdot(w_1 \oplus y)} + (-1)^{(a_z \oplus s)\cdot(w_1 \oplus y)} \Bigr]^2 \\
        &= \frac{1}{2^{2n}} \sum_{z \in \Ima(f)} \Bigl[ (-1)^{a_z\cdot(w_1 \oplus y)} + (-1)^{a_z\cdot(w_1 \oplus y)}(-1)^{s\cdot(w_1 \oplus y)} \Bigr]^2\\
        &= \frac{1}{2^{2n}} \sum_{z \in \Ima(f)} \Bigl[ (-1)^{a_z\cdot(w_1 \oplus y)}(1 + (-1)^{s \cdot (w_1 \oplus y)}) \Bigr]^2 \\
    \end{split}
    \end{align}
    Analyzing the last equation, when $s \cdot (w_1 \oplus y) = 1$, $\Pr[y] = 0$.
    Thus, with probability $1$ we measure $y$ such that $s \cdot (w_1 \oplus y) = 0$.
    Since we learn $w_1$ and $y$, we can solve the corresponding system of equations to find $s$ using $O(n)$ queries.
\end{proof}

We note that the tools used above are immediately applicable to the Bernstein-Vazirani and Deustch-Jozsa problems. Due to the similarity of ideas we omit the proofs.

\begin{theorem}
    \label{thm:dj-bv}
    $\frac{1}{2}$\textbf{BQP} can solve the Deustch-Jozsa and Bernstein-Vazirani problems. 
\end{theorem}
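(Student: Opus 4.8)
The plan is to reuse the shift-by-$w$ correction already deployed for \textsc{Fourier Sampling} (\cref{thm:fourier-sampling}) and Simon's problem (\cref{thm:simons-problem}). In both the Deutsch-Jozsa and Bernstein-Vazirani settings the textbook \textbf{BQP} algorithm has the same $H^{\otimes n}$--oracle--$H^{\otimes n}$ shape, and the phase oracle $U_f:\ket{x}\mapsto(-1)^{f(x)}\ket{x}$ is diagonal in the computational basis. First I would write down the $\frac12$\textbf{BQP} circuit that begins from a random basis state $\ket{w}$, applies $H^{\otimes n}$, then $U_f$, then $H^{\otimes n}$, measures a label $y$, and learns $w$ at the end.

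For Bernstein-Vazirani, where $f(x)=a\cdot x$, the initial Hadamard on $\ket{w}$ produces the phase $(-1)^{w\cdot x}$ and the oracle contributes $(-1)^{a\cdot x}$; these combine to $(-1)^{(w\oplus a)\cdot x}$, so the final $H^{\otimes n}$ collapses the state deterministically onto $\ket{w\oplus a}$. Having learned $w$, the algorithm outputs $y\oplus w=a$. For Deutsch-Jozsa the same expansion shows that the probability of measuring $y$ equals $\wh{f}(w\oplus y)^2$, where $\wh{f}$ is the Fourier coefficient of $(-1)^{f}$. A constant $f$ concentrates all weight on $w\oplus y=0^n$, whereas a balanced $f$ has $\wh{f}(0^n)=0$ and hence never produces $w\oplus y=0^n$; the decision rule is therefore to declare $f$ constant exactly when $y\oplus w=0^n$, which we can check in post-processing since $w$ is known.

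These computations are routine and mirror those of \cref{thm:simons-problem} line for line, so I expect no genuine obstacle. The only point meriting care is verifying that the random initial string acts purely as an \textsc{xor} shift on the outcome---equivalently, that the corresponding $X^w$ error commutes through the diagonal oracle exactly as in the \textbf{IQP} simulation of \cref{thm:half-bqp-simulates-iqp}---so that knowledge of $w$ suffices to undo it in \textbf{BPP} post-processing. Since $U_f$ is diagonal in the $Z$ basis this commutation is immediate, which is precisely why the proofs can be safely omitted once Simon's problem is in hand.
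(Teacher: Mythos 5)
Your proposal is correct and matches exactly what the paper intends: the paper omits these proofs, stating only that the tools from \cref{thm:simons-problem} and \cref{thm:fourier-sampling} apply immediately, and your write-up is precisely that routine instantiation (the shift-by-$w$ correction through the diagonal oracle, giving $y\oplus w = a$ deterministically for Bernstein--Vazirani and outcome probability $\wh{f}(w\oplus y)^2$ for Deutsch--Jozsa). No gaps; the decision rules you give are exactly right.
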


\subsection{Period finding}
\label{subsubsec:period-finding}
In this section, we show that $\frac12$\textbf{BQP} can solve \textsc{Period Finding}. As with Simon's problem, \textsc{Period Finding} is not known to be in \textbf{DQC}1. Furthermore, it is not known to be in \textbf{NISQ}. Thus, this result provides further evidence that $\frac12$\textbf{BQP} is strictly more powerful than \textbf{DQC}1 and also that $\frac12$\textbf{BQP} may be incomparable to \textbf{NISQ}.
\textsc{Period Finding,shor1994algorithms} is defined as follows.

\begin{definition}[\textsc{Period Finding}]
    Given oracle access to a function $f:[N] \rightarrow \{0,1\}^n$ with the promise that $\exists r \in [N]$ with $r \leq \sqrt{N}$ such that $\forall x,y \in [N]$, \[f(x) = f(y) \iff x \equiv y \mod r,\]
    find $r$.
\end{definition}

\begin{theorem}
     \label{thm:period-finding}
    \textsc{Period Finding} is in $\frac12$\textbf{BQP}.
\end{theorem}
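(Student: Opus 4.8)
The plan is to mimic Shor's period-finding algorithm, but to replace the initial layer of Hadamards by a quantum Fourier transform over $\mathbb{Z}_Q$ for a power of two $Q = 2^m$ with $N^2 \le Q < 2N^2$, and then to correct for the random input in classical post-processing exactly as in \cref{thm:simons-problem} and \cref{thm:fourier-sampling}. Concretely, I would take the circuit $C$ to be $\mathrm{QFT}_Q$ on the first register, followed by the index oracle $U_f$, followed by $\mathrm{QFT}_Q$ again on the first register, and then measure. Running this in the $\frac12$\textbf{BQP} model on a random input $\ket{w_1}\ket{w_2}$ (where $w_1$ indexes the first register as an integer in $\mathbb{Z}_Q$ and $w_2$ is the oracle's output register), the only change relative to \textbf{BQP} is that we start from $\ket{w_1}$ rather than $\ket{0}$. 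The key enabling point is that both $\mathrm{QFT}_Q$ and $U_f$ use no clean ancilla qubits: $\mathrm{QFT}_Q$ is just Hadamards and controlled phase rotations, and $U_f$ is a black box, so the whole circuit is legitimately implementable in $\frac12$\textbf{BQP}.

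The central observation is that applying $\mathrm{QFT}_Q$ to $\ket{w_1}$ produces the superposition $\frac{1}{\sqrt{Q}}\sum_x \omega^{w_1 x}\ket{x}$ with $\omega = e^{2\pi i/Q}$; that is, the random input contributes a clean additive ``momentum'' phase $\omega^{w_1 x}$ rather than the $\mathbb{F}_2$ phase $(-1)^{w_1\cdot x}$ that a Hadamard layer would give. After $U_f$ and the final $\mathrm{QFT}_Q$, a short calculation shows that the amplitude on $\ket{y}$ equals the standard Shor amplitude on $\ket{y + w_1}$; the relabeling of the second register by $w_2$ is a bijection that preserves the collision structure $f(x)=f(x')\iff x\equiv x'\pmod r$ and so does not affect the analysis. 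Hence
\[
\Pr[\text{measure } y] \;=\; \frac{1}{Q^2}\sum_{z}\Big|\sum_{x:\,f(x)=z}\omega^{x(y+w_1)}\Big|^2 \;=\; \Pr_{\mathrm{std}}[\,y + w_1 \bmod Q\,],
\]
so the measured distribution is exactly Shor's output distribution shifted by $-w_1$.

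Since we learn $w_1$ at the end, the post-processing simply sets $y_0 = y + w_1 \bmod Q$ and feeds $y_0$ into the usual continued-fractions routine, which is a \textbf{BPP} computation and recovers $r$ with high probability (the promise $r \le \sqrt{N}$ gives $r^2 \le N \le Q$, which is what the continued-fractions step requires). Repeating $O(\log N)$ times and taking least common multiples boosts the success probability, exactly as in the \textbf{BQP} algorithm.

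I expect the main obstacle to be structural rather than computational: one must recognize that the first layer has to be a genuine $\mathrm{QFT}_Q$ over the power-of-two modulus and not a Hadamard layer, since only then does the random offset $w_1$ enter as an additive shift in $\mathbb{Z}_Q$ that can be undone classically; a Hadamard layer would turn $\ket{w_1}$ into $\mathbb{F}_2$-phases that do not commute cleanly with the $\mathbb{Z}_Q$ Fourier transform. The remaining care is the bookkeeping that the second-register offset $w_2$ is harmless and that $\mathrm{QFT}_Q$ needs no clean ancillas, so that, unlike \textsc{Order Finding} and \textsc{Factoring} where implementing modular exponentiation reversibly with few ancillas is the real difficulty, the oracle version here requires nothing beyond the shift correction.
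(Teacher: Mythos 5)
Your proposal is correct and follows essentially the same route as the paper's proof: the same QFT--$U_f$--QFT circuit on a random input $\ket{w_1}\ket{w_2}$, the same observation that $w_1$ enters only as an additive phase $\omega_Q^{w_1 x}$ so that the output distribution is Shor's distribution shifted by $w_1$, the same harmless $\oplus\, w_2$ relabeling of the second register, and the same post-processing step of computing $y + w_1 \bmod Q$ and running continued fractions. The only cosmetic difference is your choice $Q \geq N^2$ versus the paper's $Q = 2^{\lceil \log N \rceil}$, both of which suffice under the promise $r \leq \sqrt{N}$.
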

\begin{proof}
    We use the same circuit as Shor's original period finding algorithm. Here $n = \lceil \log N \rceil$,  $w_1$ and $w_2$ are uniformly random bitstrings, and $Q = 2^{n}$.  
    \begin{figure}[H]
        \centering
        \begin{quantikz}
            \ket{w_1} & \qwbundle{n} & \gate{QFT} & \gate[2]{U_f} & \gate{QFT} & \meter{} \\
            \ket{w_2} & \qwbundle{n} & \qw & \qw & \qw & \meter{}
        \end{quantikz}
        \caption{a $\frac12$\textbf{BQP} circuit for \textsc{Period Finding}}
    \end{figure}
    The analysis is as follows:
    \begin{align}
    \begin{split}
        \ket{w_1}\ket{w_2} \xmapsto{QFT}& \frac{1}{\sqrt{Q}}\sum_{x=0}^{Q-1} \omega_Q^{w_1x} \ket{x}\ket{w_2} \\
        \xmapsto{U_f}& \frac{1}{\sqrt{Q}}\sum_{x=0}^{Q-1} \omega_Q^{w_1x} \ket{x}\ket{f(x) \oplus w_2} \\
        \xmapsto{QFT}& \frac{1}{Q}\sum_{x=0}^{Q-1} \sum_{y=0}^{Q-1} \omega_Q^{w_1x}\omega_Q^{xy} \ket{y}\ket{f(x) \oplus w_2} \\
        =& \frac{1}{Q}\sum_{y=0}^{Q-1} \sum_{x=0}^{Q-1} \omega_Q^{x(w_1 + y)} \ket{y}\ket{f(x) \oplus w_2}
    \end{split}
    \end{align}
    Now the probability of measuring a label $y$ is given by
    \begin{align}
    \begin{split}
        \Pr[y] &= \frac{1}{Q^2} \Biggl| \sum_{x=0}^{Q-1} \omega_Q^{x(w_1 + y)} \ket{f(x) \oplus w_2} \Biggr|^2 \\
        &= \frac{1}{Q^2} \Biggl| \sum_{z \in \Ima(f)} \sum_{x \in f^{-1}(z)} \omega_Q^{x(w_1 + y)} \ket{z \oplus w_2} \Biggr|^2 \\
        &= \frac{1}{Q^2} \sum_{z \in \Ima(f)} \Biggl| \sum_{x \in f^{-1}(z)} \omega_Q^{x(w_1 + y)} \Biggr|^2 \\
        &= \frac{1}{Q^2} \sum_{z \in \Ima(f)} \Biggl| \sum_{j = 0}^{L-1} \omega_Q^{(x_0 + jr) (w_1 + y)} \Biggr|^2 \\
        &= \frac{1}{Q^2} \sum_{z \in \Ima(f)} \Biggl| \sum_{j = 0}^{L-1} \omega_Q^{jr(w_1 + y)} \Biggr|^2
    \end{split}
    \end{align}
    Using the standard analysis of Shor's algorithm, \[\Pr[w_1 + y = jQ/r] \geq \frac{4}{\pi^2}.\]
    Thus, we can find $r$ by computing $w_1 + y \mod Q$ and using the continued fraction procedure.
\end{proof}
We have shown that in the oracle setting, $\frac12$\textbf{BQP} can solve \textsc{Period Finding}. The next question to ask is whether $\frac12$\textbf{BQP} can implement Shor's factoring algorithm (outside of the oracle setting), which reduces \textsc{Factoring} to finding the period of the modular exponentiation function (i.e., order finding). While it is not immediately obvious how to implement the modular exponentiation step of Shor's algorithm reversibly using few clean ancillas, we show that it is possible, due to several nontrivial results about logarithmic-depth circuits. 

\subsection{Order finding and factoring}
\label{subsubsec:factoring}
In this section, we show that $\frac12$\textbf{BQP} solves the \textsc{Order Finding} problem, and therefore \textsc{Factoring}. The reduction from \textsc{Factoring} to \textsc{Order Finding} is standard, so here we just show that \textsc{Order Finding} is in $\frac12$\textbf{BQP}. A helpful review of the reduction can be found in \cite{cleve2000fast} or any textbook on quantum computation.
\begin{definition}[\textsc{Order Finding}]
    Given positive integers $N$ and $a < N$ such that $\gcd(a,N) = 1$, find the least positive integer $r$ such that $a^r = 1 \mod N$. 
\end{definition}
In the language of group theory, \textsc{Order Finding} corresponds to finding the order of $a$ in the multiplicative group of integers mod $N$.
Note that \textsc{Order Finding} is a special case of \textsc{Period Finding} (\cref{subsubsec:period-finding}). In particular, it is the same as finding the period of the function 
\[f_a(x) = a^x \mod N.\]

\begin{theorem}
    \label{thm:order-finding}
    \textsc{Order Finding} is in $\frac12$\textbf{BQP}.
\end{theorem}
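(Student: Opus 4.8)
The plan is to realize \textsc{Order Finding} as an instance of \textsc{Period Finding} and to reuse \cref{thm:period-finding} essentially as a black box, so that the only genuinely new work is implementing the period-finding oracle \emph{inside} $\frac12$\textbf{BQP} rather than assuming query access to it. Recall that $r = \mathrm{ord}_N(a)$ is exactly the period of
\[
  f_a(x) = a^x \bmod N,
\]
so \cref{thm:period-finding} already solves \textsc{Order Finding} \emph{given} the index oracle $U_{f_a}: \ket{x}\ket{y}\mapsto \ket{x}\ket{y\oplus f_a(x)}$. The only adjustment to the period-finding set-up is that the order of $a$ may be as large as $\phi(N)$, which violates the $r\le\sqrt N$ promise; I would fix this by running the circuit of \cref{thm:period-finding} with a first register of $n' = \lceil 2\log N\rceil$ qubits and $Q = 2^{n'}\ge N^2 \ge r^2$, so that $r \le \sqrt Q$ and the continued-fraction recovery of $r$ from $w_1 + y \bmod Q$ goes through verbatim.

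The crux, flagged already in the text preceding the theorem, is that the textbook circuit for $U_{f_a}$ (repeated squaring followed by a cascade of modular multiplications) consumes a polynomial amount of clean scratch space, whereas $\frac12$\textbf{BQP} affords only $O(\log n)$ clean ancillas. So the main obstacle is to build a reversible implementation of modular exponentiation that uses only $O(\log n)$ clean ancillas; everything else is inherited from \cref{thm:period-finding}.

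I would resolve this by appealing to low-depth circuit complexity. Computing $a^x\bmod N$ reduces to an iterated product of the precomputed powers $a^{2^i}\bmod N$ selected by the bits of $x$, i.e.\ to iterated integer multiplication modulo $N$; by the result of Hesse, Allender, and Barrington this is computable in (\textsc{DLOGTIME}-uniform) $\textbf{TC}^0 \subseteq \textbf{NC}^1$. As recorded in the preliminaries, every uniform $\textbf{NC}^1$ function admits, via Barrington's theorem, a polynomial-size reversible circuit that uses only $O(\log n)$ clean ancillas; threading the branching-program computation of each output bit through a clean $O(\log n)$-qubit register, copying out the answer, and uncomputing therefore yields the index oracle $U_{f_a}$ within exactly the ancilla budget of $\frac12$\textbf{BQP}.

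Finally I would supply those clean ancillas in the $\frac12$\textbf{BQP} model the same way \textbf{DQC}1 is simulated: optimistically assume the $O(\log n)$ qubits of the random string $w$ designated as scratch are $0$, run the period-finding circuit with the reversible $U_{f_a}$ above, and---since we learn $w$ at the end---verify the assumption (the scratch register is uncomputed, so it should return to its initial value). This succeeds with probability $1/\poly(n)$, amplified to a constant by running $\poly(n)$ independent copies in parallel. Conditioned on success the state is exactly the one analyzed in \cref{thm:period-finding}, so measuring and post-processing recovers $r$, and the standard reduction from \textsc{Factoring} to \textsc{Order Finding} then places \textsc{Factoring} in $\frac12$\textbf{BQP} as well.
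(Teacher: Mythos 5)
Your proposal is correct and takes essentially the same route as the paper: reduce \textsc{Order Finding} to \cref{thm:period-finding}, implement $a^x \bmod N$ as an iterated product of the precomputed powers $a^{2^i} \bmod N$ in a low-depth circuit class, and then use a Barrington-style reversible simulation to fit the few-clean-ancilla budget of $\frac12$\textbf{BQP}. The only superficial differences are that the paper invokes the \textbf{NC}$^1$ iterated-product and division circuits of Beame--Cook--Hoover and Chiu--Davida--Litow (rather than the Hesse--Allender--Barrington \textbf{TC}$^0$ result) and gets a \emph{constant} number of workspace ancillae via Ambainis et al., while you make explicit the optimistic-ancilla-plus-verification amplification that the paper covers in its earlier discussion of the model.
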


\begin{proof}
    Since we have already shown that $\frac12$\textbf{BQP} solves \textsc{Period Finding} in the oracle setting, to show that $\frac12$\textbf{BQP} solves \textsc{Order Finding}, we only need to show how to implement $f_a$ within a $\frac12$\textbf{BQP} circuit. Concretely if we can implement
\begin{align*}
    U_f: \ket{x}\ket{w} \mapsto \ket{x}\ket{(a^x \mod N) \oplus w}
\end{align*}
as a reversible circuit using $O(\log(\log N))$ clean ancillae, where $x \in \mathbb{N}$ is represented by $n=2\log N$ bits, and $w \in \{0,1\}^n$, then we can find the order of $a$ in the multiplicative group of integers mod $N$.

\begin{claim}
    There is a polynomial-size reversible classical circuit which uses a constant number of workspace ancillae and implements the transformation
    \begin{align*}
        U_f: \ket{x}\ket{w} \mapsto \ket{x}\ket{(a^x \mod N) \oplus w}.
    \end{align*}
\end{claim}

At a high level, the proof of the claim is the following:
First, for a given $a,N$ with $a < N$, there is a polynomial-time-generated \textbf{NC1} circuit which computes $a^x \mod N$. Now, due to Barrington's theorem \cite{barrington1986bounded}, a logarithmic-depth circuit can be simulated by a polynomial-depth, width-$5$ permutation branching program. Furthermore, due to a result by Ambainis \cite{ambainis2000computing}, a width-$5$ permutation branching program can be simulated by a reversible circuit with $3$ workspace ancillae.
 
As an important note, the permutation branching program and the reversible circuit with $3$ workspace ancillae do not output the entire representation of $a^x \mod N$, rather they enable us to recognize a given language in \textbf{NC1}. Thus, to output $a^x \mod N$ into the second register, we simulate $n$ copies of the $NC1$ circuit, each recognizing the language corresponding to a single bit of $a^x \mod N$.
 
Now, it remains to show that there exists a polynomial-time generated logarithmic-depth (\textbf{NC1}) circuit which computes $a^x \mod N$. We will need several facts about \textbf{NC1} circuits.

\begin{fact}
    \label{fact:nc1-addition}
    There exist uniformly-generated boolean (\textbf{NC1}) circuit families  with depth $O(\log n)$ for addition and subtraction of two $n$-bit integers.
\end{fact}

\begin{fact}[\cite{beame1986log,chiu2000nc1}]
    \label{fact:nc1-iterated-product}
    There exists a uniformly-generated boolean (\textbf{NC1}) circuit family with depth $O(\log n)$ which computes the product
    \[\prod_{i=1}^n x_n\]
    where each $x_i$ is an $n$-bit integer.
\end{fact}

\begin{fact}[\cite{beame1986log,chiu2000nc1}]
    \label{fact:nc1-division}
    There exists a uniformly-generated boolean (\textbf{NC1}) circuit family with depth $O(\log n)$ which computes the quotient $\lfloor \frac{x}{y} \rfloor$
    where $x$ and $y$ are $n$-bit integers.
\end{fact}

Taken together, these facts say that standard arithmetic operations are in \textbf{NC1}, as well as computing the product of $n$ integers. 
 
Now, we want to construct a depth $O(\log n)$ fan-in $2$ boolean circuit which computes $a^x \mod N$. The following construction is due to Cleve and Watrous \cite{cleve2000fast}. Let $x = x_{2n-1}x_{2n-2}\ldots x_1x_0$. Then 
\begin{align}
        a^x = \prod_{i=0}^{2n-1} (a^{2^i})^{x_i}.
\end{align}
Since $a$ is known ahead of time (i.e. the classical preprocessing step of $\frac12$\textbf{BQP} has access to $a$), we can precompute $b_i = a^{2^i} \mod N$ for all $0 \leq i \leq 2n-1$, where each $b_i$ is an $n$-bit integer. Thus, by \cref{fact:nc1-iterated-product} there is a $O(\log n)$-depth circuit which computes
\[A = \prod_{i=0}^{2n-1} b_i^{x_i}.\]
Note that $A$ is a $2n^2$-bit integer and $A \equiv a^x \mod N$. Now, by \cref{fact:nc1-addition} and \cref{fact:nc1-division} there is a $O(\log n)$-depth circuit which computes 
\[R = A - N\lfloor \frac{A}{N} \rfloor.\]
Note that $R$ is the remainder when dividing $A$ by $N$, and therefore 
\[R = a^x \mod N.\]
Also note that each of $A,R,N$ and $b_i$ is at most $2n^2$ bits. Therefore all of the operations can be computed by $O(\log n)$-depth circuits. Since the whole circuit is a composition of a constant number of $O(\log n)$-depth circuits, the entire circuit has depth $O(\log n)$. Thus, there is a polynomial-time generated depth $O(\log n)$ fan-in 2 boolean circuit which computes $a^x \mod N$.
\end{proof}

\begin{corollary}
    \textsc{Factoring} is in $\frac12$\textbf{BQP}.
\end{corollary}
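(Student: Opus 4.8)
The plan is to invoke the standard randomized reduction from \textsc{Factoring} to \textsc{Order Finding} and to verify that every part of it outside the order-finding step is classical, so that it fits inside the \textbf{BPP} pre- and post-processing already permitted by $\frac12$\textbf{BQP}. First I would dispose of the easy cases in classical preprocessing: if $N$ is even we output $2$, if $N$ is prime (detected by a polynomial-time primality test) the instance is trivial, and if $N = p^k$ is a prime power we recover $p$ by a deterministic polynomial-time integer-root-extraction test for each $k \leq \log_2 N$. We may therefore assume $N$ has at least two distinct odd prime factors, which is exactly the regime where the reduction succeeds with constant probability.

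Next I would recall the number-theoretic core. For a uniformly random $a \in \{2,\dots,N-1\}$ with $\gcd(a,N)=1$, let $r$ be the multiplicative order of $a$ modulo $N$. A standard fact guarantees that, under the assumption on $N$ above, with probability at least $1/2$ the order $r$ is even and $a^{r/2} \not\equiv -1 \pmod N$, in which case $\gcd(a^{r/2}-1,N)$ and $\gcd(a^{r/2}+1,N)$ contain a nontrivial factor of $N$. Computing $\gcd(a,N)$, the power $a^{r/2} \bmod N$, and these two $\gcd$s are all polynomial-time classical operations, while finding $r$ from $a$ and $N$ is precisely \textsc{Order Finding}, which lies in $\frac12$\textbf{BQP} by \cref{thm:order-finding}.

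The one point that needs care is that $\frac12$\textbf{BQP} does not obviously compose with itself, so I cannot treat the reduction as an \emph{adaptive} sequence of $\frac12$\textbf{BQP} calls that retries with a fresh $a$ after each failure. The fix is to make all randomness non-adaptive and to parallelize rather than nest: in \textbf{BPP} preprocessing I sample $m = O(\log(1/\delta))$ independent values $a_1,\dots,a_m$, and I assemble a single $\frac12$\textbf{BQP} circuit consisting of the $m$ order-finding circuits from \cref{thm:order-finding} acting on disjoint registers, each with its own maximally entangled input block. This is still one polynomial-size $\frac12$\textbf{BQP} computation. In \textbf{BPP} post-processing I read off each learned random string, recover each order $r_i$ via the continued-fraction step, and test the factoring condition for every $i$; since each $a_i$ succeeds independently with probability at least $1/2$, some index succeeds with probability at least $1-\delta$.

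I expect the main obstacle to be precisely this composition subtlety: articulating why wrapping \textsc{Order Finding} in classical glue stays inside $\frac12$\textbf{BQP} even though $\frac12$\textbf{BQP}$^{\frac12\textbf{BQP}}$ likely does not collapse to $\frac12$\textbf{BQP}. The resolution is that the reduction uses only a single layer of (parallel-repeated) $\frac12$\textbf{BQP} surrounded by \textbf{BPP} pre- and post-processing, which is exactly the class $\frac12$\textbf{BQP} by definition, so no genuine nesting of quantum subroutines is required. A secondary bookkeeping point is to absorb the bounded error of the order-finding primitive itself into the same parallel-repetition argument, which only multiplies $m$ by a polynomial factor.
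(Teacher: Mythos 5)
Your proposal is correct and takes essentially the same route as the paper, whose entire proof of this corollary is a one-line appeal to the standard reduction from \textsc{Factoring} to \textsc{Order Finding} (with \cite{cleve2000fast} cited as a reference for the reduction) combined with \cref{thm:order-finding}. Your explicit handling of the composition subtlety---replacing adaptive retries with a single layer of parallel order-finding circuits wrapped in \textbf{BPP} pre- and post-processing, so that no nesting of $\frac12$\textbf{BQP} inside itself is needed---is a detail the paper leaves implicit, and you resolve it correctly.
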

This corollary follows by the standard reduction from \textsc{Factoring} to \textsc{Order Finding}. 
This result is important because it provides an unrelativized separation between $\frac12$\textbf{BQP} and \textbf{BPP} under a standard cryptographic assumption (\textsc{Factoring} is not in \textbf{BPP}). That is, even if we adpot a skeptical perspective on the usefulness of oracle separations, there is still compelling evidence that $\frac12$\textbf{BQP} captures a more powerful model of computation.

\subsubsection{Can $\frac12$\textbf{BQP} solve the Abelian HSP?}
\label{subsubsec:hsp}

We have seen that $\frac12$\textbf{BQP} can solve many hidden subgroup problems in the oracle setting, as well as \textsc{Order Finding} and \textsc{Factoring} outside of the oracle setting. A natural next question is whether $\frac12$\textbf{BQP} can solve the general Abelian hidden subgroup problem (HSP). The short answer is that we don't know. To see why known quantum algorithms do not easily carry over to $\frac12$\textbf{BQP}, we now give a brief overview of the quantum algorithm for the Abelian HSP.

\begin{definition}[Abelian HSP]
    Let $G$ be a finite Abelian group, $X$ a finite set, and $f: G \rightarrow X$ a function with the promise that there exists a subgroup $H \leqslant G$ such that
    \[f(x) = f(y) \iff x - y \in H.\]
    Given an index oracle $U_f$ implementing $f$, find a generating set for $H$.
\end{definition}

The quantum algorithm for Abelian HSP consists of the following steps. Note that Period Finding and Simon's problem are special cases of this algorithm.

\begin{enumerate}
    \item Create a uniform superposition over all group elements.
    \item Query the oracle $U_f$ in superposition and store the contents in a new register.
    \item Apply the quantum Fourier transform over the group $G$ to the first register, then measure.
    \item Repeat. Each measurement yields information which decreases the size of possible $H$ by a factor of at least $2$.\footnote{The details of this post-processing step are unimportant for our discussion and can be found in any standard textbook on quantum computation} Thus, after $O(\log|G|)$ runs, we obtain $H$.
\end{enumerate}

The crucial difference between the general Abelian HSP and the special cases we have seen so far appears in step $3$. In particular, implementing the Fourier transform over an arbitrary group $G$ involves decomposing the group $G$ into a product of cyclic groups, $G = \mathbb{Z}_{N_1} \times \mathbb{Z}_{N_2} \times \ldots \times \mathbb{Z}_{N_m}$ then performing the quantum Fourier transform over each cyclic group in parallel \cite{cheung2001decomposing}. To decompose $G$, we use the quantum algorithm for \textsc{Order Finding} to find the factorization of $G$, which can be performed by a $\frac12$\textbf{BQP} computation. Unfortunately, the circuit for the quantum Fourier transform over arbitrary cyclic groups \cite{kitaev1995quantum} relies on quantum phase estimation, an algorithmic primitive which we do not currently know how to implement in $\frac12$\textbf{BQP}. As a result, a $\frac12$\textbf{BQP} implementation of the quantum Fourier transform over arbitrary cyclic groups would enable $\frac12$\textbf{BQP} to solve the Abelian HSP, but we have not yet found such an implementation.
 
To see why the phase estimation algorithm does not naturally carry over to the $\frac12$\textbf{BQP} setting, note that due to the random initial state in the $\frac12$\textbf{BQP} circuit, we cannot load the desired eigenvector into the phase estimation circuit. Instead, we must load a noisy state, which is no longer an eigenvector of the unitary whose phase we are trying to estimate.


\section{Separating $\frac{1}{2}$\textbf{BQP} and the polynomial Hierarchy}
\label{subsec:forrelation}

In this Section we show that $\frac12$\textbf{BQP} can solve \textsc{Forrelation} and the related Raz-Tal problem \cite{raz-tal}, implying an oracle separation with \textbf{PH}.

\textsc{Forrelation} is defined in \cite{aaronson2014forrelation} and provides a query complexity separation between classical and quantum computation. This is given by the following fact.
\begin{fact}
    Any classical randomized algorithm for \textsc{Forrelation} must make $\Omega(\frac{\sqrt{2^n}}{n})$ queries.
\end{fact}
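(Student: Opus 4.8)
The plan is to prove the classical lower bound via Yao's minimax principle: it suffices to exhibit a hard distribution over instances---an even mixture of \emph{forrelated} (``yes'') and \emph{uniform} (``no'') instances---and to show that every deterministic algorithm making $T$ queries distinguishes the two cases with advantage $o(1)$ unless $T = \Omega(\sqrt{2^n}/n)$. Here \textsc{Forrelation} asks us to decide whether the quantity $\Phi_{f,g} = 2^{-3n/2}\sum_{x,y} f(x)(-1)^{x\cdot y}g(y)$ is large or near zero.

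First I would build the hard distribution from Gaussians. Write $N = 2^n$ and let $H$ be the $N\times N$ Hadamard matrix with $H_{x,y} = (-1)^{x\cdot y}/\sqrt N$. Sample $v \sim \mathcal N(0, I_N)$ and set $\wh v = Hv$; then each $v_x$ and each $\wh v_y$ is a standard Gaussian, the $v_x$ are mutually independent, the $\wh v_y$ are mutually independent, and the only nontrivial correlations are the cross terms $\EE[v_x \wh v_y] = H_{x,y} = \pm 1/\sqrt N$. In the yes distribution define $f(x) = \operatorname{sgn}(v_x)$ and $g(y) = \operatorname{sgn}(\wh v_y)$; in the no distribution draw $f,g$ as independent uniform sign functions. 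A short second-moment calculation using $\EE[\operatorname{sgn}(X)\operatorname{sgn}(Y)] = \tfrac{2}{\pi}\arcsin\rho$ for correlation $\rho$ shows that the expected forrelation is $\tfrac{2}{\pi} + o(1)$ in the yes case and $0$ in the no case, and that in both cases $\Phi_{f,g}$ concentrates within $O(2^{-n/2})$ of its mean; hence the forrelation threshold genuinely separates the two distributions.

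The heart of the argument is bounding the distinguishing advantage. The crucial structural fact is that, in \emph{both} distributions, the answers among the $f$-queries are independent uniform signs and likewise among the $g$-queries; the two distributions differ only in the weak cross-correlations $\pm 1/\sqrt N$ present between $f$- and $g$-answers in the yes case. Treating the queried coordinates as fixed, the $\chi^2$-divergence between the yes and no laws of the query answers is dominated by $\sum_{(x,y)} \rho_{x,y}^2 = O(T^2/N)$, where the sum ranges over the $O(T^2)$ queried $(x,y)$ pairs and each $\rho_{x,y}^2 = O(1/N)$; consequently the total-variation advantage is $O(T/\sqrt N)$ up to lower-order corrections from the sign discretization. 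Forcing this to a constant already gives $T = \Omega(\sqrt N)$, i.e.\ $\Omega(\sqrt{2^n})$ up to a logarithmic loss.

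The main obstacle---and the plausible source of the $\log N = n$ loss in the stated bound---is \emph{adaptivity}: the algorithm chooses each query based on previous answers, so the clean $\chi^2$ bound for a fixed query set does not immediately apply. I would handle this with a hybrid/martingale argument that reveals the queried Gaussian coordinates one at a time, conditioning on the transcript so far and controlling how much the conditional law of the unqueried coordinates can drift. Because the yes-case correlation is spread uniformly across all $N$ coordinates through $H$, conditioning on a few revealed signs shifts the conditional distribution of any remaining coordinate by only $O(1/\sqrt N)$, so no adaptive strategy can concentrate enough signal into few queries. Making this Gaussian-conditioning estimate tight, and correctly accounting for the $\operatorname{sgn}$ discretization and Gaussian tail truncation in passing from real to Boolean values, is the technically delicate part that yields the $\Omega(\sqrt{2^n}/n)$ bound.
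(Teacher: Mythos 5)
The paper itself contains no proof of this Fact: it is imported directly from Aaronson--Ambainis \cite{aaronson2014forrelation}, so the only meaningful comparison is with their proof --- which your sketch in fact parallels closely. Your setup is the same as theirs: the hard distribution $f = \mathrm{sgn}(v)$, $g = \mathrm{sgn}(Hv)$ for Gaussian $v$, the computation $\E[f(x)g(y)] = \tfrac{2}{\pi}\arcsin(\rho_{x,y}) \approx \tfrac{2}{\pi}\rho_{x,y}$ giving expected forrelation $\tfrac{2}{\pi} > \tfrac{3}{5}$, and concentration so that both sides of the promise hold with probability $1-o(1)$. Your non-adaptive analysis is also correct: for a \emph{fixed} set of $T$ queried coordinates, the $\chi^2$-divergence between the two answer distributions is $O(T^2/N)$, so the advantage is $O(T/\sqrt{N})$ (and note this gives $\Omega(\sqrt{N})$ with \emph{no} logarithmic loss; the $1/n$ factor has nothing to do with this step).

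The genuine gap is that the adaptive case --- which is the entire content of the theorem and the sole source of the $1/n$ factor --- is asserted rather than proved. Your key claim, that conditioning on a few revealed signs shifts the conditional law of any remaining coordinate by only $O(1/\sqrt{N})$ and that therefore ``no adaptive strategy can concentrate enough signal into few queries,'' does not follow from what you wrote, and the natural way of executing it falls short: after $t$ revelations the conditional mean of an unqueried coordinate can be a sum of $t$ terms of size $\Theta(1/\sqrt{N})$, and an adaptive algorithm gets to query, at every step, whichever coordinate currently has the largest posterior bias. Summing per-query biases along the transcript in this naive way yields a bound of the form $T = \Omega(N^{c})$ for some $c < 1/2$, polynomially weaker than the Fact. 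What Aaronson--Ambainis actually do is replace sign queries by real-valued Gaussian queries (which only helps the algorithm), truncate to the event that all revealed values are $O(\sqrt{\log N})$ --- this truncation is precisely where the $\log N = n$ loss enters --- and then run a likelihood-ratio/martingale argument that controls the accumulated posterior drift globally rather than query-by-query. Without that lemma (or an equivalent), your argument establishes the lower bound only for non-adaptive algorithms, which is a strictly weaker statement than the one claimed.
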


We now define $k$-\textsc{Forrelation}. \textsc{Forrelation} is the same as $2$-\textsc{Forrelation}.

\begin{definition}[$k$-\textsc{Forrelation,aaronson2014forrelation}]
    Given phase oracle access to $k$ boolean functions $f_1,f_2,\dots,f_k$ where $f_i: \{0,1\}^n \rightarrow \{-1,1\}$, output whether $|\Phi|$ is $\leq \frac{1}{100}$ or $\Phi \geq \frac{3}{5}$, promised that one of these is the case, where 
    \[\Phi = \frac{1}{\sqrt{2^{(k+1)n}}} \sum_{x_1,x_2,\dots,x_k} f_1(x_1)(-1)^{x_1 \cdot x_2}f_2(x_2)(-1)^{x_2 \cdot x_3}\cdots (-1)^{x_{k-1}\cdot x_k}f_k(x_k)\]
\end{definition}

The \textbf{BQP} circuit for $k$-\textsc{Forrelation} is as follows:
\begin{figure}[H]
    \centering
    \begin{quantikz}
        \ket{0^n} & \qwbundle{n} & \gate{H^{\otimes n}} & \gate{f_1} & \gate{H^{\otimes n}} & \ \cdots \ & \gate{H^{\otimes n}} & \gate{f_k} & \gate{H^{\otimes n}} & \meter{} 
    \end{quantikz}
\end{figure}
It is easy to show that the amplitude corresponding to $\ket{0^n}$ is exactly $\Phi$. Thus, a \textbf{BQP} machine can use the Hadamard test to accept or reject with success probability $\geq \frac{2}{3}$ using only $k$ quantum queries. 

\subsection{Placing 2-Forrelation in $\frac12$\textbf{BQP}}
\label{subsubsec:2-forrelation}
We now show that $\frac12$\textbf{BQP} solves $2$-\textsc{Forrelation}.

\begin{theorem}
    \label{thm:2-forrelation}
    $\frac12$\textbf{BQP} can solve $2$-\textsc{Forrelation} using $O(1)$ quantum queries.
\end{theorem}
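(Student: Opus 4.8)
The plan is to run the \emph{standard} \textbf{BQP} circuit for $2$-\textsc{Forrelation}, namely $C = H^{\otimes n} f_2 H^{\otimes n} f_1 H^{\otimes n}$, directly inside a $\frac12$\textbf{BQP} machine on the random input $\ket{w}$, so that only the same two oracle queries are used, and then to read off the decision ($|\Phi| \le \frac{1}{100}$ versus $\Phi \ge \frac35$) from the measured string $z$ together with the learned string $w$. I would work in the ``$X^w$ error in the first layer'' picture of \cref{subsec:random-bqp}: the circuit becomes $C X^w$ acting on $\ket{0^n}$, and I would push the Pauli error through $C$ using $H^{\otimes n} X^w = Z^w H^{\otimes n}$, $H^{\otimes n} Z^w = X^w H^{\otimes n}$, and the fact that each phase oracle $f_j$ is diagonal in the $Z$ basis, exactly as in the \textbf{IQP} simulation (\cref{thm:half-bqp-simulates-iqp}) and \textsc{Fourier Sampling} (\cref{thm:fourier-sampling}). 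Here $\Phi(g_1,g_2)$ denotes the forrelation of two functions $g_1,g_2$.

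The first step of that computation is clean, and is the reason the approach is promising. Because there is a single Hadamard layer before $f_1$, the incoming $X^w$ reaches $f_1$ as a $Z^w$, which commutes with the diagonal $f_1$; hence the first oracle is effectively untouched. The trailing $Z^w$ produced right before the measurement is also harmless, since it only multiplies computational-basis amplitudes by phases. Carrying out the bookkeeping, I would obtain $\bra{z} C \ket{w} = (-1)^{z\cdot w}\,\Phi(\sigma_z f_1, \sigma_w f_2)$, where $\sigma_u g(x) = g(x \oplus u)$, so that $\Pr[z \mid w] = \Phi(\sigma_z f_1, \sigma_w f_2)^2$; equivalently, in the phase picture, $\Pr[z\mid w] = \Phi(\chi_w f_1, \chi_z f_2)^2$ with $\chi_u(x) = (-1)^{u\cdot x}$. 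The dependence on the \emph{output} $z$ is exactly the kind of shift corrected by an XOR in the earlier proofs.

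The hard part, which I expect to be the crux of the whole theorem, is the shift induced by $w$ on the \emph{second} oracle. Unlike $f_1$, the oracle $f_2$ sits after two Hadamard layers, so the error crosses it as a bit flip $X^w$ and turns $f_2$ into the shifted oracle $x \mapsto f_2(x \oplus w)$ (equivalently, a character twist $f_1 \mapsto \chi_w f_1$). This cannot be removed in post-processing the way the output shift can: $\Phi$ is not shift-invariant, and the quantity that determines the answer, $\Phi(f_1,f_2)$, is the amplitude at the single input/output pair $w=z=0^n$. Indeed, a short Fourier/autocorrelation calculation gives $\Phi(f_1,f_2)^2 = 2^n \Pr[w=0^n,\, z=0^n]$, so the useful signal is concentrated on an event of probability $2^{-n}$, and naive estimators (averaging $f(w,z)$ over samples, or a Hadamard test estimating $\mathbb{E}_w \Re\bra{w}C\ket{w}$) wash it out. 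So the real content must be a \emph{slight modification} of the circuit that neutralizes this single $w$-shift using only $O(1)$ extra queries.

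Concretely, the modification I would pursue exploits that $f_2$ is the \emph{last} oracle, adjacent to the final Hadamard layer and the measurement. I would try to absorb the stray $X^w$ into an ancilla-assisted (index-oracle plus $\ket{-}$) implementation of $f_2$, using the $O(\log n)$ clean qubits that \cref{subsec:random-bqp} supplies with probability $\frac{1}{\poly(n)}$ (amplified by parallel repetition), so that the residual $w$-dependence becomes a correctable relabeling of the measured string rather than a shift of the oracle's \emph{argument}. Failing a clean version of that, I would instead aim to estimate $\Re\,\Phi$ directly through a constant-query amplitude-estimation / Hadamard-test gadget whose statistic is invariant under the $w$-shift. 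Verifying that such a modification genuinely \emph{eliminates} the shift rather than merely relocating it is the step I expect to be delicate, and it is precisely the place where I would expect the argument to break once a second problematic (even-indexed) oracle appears at $3$-\textsc{Forrelation}, consistent with the paper's conjecture.
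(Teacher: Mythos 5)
Your bookkeeping is correct as far as it goes --- indeed $\bra{z}C\ket{w} = \Phi(\chi_w f_1, \chi_z f_2)$, exactly as in the paper's computation --- but the conclusion you draw from it is backwards, and it leads you to abandon the approach that actually proves the theorem. You assert that averaging a post-processed function of $(w,z)$ over samples must ``wash out'' the signal because $\Phi^2 = 2^n \Pr[w = 0^n, z = 0^n]$ lives on an exponentially rare event. That inference is a non sequitur: the signal is not carried by that single event but by the \emph{correlation} between $w$ and $z$ across all pairs. The paper's proof consists precisely of the estimator you dismiss: take $R = (-1)^{w\cdot z}$ and compute its expectation under the joint distribution $\Pr[w,z] = \frac{1}{2^n}\bigl|\bra{z}C\ket{w}\bigr|^2$. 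Writing $\bra{z}C\ket{w} = \frac{1}{\sqrt{2^n}}\sum_x (-1)^{w\cdot x} f_1(x)\,\wh{f_2}(x \oplus z)$, the sum over the uniformly random $w$ of $(-1)^{w\cdot(x \oplus x' \oplus z)}$ forces $z = x \oplus x'$, and the whole expression telescopes:
\[
\E[R] \;=\; \frac{1}{2^n}\sum_{x,x'} f_1(x) f_1(x')\,\wh{f_2}(x)\,\wh{f_2}(x') \;=\; \Bigl(\frac{1}{\sqrt{2^n}}\sum_x f_1(x)\,\wh{f_2}(x)\Bigr)^2 \;=\; \Phi^2 .
\]
So the $w$-twist on the second oracle, which you treat as the obstruction to be neutralized, never needs to be removed: averaged against $(-1)^{w\cdot z}$ over random $w$, it is exactly what produces $\Phi^2$. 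Under the promise, $\E[R] \geq 9/25$ versus $\E[R] \leq 10^{-4}$, a constant gap, so $O(1)$ repetitions of the circuit (two queries each) plus a Chernoff bound decide the problem, giving the claimed $O(1)$ query complexity.

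By contrast, neither of the fixes you propose in its place can be carried out, so as written your proposal has no complete argument. Absorbing the stray $X^w$ into an ancilla-assisted implementation of $f_2$ would require clean registers capable of recording or undoing an unknown $n$-bit shift, but the model supplies only $O(\log n)$ clean qubits, and only with inverse-polynomial success probability; and your fallback of a Hadamard-test statistic $\E_w \Re\bra{w}C\ket{w}$ is the normalized-trace quantity that, as you yourself note, carries no usable signal here (phase-estimation-style gadgets are exactly what the paper says it cannot implement in this model). Your closing intuition about $3$-\textsc{Forrelation} does survive in corrected form --- there the twist lands on the \emph{middle} oracle and does not telescope under the average over $w$, which is the paper's stated reason for its conjecture --- but for $k = 2$ no modification of the circuit is needed at all.
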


\begin{proof}
    Consider the following $\frac12$\textbf{BQP} circuit $C$:
    \begin{figure}[H]
        \centering
        \begin{quantikz}
            \ket{w} & \qwbundle{n} & \gate{H^{\otimes n}} & \gate{f} & \gate{H^{\otimes n}} & \gate{g} & \gate{H^{\otimes n}} & \meter{} 
        \end{quantikz}
    \end{figure}
    Let $R = (-1)^{w\cdot z}$ be a random variable, where we measure $z$ and learn $w$. Then 
    \begin{align}
    \begin{split}
        \E[R] &= \sum_{w,z} \Pr[w,z] \cdot (-1)^{w \cdot z} \\
        &= \sum_{w,z} \Pr[w] \cdot \Pr[z | w] \cdot (-1)^{w \cdot z} \\
        &= \sum_w \frac{1}{2^n} \sum_z (-1)^{w \cdot z} \Pr[z | w]. 
    \end{split}
    \end{align}
    Now we analyze $\Pr[z|w]$ for fixed $z$ and $w$. Note that we use $+$ to denote bitwise addition mod $2$ and $\cdot$ for the dot product over $\mathbb{F}_2^n$. We also take $N = 2^n$. Let $\alpha_z = \bra{z}C\ket{w}$. Then 
    \begin{align}
    \begin{split}
        \alpha_z &= \frac{1}{N^{3/2}} \sum_{x,y} (-1)^{w\cdot x + x \cdot y + y \cdot z} f(x)g(y) \\
        &= \frac{1}{N^{1/2}} \sum_x (-1)^{w\cdot x} f(x) \frac{1}{2^n}\sum_y (-1)^{y\cdot (x + z)} g(y) \\
        &= \frac{1}{N^{1/2}} \sum_x (-1)^{w \cdot x} f(x) \widehat{g}_{x+z}
    \end{split}
    \end{align}
    and
    \begin{align}
    \begin{split}
        \Pr[z|w] = |\alpha_z|^2 &= \Biggl( \frac{1}{N^{1/2}} \sum_x (-1)^{w \cdot x} f(x) \widehat{g}_{x+z} \Biggr)^2 \\
        &= \frac{1}{2^n} \sum_{x,y} (-1)^{w \cdot (x + y)} f(x)f(y)\widehat{g}_{x+z}\widehat{g}_{y+z}.
    \end{split}
    \end{align}
    So
    \begin{align}
    \begin{split}
        \E[R] &= \sum_{w} \frac{1}{2^n} \sum_z (-1)^{w\cdot z} \frac{1}{2^n} \sum_{x,y} (-1)^{w \cdot (x + y)} f(x)f(y)\widehat{g}_{x+z}\widehat{g}_{y+z} \\
        &= \frac{1}{N^2} \sum_{x,y,z,w} f(x)f(y)\widehat{g}_{x+z}\widehat{g}_{y+z} (-1)^{w \cdot (x + y + z)} \\
        &= \frac{1}{N^2} \sum_{x,y,z} f(x)f(y)\widehat{g}_{x+z}\widehat{g}_{y+z} \sum_w (-1)^{w \cdot (x + y + z)} \\
        &= 
        \frac{1}{2^n} \sum_{x,y} f(x)f(y)\widehat{g}_{y}\widehat{g}_x.
    \end{split}
    \end{align}
    Now note that 
    \begin{align}
    \begin{split}
        \Phi &= \frac{1}{N^{3/2}} \sum_{x,y}f(x)g(y)(-1)^{x\cdot y} \\
        &= \frac{1}{N^{1/2}} \sum_x f(x) \frac{1}{2^n}\sum_y g(y)(-1)^{x\cdot y} \\
        &= \frac{1}{N^{1/2}} \sum_x f(x) \widehat{g}_x
    \end{split}
    \end{align}    
    so 
    \begin{align}
    \begin{split}
        |\Phi|^2 &= \Biggl( \frac{1}{2^n} \sum_x f(x) \widehat{g}_x \Biggr)^2 \\
        &= \frac{1}{2^n} \sum_{x,y} f(x)f(y)\widehat{g}_{y}\widehat{g}_x,
    \end{split}
    \end{align} 
    which is exactly $\E[R]$.
    Furthermore, since $R$ takes the value $1$ when $w\cdot z = 0$ and $-1$ when $w\cdot z = 1$, 
    \begin{align}
    \begin{split}
        \E[R] = |\Phi|^2 &= \Pr[w\cdot z = 0] - \Pr[w\cdot z = 1] \\ 
        &= \Pr[w\cdot z = 0] - (1 - \Pr[w\cdot z = 0]) = 2\Pr[w\cdot z = 0] - 1 \\
        \implies& \Pr[w\cdot z = 0] = \dfrac{1 + |\Phi|^2}{2}.
    \end{split}
    \end{align}
    Recall the promise that either $|\Phi| \leq \frac{1}{100}$ or $\Phi \geq \frac{3}{5}$. We run the $\frac12$\textbf{BQP} circuit $m$ times to obtain $R_1, R_2, \dots, R_m$ and accept if $\sum_{i=1}^m R_i \geq \frac{2m}{5}$.
     
    Suppose $|\Phi| \leq \frac{1}{100}$. Then by a loose Chernoff bound, setting $m=14$ gives \[\Pr[\sum_{i=1}^m R_i \geq \frac{2m}{5}] < \frac{1}{3}.\]
    Similarly if $\Phi \geq \frac{3}{5}$, for $m=14$,
    \[\Pr[\sum_{i=1}^m R_i \leq \frac{2m}{5}] < \frac{1}{3}.\]
    Thus, we can solve $2$-\textsc{Forrelation} in $\frac12$\textbf{BQP} using $O(1)$ quantum queries.
\end{proof}

\subsubsection{$\frac{1}{2}$\textbf{BQP} and the Forrelation Hierarchy}
\label{subsubsec:forrelation-hierarchy}

We have shown that $2$-\textsc{Forrelation} is in $\frac12$\textbf{BQP}. It turns out that $poly(n)$-\textsc{Forrelation} is promise-\textbf{BQP} complete \cite{aaronson2014forrelation}. Since the power of $k$-\textsc{Forrelation} seems to increase with the parameter $k$, we can consider a $k$-\textsc{Forrelation} ``hierarchy'' of increasing computational power. In particular since $2$-\textsc{Forrelation} is in $\frac12$\textbf{BQP}, if $\frac12$\textbf{BQP} $\subsetneq$ \textbf{BQP}, then there is some $t(n)$ for which $t(n)$-\textsc{Forrelation} is no longer in $\frac12$\textbf{BQP}. Finding this boundary will help understand the ways in which quantum computation over a random basis state is restricted relative to \textbf{BQP}.

Can $\frac12$\textbf{BQP} compute 3-\textsc{Forrelation}?
Here we give some evidence that $3$-\textsc{Forrelation} may not be in $\frac12$\textbf{BQP}.
Consider the following \textbf{BQP} circuit $C'$, which simulates the $\frac12$\textbf{BQP} procedure for $2$-\textsc{Forrelation}.

\begin{figure}[H]
    \centering
    \begin{quantikz}
        \ket{+} & & & & & & & & \ctrl{1} & \gate{H} & \meter{} \\
        \ket{0^n} & \qwbundle{n} & \gate{X^w} & \gate{H^{\otimes n}} & \gate{f} & \gate{H^{\otimes n}} & \gate{g} & \gate{H^{\otimes n}} & \gate{Z^w} & &
    \end{quantikz}
\end{figure}

Note that the controlled-$Z^w$ gate paired with the ancilla qubit simulates measuring the observable which corresponds to $(-1)^{wz}$. To see this, consider the following circuit:

\begin{figure}[H]
    \centering
    \begin{quantikz}
        \ket{+} & & \ctrl{1} & \meter{} \\
        \ket{z} & \qwbundle{n} & \gate{Z^w} &
    \end{quantikz}
\end{figure}
The analysis is as follows:
\begin{align}
\begin{split}
    \ket{+}\ket{z} \xmapsto{c-Z^w}& \ket{0}\ket{z} + (-1)^{zw}\ket{1}\ket{z} \\
    =& 
    \begin{cases}
        \ket{+}\ket{z}, & zw = 0 \\
        \ket{-}\ket{z}, & zw = 1
    \end{cases}.
\end{split}
\end{align}
Now, using the Pauli commutation relations and the fact that $f, g$ are diagonal in the $Z$ basis, we have that $C' = $
\begin{figure}[H]
    \centering
    \begin{quantikz}
        \ket{+} & & & & & & & \ctrl{1} & \gate{H} & \meter{} \\
        \ket{0^n} & \qwbundle{n} & \gate{H^{\otimes n}} & \gate{f} & \gate{H^{\otimes n}} & \gate{X^w} & \gate{g} & \gate{X^w} & \gate{H^{\otimes n}} &
    \end{quantikz}
\end{figure}

If we analyze the probability that the ancilla qubit is $0$, we obtain 
\[\Pr[0] = \frac{1}{2} + \frac{|\Phi|^2}{2}\]
as expected. 

The insight here is that we can think of $X^wgX^w$ as computing the function 
\[g^{\oplus w}: \{0,1\}^n \rightarrow \{-1,1\}\] given by
\[g^{\oplus w}(x) = g(x + w)\]

and then $k$-Forrelation can be computed as the expectation of $(-1)^{wz}$
where $w$ is uniformly random and $z$ is sampled with amplitude  
\[\alpha_z = \frac{1}{N^{(k+1)/2}}\sum_{x_1,\dots,x_k}(-1)^{x_1x_2 + \dots + x_{k-1}x_k} f_1(x_1)\dots f_{k-1}(x_{k-1}) f^{\oplus w}_{k}(x_k).\]

Importantly, there is not a nice commutation relation that allows us to pass $X^w$ through the second function $f_2$. Thus, this procedure does not seem to work in $\frac12$\textbf{BQP} for $k > 2$. The intuition is that if we think of starting with a random string $w$ instead of putting $X^w$ at the beginning of our computation, the $X^w$ gate can only commute through the first function $f_1$. While this analysis gives some insight as to why $\frac12$\textbf{BQP} may not be able to solve $3$-\textsc{Forrelation}, this is still an open question.

\subsection{Separating $\frac12$\textbf{BQP} and \textbf{PH}}
\label{subsec:ph-separation}
While \textsc{Forrelation} provides an almost tight query separation between quantum and classical computation, \cite{aaronson2014forrelation} only conjectured that there exists an oracle relative to which \textbf{BQP} $\not\subseteq$ \textbf{PH} and did not prove the claim. In 2018, \cite{raz-tal} proved this conjecture by giving a distribution based on \textsc{Forrelation} that hard to distinguish from the uniform distribution in \textbf{PH}, but can be distinguished in \textbf{BQP}. In this section we show that their problem is contained in $\frac12$\textbf{BQP}, extending their oracle separation to the $\frac12$\textbf{BQP} model. 

\subsubsection{The Raz-Tal Problem}
\label{subsubsec:raz-tal}
We now describe the Raz-Tal problem defined in \cite{raz-tal}.
Let $\varepsilon = \frac{1}{24n}$, and define a distribution $\mathcal{G}$ over $\mathbb{R}^{2^n} \times \mathbb{R}^{2^n}$ as follows: Sample $x_1,x_2,\dots,x_{2^n}$ independently from $\mathcal{N}(0,\varepsilon)$. Let $y = H_{2^n} x$ where $H_{2^n}$ is the Hadamard transform. $\mathcal{G}$ outputs $z = (x,y)$.

Now define the distribution $\mathcal{D}$ as follows: Sample $z$ from $\mathcal{G}$ but truncating each $z_i$ to the interval $[-1,1]$, and sample $z_1',z_2,',\dots,z_{2^n}'$ where $z_i' = 1$ with probability $\frac{1 + z_i}{2}$ and $z_i' = -1$ with probability $\frac{1 - z_i}{2}$, where $z_i$ is the truncated sample. $\mathcal{D}$ outputs $z' \in \{-1,1\}^{2\cdot 2^n}$.
\begin{definition}[The Raz-Tal problem]
    Given oracle access to a distribution over $\{-1,1\}^{2\cdot 2^n}$ which is either drawn from $\mathcal{D}$ or the uniform distribution $\mathcal{U}$, decide which is the case with success probability $\geq \frac{2}{3}$.
\end{definition}
The following facts from \cite{raz-tal} and \cite{aaronson2014forrelation} will be useful.
\begin{fact}
    \label{fact:raz-tal-amplitudes}
    Let $x,y \in \{-1,1\}^{2^n}$, and let 
    \[\Phi(x,y) = \frac{1}{2^{3n/2}} \sum_{i,j \in [2^n]}(-1)^{i\cdot j}x_iy_j.\]
    Then \cite{raz-tal}
    \[\E_{(x,y)\sim \mathcal{D}}[\Phi(x,y)] \geq  \frac{\varepsilon}{2}\]
    and \cite{aaronson2014forrelation}
    \[\E_{(x,y)\sim \mathcal{U}}[\Phi(x,y)^2] = \frac{1}{2^n}.\]
\end{fact}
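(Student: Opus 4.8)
The plan is to treat the two estimates separately, since they concern different distributions and different moments, and to reduce both to orthogonality of the characters $(-1)^{i\cdot j}$ together with the covariance structure of the construction. The second (uniform, second-moment) identity is elementary; the first (over $\mathcal{D}$, first moment) is an exact covariance computation followed by a truncation correction, which is where the real work lies.

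I would start with the easier identity $\E_{(x,y)\sim\mathcal{U}}[\Phi(x,y)^2] = 2^{-n}$. Expanding the square gives
\[
\Phi(x,y)^2 = \frac{1}{2^{3n}}\sum_{i,j,i',j'}(-1)^{i\cdot j + i'\cdot j'}\,x_i x_{i'} y_j y_{j'}.
\]
Under $\mathcal{U}$ the coordinates are independent uniform $\pm1$ variables, so $\E[x_i x_{i'}]=\delta_{i i'}$ and $\E[y_j y_{j'}]=\delta_{j j'}$, and every term with $i\neq i'$ or $j\neq j'$ vanishes. On the surviving diagonal $(-1)^{2 i\cdot j}=1$, leaving $\frac{1}{2^{3n}}\sum_{i,j}1 = 2^{2n}/2^{3n} = 2^{-n}$.

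For the first bound I would use linearity together with the structure of $\mathcal{D}$. Write $u$ for the raw Gaussian vector ($u_i\sim\mathcal{N}(0,\varepsilon)$), $v = H_{2^n}u$ for its (normalized, orthogonal) Hadamard image, $\tilde u,\tilde v$ for the truncations to $[-1,1]$, and $x,y\in\{-1,1\}$ for the final unbiased coordinatewise roundings. Since the rounding is unbiased and independent across coordinates, $\E[x_i y_j \mid \tilde u,\tilde v] = \tilde u_i\tilde v_j$, hence $\E_{\mathcal{D}}[x_i y_j] = \E[\tilde u_i\tilde v_j]$. Ignoring truncation, $v_j=\sum_k (H_{2^n})_{jk}u_k$ with $(H_{2^n})_{jk}=2^{-n/2}(-1)^{j\cdot k}$ and $\E[u_i u_k]=\varepsilon\,\delta_{ik}$ give $\E[u_i v_j]=\varepsilon\,2^{-n/2}(-1)^{i\cdot j}$; substituting into $\Phi$ collapses the character sum, $\frac{1}{2^{3n/2}}\sum_{i,j}(-1)^{i\cdot j}\varepsilon\,2^{-n/2}(-1)^{i\cdot j}=\frac{\varepsilon}{2^{2n}}\sum_{i,j}1=\varepsilon$, so the untruncated construction attains expected forrelation exactly $\varepsilon$.

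The \textbf{main obstacle} is controlling the loss from truncating each Gaussian to $[-1,1]$: I must show $\E_{\mathcal{D}}[\Phi]=\frac{1}{2^{3n/2}}\sum_{i,j}(-1)^{i\cdot j}\E[\tilde u_i\tilde v_j]$ stays at least $\varepsilon/2$ rather than the untruncated value $\varepsilon$. Writing $u_i v_j-\tilde u_i\tilde v_j=(u_i-\tilde u_i)v_j+\tilde u_i(v_j-\tilde v_j)$ and applying Cauchy--Schwarz bounds each correction by a Gaussian-tail quantity exponentially small in $n$, since $\varepsilon=\frac{1}{24n}$ forces standard deviation $O(1/\sqrt n)$ and hence $\Pr[|u_i|>1]=e^{-\Omega(n)}$; the only delicate point is that the $2^{2n}$ corrections, carried through the $2^{-3n/2}$ normalization (a net factor $2^{n/2}$), still sum to less than $\varepsilon/2$. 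I would carry out this estimate following \cite{raz-tal}, after which combining it with the exact untruncated value $\varepsilon$ yields $\E_{(x,y)\sim\mathcal{D}}[\Phi(x,y)]\geq\varepsilon/2$, completing the Fact.
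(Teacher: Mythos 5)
The paper does not actually prove this Fact; it is imported directly from \cite{raz-tal} and \cite{aaronson2014forrelation} as a citation, so there is no internal proof to compare against. Judged on its own merits, your reconstruction is correct and follows the same route as the cited sources. The second-moment identity under $\mathcal{U}$ is exactly the Aaronson--Ambainis computation: independence of the $\pm 1$ coordinates kills every off-diagonal term, and the diagonal contributes $2^{2n}/2^{3n} = 2^{-n}$. For the first moment under $\mathcal{D}$, each of your three reductions is sound: (i) unbiasedness and conditional independence of the coordinatewise rounding give $\E_{\mathcal{D}}[x_i y_j] = \E[\tilde u_i \tilde v_j]$; (ii) the covariance $\E[u_i v_j] = \varepsilon\, 2^{-n/2}(-1)^{i\cdot j}$ collapses the character sum so the untruncated vectors attain expectation exactly $\varepsilon$; (iii) the truncation loss is handled by Cauchy--Schwarz against Gaussian tails. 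On point (iii), which you partly defer to \cite{raz-tal}, the numbers do close: with $\varepsilon = \frac{1}{24n}$ each tail quantity $\E[(u_i - \tilde u_i)^2]$ is $e^{-\Omega(n)}$ (roughly $e^{-12n}$ up to polynomial factors), so after Cauchy--Schwarz each of the $2^{2n}$ correction terms is $O(e^{-6n}\sqrt{\varepsilon})$, and carrying the $2^{-3n/2}$ normalization leaves a total correction of $O(2^{n/2} e^{-6n}\sqrt{\varepsilon}) = e^{-\Omega(n)}$, which is negligible against the polynomially small target $\varepsilon/2 = \frac{1}{48n}$; hence $\E_{\mathcal{D}}[\Phi] \geq \varepsilon - e^{-\Omega(n)} \geq \varepsilon/2$ for all sufficiently large $n$. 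Two small points worth recording explicitly if you write this up: the argument needs $n$ large enough for the exponential tail to drop below the $\frac{1}{48n}$ threshold (small $n$ can be absorbed into the constant or handled separately), and it uses that $H_{2^n}$ is the normalized (orthogonal) Hadamard transform, so that $v = H_{2^n}u$ also has i.i.d.\ $\mathcal{N}(0,\varepsilon)$ coordinates and $\E[v_j^2] = \varepsilon$ in the Cauchy--Schwarz step.
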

To obtain the desired separation, \cite{raz-tal} define a new distribution $\mathcal{D}_2$ over $\{-1,1\}^{2\cdot 2^{n\cdot m}}$ which encodes $m=\frac{32\ln(1/\delta)}{\varepsilon^2} = poly(n)$ parallel copies of the Raz-Tal problem, where $\delta = \frac{1}{2^{poly(n)}}$ is a success parameter.

\begin{fact}
    \label{fact:raz-tal-circuit-lower-bound}
    No boolean circuit of size $2^{poly(n)}$ and constant depth distinguishes $\mathcal{D}_2$ from the uniform distribution $U$.
\end{fact}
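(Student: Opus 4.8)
The plan is to prove the equivalent statement that every depth-$d = O(1)$, size-$S = 2^{\poly(n)}$ Boolean circuit $g$ has vanishing distinguishing advantage $|\E_{\mathcal{D}_2}[g] - \E_{\mathcal{U}}[g]|$. I would first isolate a \emph{single-copy} Fourier-analytic bound and then lift it to the $m$-fold product $\mathcal{D}_2$. For the single copy, view $g$ as a function of the $N' = 2\cdot 2^n$ bits $z' = (x',y')$ and expand $g = \sum_{T}\wh{g}(T)\,\chi_T$. Since $\mathcal{U}$ kills every nonconstant character, the advantage equals $\sum_{T \neq \emptyset}\wh{g}(T)\,\E_{\mathcal{D}}[\chi_T]$, so the whole problem reduces to understanding the ``moments'' $\E_{\mathcal{D}}[\chi_T]$ of the Forrelation distribution.

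The structure of $\mathcal{D}$ makes these moments extremely sparse. Because $x$ is a centered Gaussian of variance $\varepsilon$ and $y = H_{2^n}x$ with $H_{2^n}$ orthogonal, the only nonzero second moments are the cross-block covariances $\E[x_i y_j] = \pm\varepsilon\,2^{-n/2}$, while distinct coordinates inside a single block are uncorrelated. By Wick's theorem (and after checking that truncation to $[-1,1]$ and the Boolean rounding perturb the moments only at higher order), $\E_{\mathcal{D}}[\chi_T]$ is nonzero only when $T$ is \emph{balanced and even}, i.e. contains exactly $k$ coordinates from the $x$-block and $k$ from the $y$-block, in which case it equals $(\pm\varepsilon\,2^{-n/2})^{k}$ times a signed sum over the $k!$ perfect matchings (a Hadamard-submatrix permanent). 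Grouping by level, this yields a bound of the shape $|\E_{\mathcal{D}}[g] - \E_{\mathcal{U}}[g]| \le \sum_{k\ge 1}(\varepsilon\,2^{-n/2})^{k}\,P_k\,L_{2k}(g)$, where $L_{2k}(g) = \sum_{|T| = 2k}|\wh{g}(T)|$ is the level-$2k$ Fourier mass and $P_k$ bounds the matching sums.

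The technical heart is to control $L_{k}(g)$ for constant-depth circuits. Invoking Tal's sharpening of the Linial--Mansour--Nisan Fourier-concentration bound, a size-$S$, depth-$d$ circuit obeys a level-$k$ Fourier-growth estimate of the form $L_k(g) \le (c\log S)^{(d-1)k}\sqrt{\binom{N'}{k}}$. The point is that the factor $2^{-nk/2}$ coming from the Hadamard entries cancels the $\binom{N'}{2k}^{1/2}\approx 2^{nk}/\sqrt{(2k)!}$ growth of the Fourier mass, leaving each level suppressed by a fixed inverse polynomial in $n$; here the choice $\varepsilon = 1/24n$ together with the regime $d = O(1)$, $\log S = \poly(n)$ is exactly what makes the series converge geometrically. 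Summing gives a single-copy advantage $\gamma \le 1/\poly(n)$ that can be driven below any prescribed inverse polynomial.

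Finally I would amplify to $\mathcal{D}_2 = \mathcal{D}^{\otimes m}$ with $m = 32\ln(1/\delta)/\varepsilon^2 = \poly(n)$. The honest subtlety is that a naive hybrid over the $m$ blocks loses a factor of $m$, so one must either (i) take $\gamma$ small enough that $m\gamma \le 1/3$ still holds, bounding each hybrid step by hard-wiring the other $m-1$ copies (the restricted circuit is again depth-$d$ and size $\le S$, so the single-copy bound applies), or (ii) run the Wick/moment argument directly on the full $2\cdot 2^n\cdot m$-bit product, where independence of copies confines every cross-block matching to a single copy and prevents the advantage from growing with $m$. Either way the conclusion is that no constant-depth, $2^{\poly(n)}$-size circuit distinguishes $\mathcal{D}_2$ from $\mathcal{U}$. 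I expect the dominant difficulty to be the level-$k$ $\mathrm{AC}^0$ Fourier-growth bound with the sharp $(\log S)^{(d-1)k}$ dependence; once that estimate and Fact~\ref{fact:raz-tal-amplitudes} are in hand, the remaining work is Wick's theorem, routine rounding estimates, and the amplification bookkeeping.
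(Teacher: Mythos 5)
First, a framing point: the paper never proves this statement. It is imported verbatim as \cref{fact:raz-tal-circuit-lower-bound} from \cite{raz-tal}, where it is the main theorem, and the present paper only uses it as a black box to transfer the oracle separation to $\frac12$\textbf{BQP}. So your proposal is really an attempt to reprove the Raz--Tal circuit lower bound itself, and it takes a genuinely different route from theirs: you expand the distinguishing advantage as a sum of forrelated-Gaussian moments against \emph{all} Fourier levels of the circuit, whereas Raz--Tal never touch high moments. They write the Gaussian $z \sim \mathcal{G}$ as a sum of $t$ tiny independent increments and bound the drift of the (multilinear-extended, cube-bounded) circuit one increment at a time; Taylor expansion then makes only \emph{second} moments appear, so the only circuit-analytic input is Tal's level-two bound $L_{1,2}(g_\rho) \le (c\log S)^{2(d-1)}$ for restrictions $g_\rho$, giving a per-copy advantage of order $\varepsilon\,2^{-n/2}(c\log S)^{2(d-1)}$. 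Truncation and rounding are handled there by the fact that the multilinear extension of a Boolean function is bounded by $1$ on $[-1,1]^{N'}$ (rounding preserves multilinear moments \emph{exactly}, since each bit has conditional mean $z_i$; it is only truncation that perturbs Wick's formula).

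Your route, as written, has two concrete gaps, and the first is fatal. (1) The Fourier-growth estimate you invoke, $L_k(g) \le (c\log S)^{(d-1)k}\sqrt{\binom{N'}{k}}$, is not Tal's theorem: Tal's bound is $L_{1,k}(g)\le (c\log S)^{(d-1)k}$ with \emph{no} $\sqrt{\binom{N'}{k}}$ factor (the version with the binomial factor is essentially just Parseval). With your version the advertised cancellation is an arithmetic error: by your own displayed quantities, the level-$2k$ term carries $(2^{-n/2})^k$ from the covariances and $\binom{N'}{2k}^{1/2}\approx 2^{nk}/\sqrt{(2k)!}$ from the Fourier mass, and $2^{-nk/2}\cdot 2^{nk} = 2^{+nk/2}$, so the series diverges instead of converging geometrically. (2) Even after substituting Tal's true bound, the matching factor destroys the high levels. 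The only generic bound on your $P_k$ (a permanent of a $k\times k$ $\pm1$ matrix) is $k!$, and it is attained by Sylvester--Hadamard submatrices: take the $x$-indices supported on the first $n/2$ bits and the $y$-indices supported on the last $n/2$ bits, so every sign $(-1)^{i\cdot j}$ equals $+1$ and the permanent is exactly $k!$ at $k=2^{n/2}$. The level-$2k$ term is then bounded only by
\begin{align*}
    \bigl((c\log S)^{2(d-1)}\,\varepsilon\, k\, 2^{-n/2}\bigr)^k,
\end{align*}
which exceeds $1$ and grows super-exponentially once $k \gtrsim 2^{n/2}\varepsilon^{-1}(c\log S)^{-2(d-1)}$, while balanced levels run all the way up to $k = 2^n$. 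So term-by-term summation cannot close the argument, and no substantially better bound on Hadamard-submatrix permanents is available; this obstruction is exactly what forces Raz--Tal's incremental argument in place of a direct moment expansion. Your amplification step for $\mathcal{D}_2$ (the hybrid with hard-wired copies) is fine and standard; it is the single-copy bound where the proposal breaks.
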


We now give the $\frac12$\textbf{BQP} algorithm for the Raz-Tal problem. It is almost identical to the $\frac12$\textbf{BQP} algorithm for $2$-\textsc{Forrelation}.

\begin{theorem}
    \label{thm:raz-tal}
    $\frac12$\textbf{BQP} solves the Raz-Tal problem
\end{theorem}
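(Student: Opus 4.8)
The plan is to reduce the Raz-Tal problem to the $\frac12$\textbf{BQP} primitive already built for $2$-\textsc{Forrelation} in \cref{thm:2-forrelation}. A single draw from the Raz-Tal construction is a pair of sign vectors $(x,y) \in \{-1,1\}^{2^n} \times \{-1,1\}^{2^n}$, which I read as the truth tables of two phase oracles $f,g : \{0,1\}^n \to \{-1,1\}$ via $f(i) = x_i$ and $g(j) = y_j$. Under this identification the Raz-Tal forrelation $\Phi(x,y)$ of \cref{fact:raz-tal-amplitudes} is exactly the quantity $\Phi$ appearing in \cref{thm:2-forrelation}. Consequently, running the $2$-\textsc{Forrelation} circuit of \cref{thm:2-forrelation} once on $(f,g)$ and outputting $R = (-1)^{w\cdot z}$, where $w$ is the learned random string and $z$ the measured label, yields a $\pm 1$ random variable with $\E[R] = \Phi(x,y)^2$.

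First I would set up the estimator for the amplified distribution $\mathcal{D}_2$. Since $\mathcal{D}_2$ consists of $m = \poly(n)$ independent copies, I run the above circuit in parallel on independent random blocks $w_k$ (all learned at the end), repeating $\poly(n)$ times per copy, and collect all resulting $\pm 1$ outcomes $R_{k,j}$. These are i.i.d.\ and bounded, and their common mean is $\E_{(x,y)\sim\mathcal{D}}[\Phi^2]$ when the input is $\mathcal{D}_2$ and $\E_{(x,y)\sim\mathcal{U}}[\Phi^2]$ when the input is $\mathcal{U}$. The decision rule is to threshold the empirical average $\bar R$ of all collected outcomes.

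The separating gap comes from \cref{fact:raz-tal-amplitudes} together with Jensen's inequality. On the $\mathcal{D}$ side, $\E_\mathcal{D}[\Phi^2] \geq (\E_\mathcal{D}[\Phi])^2 \geq (\varepsilon/2)^2 = \Theta(1/n^2)$, while on the uniform side $\E_\mathcal{U}[\Phi^2] = 2^{-n}$. Thresholding $\bar R$ at $\tfrac12(\varepsilon/2)^2$ and applying a Hoeffding bound over the $\poly(n)$ i.i.d.\ samples shows that $\bar R$ concentrates tightly enough to separate the inverse-polynomial signal from the exponentially small one, succeeding with probability $\geq 1 - \delta$ for $\delta = 2^{-\poly(n)}$; this is exactly why $m$ is chosen of order $\ln(1/\delta)/\varepsilon^2$. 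All post-processing (computing each $R_{k,j}$ from the learned $w_k$ and the measured $z_k$, then averaging and thresholding) is polynomial time, so the whole procedure is a legal $\frac12$\textbf{BQP} computation.

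The main obstacle, and the one genuinely new point relative to \cref{thm:2-forrelation}, is that the $\frac12$\textbf{BQP} primitive only exposes $|\Phi|^2$ rather than $\Phi$ itself, since we have no clean ancilla to run a standard Hadamard test against a known eigenstate. The usual Raz-Tal analysis distinguishes $\mathcal{D}$ from $\mathcal{U}$ through the signed mean $\E_\mathcal{D}[\Phi] \geq \varepsilon/2$, so I must instead route this guarantee through Jensen's inequality to bound $\E_\mathcal{D}[\Phi^2]$, and then verify that an inverse-polynomial-versus-exponential gap survives the loss incurred by squaring (which only worsens the signal from $\Theta(1/n)$ to $\Theta(1/n^2)$, still polynomially bounded away from $2^{-n}$). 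Once this is checked, the remaining ingredients — reading the samples as phase oracles, composing the $m$ copies in parallel within $\frac12$\textbf{BQP}, and the Hoeffding concentration — are routine.
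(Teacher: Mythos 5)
Your proposal is correct in substance and follows essentially the same route as the paper: run the $2$-\textsc{Forrelation} circuit on the sample, output $R = (-1)^{w\cdot z}$ with $\E[R] = \Phi(x,y)^2$, lower-bound $\E_{(x,y)\sim\mathcal{D}}[\Phi(x,y)^2] \geq (\varepsilon/2)^2$ (the paper writes this as $\E[\Phi^2] = \Var[\Phi] + \E[\Phi]^2 \geq \E[\Phi]^2$, which is exactly your Jensen step), and then distinguish using the amplified distribution $\mathcal{D}_2$ with polynomial overhead, exactly as the paper does. The one point you dismiss as routine --- ``reading the samples as phase oracles'' --- is in fact the only place where the paper does extra work beyond \cref{thm:2-forrelation}: the Raz-Tal problem provides a \emph{single} oracle $U_{z'}$ for the concatenated string $z' = (x,y)$ of length $2\cdot 2^n$, so selecting the $x$-half versus the $y$-half requires an address qubit, and in $\frac12$\textbf{BQP} that qubit cannot be cleanly initialized to $\ket{0}$. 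The paper resolves this by querying $U_{z'}$, flipping the address bit with an $X$ gate, querying $U_{z'}$ again, and then observing that since $\Phi(x,y)$ is symmetric under exchanging $x$ and $y$, the output distribution is identical whether the (random) address bit started as $0$ or $1$. Your proof needs this symmetry observation (or an assume-the-address-bit-is-zero-and-verify argument, costing a factor of two in samples) to legitimately treat $f$ and $g$ as separately addressable phase oracles; with that one sentence added, your argument matches the paper's.
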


Consider the following circuit $C$, where $U_{z'}$ is the oracle for $z' = (x,y)$.
\begin{figure}[H]
    \centering
    \begin{quantikz}
        \ket{0} &&& \gate[2]{U_{z'}} & \gate{X} & \gate[2]{U_{z'}} && \\ 
        \ket{w} & \qwbundle{n} & \gate{H^{\otimes n}} && \gate{H^{\otimes n}} && \gate{H^{\otimes n}} & \meter{}
    \end{quantikz}
    \caption{The $\frac12$\textbf{BQP} circuit for the Raz-Tal problem}
\end{figure}
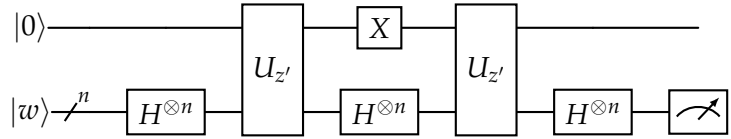
The first register encodes the most significant bit of the input to $U_{z'}$, so $C$ is equivalent to the following circuit, where $U_x, U_y$ encode $x$ and $y$, respectively. 
\begin{figure}[H]
    \centering
    \begin{quantikz}
        \ket{w} & \qwbundle{n} & \gate{H^{\otimes n}} & \gate{U_x} & \gate{H^{\otimes n}} & \gate{U_y} & \gate{H^{\otimes n}} & \meter{}
    \end{quantikz}
\end{figure}
Since $\Phi$ is symmetric in $x$ and $y$, we can assume $U_x$ is queried before $U_y$ without loss of generality. That is, it does not matter if the first qubit is $\ket{1}$ instead of $\ket{0}$ since the circuit will have the same output.
 
Now, note that this is exactly the circuit given for $2$-\textsc{Forrelation} \cref{subsec:forrelation}. The algorithm for the Raz-Tal problem is the same as before, where we output $R = (-1)^{w\cdot z}$ with $\E[R] = \Phi(x,y)^2$. Importantly, since the algorithm estimates $\Phi(x,y)^2$ rather than $\Phi(x,y)$, to place the Raz-Tal problem in $\frac12$\textbf{BQP}, we cannot immediately use \cref{fact:raz-tal-amplitudes}. We must show that $\E_{(x,y) \sim \mathcal{D}}[\Phi(x,y)^2]$ is sufficiently large to distinguish $D$ from $U$. This is captured by the following claim.
\begin{claim}
    \[\E_{(x,y) \sim \mathcal{D}}[\Phi(x,y)^2] \geq \frac{\varepsilon^2}{4}.\]
\end{claim}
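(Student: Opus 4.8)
The plan is to derive this second-moment bound directly from the first-moment estimate already recorded in \cref{fact:raz-tal-amplitudes}, namely $\E_{(x,y)\sim\mathcal{D}}[\Phi(x,y)] \ge \varepsilon/2$. The key observation is a one-line convexity argument: for any real-valued random variable the variance is non-negative, so $\Var(\Phi) = \E[\Phi^2] - (\E[\Phi])^2 \ge 0$, which is exactly Jensen's inequality for the convex map $t \mapsto t^2$. This lets us pass from the expectation of $\Phi$ to the expectation of $\Phi^2$ without re-examining the distribution $\mathcal{D}$ at all.

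Concretely, I would first invoke $\E_{(x,y)\sim\mathcal{D}}[\Phi(x,y)] \ge \varepsilon/2$ from \cref{fact:raz-tal-amplitudes}. Since $\varepsilon/2 > 0$, the lower bound on $\E[\Phi]$ is a positive number, and because $t \mapsto t^2$ is increasing on the non-negative reals, squaring preserves the inequality. Chaining this with Jensen's inequality gives
\[
  \E_{(x,y)\sim\mathcal{D}}[\Phi(x,y)^2] \;\ge\; \left(\E_{(x,y)\sim\mathcal{D}}[\Phi(x,y)]\right)^2 \;\ge\; \left(\frac{\varepsilon}{2}\right)^2 \;=\; \frac{\varepsilon^2}{4},
\]
which is the claim.

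There is essentially no obstacle here: the quantitative content is entirely carried by the first-moment bound of \cite{raz-tal}, and the step to the second moment is pure convexity. The only point that needs a moment's care is the sign of $\E[\Phi]$ — squaring only preserves order on the non-negative axis — but this is immediate since $\E[\Phi]$ is bounded below by the strictly positive quantity $\varepsilon/2$. I would emphasize in the write-up why this weaker second-moment bound suffices for the $\frac12$\textbf{BQP} algorithm: the estimator produces $R=(-1)^{w\cdot z}$ with $\E[R]=\Phi(x,y)^2$, so a lower bound of $\varepsilon^2/4$ on $\E_{\mathcal{D}}[\Phi^2]$, together with the upper bound $\E_{\mathcal{U}}[\Phi^2]=2^{-n}$ from \cref{fact:raz-tal-amplitudes}, gives a gap of size $\Omega(\varepsilon^2) = \Omega(1/n^2)$ between the $\mathcal{D}$ and $\mathcal{U}$ cases, which can be amplified by taking polynomially many samples and applying a Chernoff bound.
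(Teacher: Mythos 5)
Your proof is correct and is essentially identical to the paper's: both lower-bound $\E_{\mathcal{D}}[\Phi^2]$ by $\E_{\mathcal{D}}[\Phi]^2$ (the paper via the decomposition $\E[\Phi^2] = \Var[\Phi] + \E[\Phi]^2$ with $\Var \geq 0$, which is the same fact as your Jensen step) and then square the first-moment bound $\E_{\mathcal{D}}[\Phi] \geq \varepsilon/2$ from \cref{fact:raz-tal-amplitudes}. Your explicit remark that squaring preserves the inequality only because the lower bound $\varepsilon/2$ is positive is a small point the paper leaves implicit, but the argument is the same.
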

\begin{proof}
    \begin{align}
    \begin{split}
        \E_{(x,y) \sim \mathcal{D}}[\Phi(x,y)^2] &= \Var_{(x,y) \sim \mathcal{D}}[\Phi(x,y)] + \E_{(x,y) \sim \mathcal{D}}[\Phi(x,y)]^2 \\
        &\geq \E_{(x,y) \sim \mathcal{D}}[\Phi(x,y)]^2 \\
        &\geq \frac{\varepsilon^2}{4}.
    \end{split}
    \end{align}
\end{proof}
Now, since $\frac12$\textbf{BQP} can distinguish $\mathcal{D}$ from $\mathcal{U}$ with advantage proportional to $\varepsilon^2$, $\frac12$\textbf{BQP} can distinguish $\mathcal{D}_2$ from the uniform distribution. The analysis is the same as given in \cite{raz-tal} but with an additional polynomial overhead in the number of parallel copies of the Raz-Tal problem encoded in the distribution $\mathcal{D}_2$. Thus, we have proven \cref{thm:raz-tal} and shown that there exists an oracle separation between $\frac12$\textbf{BQP} and \textbf{PH}.

\section{Separating $\frac12$\textbf{BQP} and \textbf{BQP}}
\label{subsec:bqp-separation}

So far we showed that $\frac{1}{2}$\textbf{BQP} captures many of the applications of quantum computing which we believe provide speedup over classical computation? Can $\frac{1}{2}$\textbf{BQP} simulate arbitrary quantum computations? We already saw that the usual approach we used to solve \textsc{Forrelation} in this model does not work for 3-\textsc{Forrelation}. Furthermore we do not know how to perform $\frac12$\textbf{BQP} computations with more than $O(\log n)$ clean qubits. Hence we would expect that $\frac{1}{2}$\textbf{BQP} is strictly weaker than \textbf{BQP}. In this section we provide more rigorous justifications for this statement. 

\cite{Knill_1998} gave a unitary oracle separation between \textbf{BQP} and \textbf{DQC}1 based on distinguishing two unitaries which have similar traces. This setup translates directly to a unitary oracle separation of \textbf{BQP} and $\frac12$\textbf{BQP}. We can also extend this setup to a classical oracle separation based on a ``lifting'' of Simon's problem given in \cite{chen2023complexity}. Before giving the theorem, we introduce the following lemmas, which will be useful in our analysis.

\begin{lemma}
    Let $A,B$ be $2^n \times 2^n$ unitary matrices with $\frac{1}{2^n}\Tr|A-B| \leq \varepsilon$.
    Then 
    \[
    \frac{1}{2^n}\Re\left(\Tr(A^\dagger B)\right) \geq 1 - \varepsilon.
    \]
\end{lemma}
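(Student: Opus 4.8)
The plan is to route through the Frobenius (Schatten-$2$) norm, which interacts cleanly with traces of unitaries, and then bound it against the trace norm $\Tr|A-B| = \norm{A-B}_1$ by exploiting the fact that all singular values of $A-B$ are at most $2$.

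First I would expand the squared Frobenius norm and use unitarity. Since $A^\dagger A = B^\dagger B = I$,
\begin{align*}
    \norm{A-B}_2^2 = \Tr\big((A-B)^\dagger(A-B)\big) = \Tr(A^\dagger A) + \Tr(B^\dagger B) - 2\Re\Tr(A^\dagger B) = 2\cdot 2^n - 2\Re\Tr(A^\dagger B).
\end{align*}
Rearranging gives $\tfrac{1}{2^n}\Re\Tr(A^\dagger B) = 1 - \tfrac{1}{2\cdot 2^n}\norm{A-B}_2^2$, so it suffices to upper bound the Frobenius norm by the trace norm.

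Next I would compare the two Schatten norms via the singular values $\sigma_1,\dots,\sigma_{2^n}$ of $A-B$, for which $\norm{A-B}_2^2 = \sum_i \sigma_i^2$ and $\Tr|A-B| = \norm{A-B}_1 = \sum_i \sigma_i$. Because $A$ and $B$ are unitary, the operator norm obeys $\norm{A-B}_\infty \leq \norm{A}_\infty + \norm{B}_\infty = 2$, so every $\sigma_i \leq 2$ and hence $\sigma_i^2 \leq 2\sigma_i$. Summing over $i$ yields $\norm{A-B}_2^2 \leq 2\norm{A-B}_1$.

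Combining the two steps gives
\begin{align*}
    \frac{1}{2^n}\Re\Tr(A^\dagger B) = 1 - \frac{\norm{A-B}_2^2}{2\cdot 2^n} \geq 1 - \frac{\norm{A-B}_1}{2^n} \geq 1 - \varepsilon,
\end{align*}
where the final inequality is the hypothesis $\tfrac{1}{2^n}\Tr|A-B| \leq \varepsilon$. The only non-mechanical ingredient is the bound $\norm{A-B}_2^2 \leq 2\norm{A-B}_1$, and this is exactly where unitarity enters (it caps the singular values at $2$); the rest is a direct computation. I do not expect any real obstacle here, as the lemma is essentially a norm-comparison identity tailored to the unitary setting.
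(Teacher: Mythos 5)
Your proof is correct and follows essentially the same route as the paper's: both expand $\Tr\big((A-B)^\dagger(A-B)\big) = 2\cdot 2^n - 2\Re\Tr(A^\dagger B)$ via unitarity and then bound this quantity by $2\norm{A-B}_1$ using the fact that the operator norm of a difference of unitaries is at most $2$. The only cosmetic difference is that the paper invokes H\"older's inequality for this last step, whereas you prove the needed Schatten-norm comparison directly from the singular values ($\sigma_i \leq 2 \Rightarrow \sigma_i^2 \leq 2\sigma_i$), which is just an inline proof of the same inequality.
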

\begin{proof}
First, by H{\"o}lder's inequality, 
    \begin{align}
    \begin{split}
        \frac{1}{2^n} |\Tr((A - B)^\dagger (A - B))| &\leq \frac{1}{2^n}\norm{A-B}_1 \norm{(A-B)^\dagger}_\infty \\
        &= \frac{1}{2^n}\Tr|A-B| \cdot \norm{A-B}_\infty \\
        &\leq \frac{1}{2^n}\Tr|A-B| \cdot (\norm{A}_\infty + \norm{B}_\infty) \\
        &\leq 2\varepsilon.
    \end{split}
    \end{align}
    Now, 
    \begin{align}
    \begin{split}
        \frac{1}{2^n} |\Tr((A - B)^\dagger (A - B))|
        &= \frac{1}{2^n} |\Tr(2I - (A^\dagger B + B^\dagger A))| \\
        &= 2 - \frac{2}{2^n}\Re(\Tr(A^\dagger B)) \leq 2\varepsilon
    \end{split}
    \end{align}
    so
    \begin{align}
        \frac{1}{2^n}\Re\left(\Tr(A^\dagger B)\right) \geq 1 - \varepsilon.
    \end{align}
\end{proof}

\begin{lemma}
    \label{lemma:tr}
    Let $A, B$ be $2^n \times 2^n$ unitary matrices and $0 \leq \varepsilon_1,\varepsilon_2 < 1$ such that $\frac{1}{2^n}|\Tr(A)| \geq 1 - \varepsilon_1$ and $\frac{1}{2^n} \Re(\Tr(B)) \geq 1 - \varepsilon_2$.
    Then
    \[ \frac{1}{2^n} |\Tr(AB)| \geq 1 - \varepsilon_1 - \sqrt{2 \varepsilon_2}. \]
    Furthermore, if $B$ is diagonal with eigenvalues $\pm 1$, then 
    \[ \frac{1}{2^n} |\Tr(AB)| \geq 1 - \varepsilon_1 - 2\varepsilon_2. \]
\end{lemma}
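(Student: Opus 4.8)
The plan is to treat $B$ as a small perturbation of the identity in trace norm, so that $\Tr(AB)$ stays close to $\Tr(A)$, whose modulus is already large by hypothesis. I would write $\Tr(AB) = \Tr(A) + \Tr(A(B-I))$ and control the error term $\Tr(A(B-I))$ by the Schatten-norm H\"older inequality. Since $A$ is unitary, $\norm{A}_\infty = 1$, so $|\Tr(A(B-I))| \leq \norm{A}_\infty \norm{B-I}_1 = \norm{B-I}_1$. The whole problem then reduces to bounding the normalized trace distance $\tfrac{1}{2^n}\norm{B-I}_1$ in terms of $\varepsilon_2$.

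For this, I would diagonalize the unitary $B$ with eigenvalues $e^{i\theta_j}$ and use $|e^{i\theta_j}-1| = \sqrt{2(1-\cos\theta_j)}$, giving $\tfrac{1}{2^n}\norm{B-I}_1 = \tfrac{1}{2^n}\sum_j \sqrt{2(1-\cos\theta_j)}$. The hypothesis $\tfrac{1}{2^n}\Re(\Tr B) \geq 1-\varepsilon_2$ is exactly the statement $\tfrac{1}{2^n}\sum_j (1-\cos\theta_j) \leq \varepsilon_2$. Applying Jensen's inequality (concavity of the square root) pulls the average inside the root, yielding $\tfrac{1}{2^n}\norm{B-I}_1 \leq \sqrt{2\cdot \tfrac{1}{2^n}\sum_j (1-\cos\theta_j)} \leq \sqrt{2\varepsilon_2}$. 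Combining with the reverse triangle inequality gives $\tfrac{1}{2^n}|\Tr(AB)| \geq \tfrac{1}{2^n}|\Tr(A)| - \tfrac{1}{2^n}|\Tr(A(B-I))| \geq (1-\varepsilon_1) - \sqrt{2\varepsilon_2}$, which is the first claim.

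For the sharper bound when $B$ is diagonal with eigenvalues $\pm 1$, the square-root step becomes unnecessary: each $\theta_j \in \{0,\pi\}$, so $B_{jj}$ is real and $|B_{jj}-1| = 1 - B_{jj}$ is already nonnegative. Hence $\tfrac{1}{2^n}\norm{B-I}_1 = 1 - \tfrac{1}{2^n}\Tr(B) \leq \varepsilon_2$ directly, and the same triangle-inequality step yields $\tfrac{1}{2^n}|\Tr(AB)| \geq 1 - \varepsilon_1 - \varepsilon_2$, which in particular implies the stated $1 - \varepsilon_1 - 2\varepsilon_2$.

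I expect the main obstacle (and the only genuinely nontrivial step) to be the passage from the real-trace constraint on $B$ to a bound on the trace distance $\tfrac{1}{2^n}\norm{B-I}_1$: this is where the concavity/Jensen argument produces the $\sqrt{2\varepsilon_2}$ loss in the general case. The same step also explains cleanly why the $\pm 1$-spectrum case is stronger, since there the trace-norm deficit is linear rather than square-root in the real-trace deficit, so no concavity bound is needed.
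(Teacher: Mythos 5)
Your proof is correct and follows essentially the same route as the paper's: the same decomposition $\Tr(AB) = \Tr(A) + \Tr(A(B-I))$, the same H\"older bound $|\Tr(A(B-I))| \leq \norm{A}_\infty\norm{B-I}_1 = \norm{B-I}_1$, and the same eigenvalue computation of $\norm{B-I}_1$, with your explicit Jensen step being a cleaner justification of the paper's informal claim that the sum $\frac{1}{2^n}\sum_i \sqrt{2-2\Re(\lambda_i(B))}$ is maximized when all $\Re(\lambda_i(B)) = 1-\varepsilon_2$. In the $\pm 1$ case your linear bound $\frac{1}{2^n}\norm{B-I}_1 = 1 - \frac{1}{2^n}\Tr(B) \leq \varepsilon_2$ is in fact slightly sharper than the paper's stated $2\varepsilon_2$ (and tidier than the paper's floor-function expression, which contains an apparent typo), but this is a refinement of the same argument rather than a different approach.
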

\begin{proof}
    \begin{align}
    \begin{split}
        \frac{1}{2^n} |\Tr(AB)| &= \frac{1}{2^n} |\Tr(A((B-I) + I))| \\
        &= \frac{1}{2^n} |\Tr(A) + \Tr(A(B-I))| \\
        &\geq \frac{1}{2^n}|\Tr(A)| - \frac{1}{2^n}|\Tr(A(B-I))| \\
        &\geq 1 - \varepsilon_1 - \frac{1}{2^n}|\Tr(A(B-I))|.
    \end{split}
    \end{align}
    By H{\"o}lder's inequality,
    \begin{align}
    \begin{split}
        \frac{1}{2^n}|\Tr(A(B-I))| &\leq \frac{1}{2^n} \norm{A^\dagger}_\infty \norm{B-I}_1 = \frac{1}{2^n}\norm{B-I}_1.
    \end{split}
    \end{align}
    Now let $\lambda_i(M)$ denote the $i$-th eigenvalue of a matrix $M$. Then

    \begin{align}
    \begin{split}
        \frac{1}{2^n}\norm{B-I}_1 &= \frac{1}{2^n}\sum_{i=1}^{2^n}|\lambda_i(B-I)| \\
        &= \frac{1}{2^n}\sum_{i=1}^{2^n}|1 - \lambda_i(B)| \\
        &= \frac{1}{2^n}\sum_{i=1}^{2^n}\sqrt{2 - 2\Re(\lambda_i(B))}.
    \end{split}
    \end{align}
    Subject to the condition $\frac{1}{2^n}\sum_{i=1}^{2^n} \Re(\lambda_i(B)) \geq 1 - \varepsilon_2$ (assumed in the statement) the above expression takes its maximum value when $\Re(\lambda_i(B)) = 1 - \varepsilon_2$ for all $i$. Thus,
    \begin{align}
    \begin{split}
        \frac{1}{2^n}\norm{B - I}_1 &\leq \frac{1}{2^n}\sum_{i=1}^{2^n}\sqrt{2 - 2(1 - \varepsilon_2)}
        = \frac{1}{2^n}\sum_{i=1}^{2^n}\sqrt{2 \varepsilon_2} = \sqrt{2\varepsilon_2}.
    \end{split}
    \end{align}
    Combining (11), (12), and (16) yields the desired result,
    \begin{align}
        \frac{1}{2^n}|\Tr(AB)| \geq 1 - \varepsilon_1 - \sqrt{2 \varepsilon_2}.
    \end{align}
    In the case where $B$ is diagonal with eigenvalues $\pm 1$, 
    \begin{align}
        \frac{1}{2^n} \norm{B - I}_1 = \frac{2}{2^n}\lfloor 2^n\varepsilon_2 \rfloor \leq 1 - 2\varepsilon_2,
    \end{align}
    so 
    \begin{align}
        \frac{1}{2^n}|\Tr(AB)| \geq 1 - \varepsilon_1 - 2 \varepsilon_2.
    \end{align}
\end{proof}

\begin{definition}[Distinguishing with advantage \cite{raz-tal}]
    Let $\mathcal{D}_1$ and $\mathcal{D}_2$ be probability distributions. We say that an algorithm $A$ distinguishes between $\mathcal{D}_1$ and $\mathcal{D}_2$ with \textit{advantage} $\delta$ if 
    \begin{align*}
        \Bigl| \Pr_{x \sim \mathcal{D}_1}[A \textrm{ accepts } x] - \Pr_{x \sim \mathcal{D}_2}[A \textrm{ accepts } x] \Bigr| \geq \delta.
    \end{align*}
    Furthermore, we say a model of computation distinguishes between unitaries $U$ and $U'$ with advantage $\delta$ if there exists an algorithm $A$ with access to oracle $O$ such that 
    \begin{align*}
        \Bigl| \Pr[A \textrm{ accepts} \mid O = U] - \Pr[A \textrm{ accepts} \mid O = U'] \Bigr| \geq \delta.
    \end{align*}
\end{definition}

We now give the theorem.

\begin{theorem}
    \label{thm:bqp-separation}
    Let $U$ and $U'$ be black-box oracles implementing $2^m \times 2^m$ unitaries such that 
    \[\frac{1}{2^m}\Tr|U - U'| \leq \varepsilon.\]
    Then $\frac12$\textbf{BQP} cannot distinguish between $U$ and $U'$ with advantage $\geq \delta$ using fewer than $\Omega(\frac{\delta^2}{\sqrt{\varepsilon}})$ queries.
\end{theorem}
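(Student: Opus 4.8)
The plan is to run a standard hybrid (BBBV-style) argument in the Random-BQP formulation of \cref{subsec:random-bqp}, with one essential twist: because the input is a \emph{uniformly random} basis state, the single-query perturbation is controlled in the Schatten-$2$ (hence Schatten-$1$) norm rather than the operator norm, which is exactly what makes the trace-distance hypothesis usable.

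First, fix a $\frac12$\textbf{BQP} distinguisher making $T$ queries. For a fixed random string $w$ its evolution is $C_O\ket{w}$, where $C_O = V_T\,O_q\,V_{T-1}\cdots V_1\,O_q\,V_0$, the $V_i$ are fixed (oracle-free) circuits, $O\in\{U,U'\}$, and $O_q = O\otimes I$ is the oracle acting on the $m$-qubit query register. Writing the deterministic post-processing acceptance indicator as $g(w,z)\in\{0,1\}$, the acceptance probability is $p_O = \frac{1}{2^n}\sum_{w,z}\abs{\bra{z}C_O\ket{w}}^2 g(w,z)$, so the advantage obeys $\abs{p_U-p_{U'}}\le \frac{1}{2^n}\sum_w \TV(D_w^U,D_w^{U'})$, where $D_w^O$ is the output distribution for fixed $w$. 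Since the trace distance of pure states bounds this total variation, $\TV(D_w^U,D_w^{U'})\le \norm{C_U\ket{w}-C_{U'}\ket{w}}$, and it suffices to bound the $w$-average of this state distance. This step fully absorbs the $w$-dependent post-processing, so no further care is needed on that front.

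Next, introduce hybrids $C_k$ that use $U'$ on queries $1,\dots,k$ and $U$ on queries $k{+}1,\dots,T$, so $C_0=C_U$ and $C_T=C_{U'}$. Since $C_{k-1}$ and $C_k$ differ only in the $k$-th query, writing $\ket{\xi_w^k}$ for the state just before that query (computed with the fixed oracle $U'$ on queries $1,\dots,k-1$) and using unitarity of the post-query part gives $\norm{C_{k-1}\ket{w}-C_k\ket{w}}=\norm{(U_q-U'_q)\ket{\xi_w^k}}$, whence $\norm{C_U\ket{w}-C_{U'}\ket{w}}\le \sum_{k=1}^T \norm{(U_q-U'_q)\ket{\xi_w^k}}$ by the triangle inequality. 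Now average over $w$. Writing $\ket{\xi_w^k}=V_k\ket{w}$ for a fixed unitary $V_k$, the states $\{\ket{\xi_w^k}\}_{w\in\{0,1\}^n}$ form an orthonormal basis, so by concavity of the square root together with the identity $\frac{1}{2^n}\sum_w\norm{M\ket{\xi_w^k}}^2=\frac{1}{2^n}\Tr[M^{\dagger}M]$ (valid for any $M$ on an orthonormal basis),
\[
\frac{1}{2^n}\sum_w \norm{(U_q-U'_q)\ket{\xi_w^k}}\le \Big(\frac{1}{2^n}\Tr\big[(U_q-U'_q)^{\dagger}(U_q-U'_q)\big]\Big)^{1/2}.
\]
Because $U_q-U'_q=(U-U')\otimes I_{n-m}$ the trace factorizes and equals $\frac{1}{2^m}\norm{U-U'}_2^2$. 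Finally, unitarity forces every singular value of $U-U'$ into $[0,2]$, so $\norm{U-U'}_2^2\le 2\Tr\abs{U-U'}\le 2\cdot 2^m\varepsilon$, making each averaged term at most $\sqrt{2\varepsilon}$. Summing over the $T$ hybrids yields $\abs{p_U-p_{U'}}\le T\sqrt{2\varepsilon}$, so advantage $\ge\delta$ forces $T=\Omega(\delta/\sqrt{\varepsilon})$; since $\delta\le 1$ this is at least the claimed $\Omega(\delta^2/\sqrt{\varepsilon})$.

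The conceptual heart --- and the only real obstacle --- is the averaging step. A naive BBBV bound would replace $\norm{(U_q-U'_q)\ket{\xi_w^k}}$ by $\norm{U-U'}_\infty$, which can be as large as $2$ even when $\varepsilon$ is tiny; indeed ordinary \textbf{BQP} distinguishes such oracles in $O(1)$ queries by preparing a state supported on the subspace where $U$ and $U'$ disagree. The point is that a $\frac12$\textbf{BQP} machine, starting from a maximally mixed (random) input, cannot concentrate on that subspace: averaging the per-$w$ operator-norm quantity over the orthonormal basis $\{\ket{\xi_w^k}\}$ turns it into a normalized trace, and bounding the Schatten-$2$ norm by the Schatten-$1$ norm is what converts the hypothesis $\frac{1}{2^m}\Tr\abs{U-U'}\le\varepsilon$ into the final query lower bound.
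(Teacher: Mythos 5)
Your proof is correct, and it takes a genuinely different route from the paper's. The paper stays in the bipartite EPR picture: it expresses the fidelity of the two global output states as a normalized trace of a product of unitaries,
\begin{align*}
F_r \;=\; \frac{1}{2^n}\Bigl|\Tr\bigl(V_r^\dagger U^\dagger V_{r-1}^\dagger \cdots V_0^\dagger V_0 \cdots V_{r-1} U' V_r\bigr)\Bigr|,
\end{align*}
lower-bounds it by induction on the number of queries using the trace-product lemma (\cref{lemma:tr}), obtaining $F_r \geq 1 - O(r\sqrt{\varepsilon})$, and only then converts fidelity to trace distance, which costs a square root and yields $\delta \leq O(r^{1/2}\varepsilon^{1/4})$. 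You instead run an additive, BBBV-style telescoping in the Random-BQP picture of \cref{subsec:random-bqp}; the step where the two arguments share their essential physics is your averaging step, where summing $\norm{(U_q-U_q')\ket{\xi_w^k}}^2$ over the orthonormal basis $\{\ket{\xi_w^k}\}_w$ converts the per-query disturbance into $\frac{1}{2^m}\norm{U-U'}_2^2 \leq 2\varepsilon$ --- precisely the mechanism by which the maximally mixed input prevents the machine from concentrating on the subspace where the oracles disagree. Your route buys a quantitatively stronger conclusion: avoiding the fidelity-to-distance conversion gives $\delta \leq T\sqrt{2\varepsilon}$, hence $T = \Omega(\delta/\sqrt{\varepsilon})$, which implies the stated $\Omega(\delta^2/\sqrt{\varepsilon})$ since $\delta \leq 1$. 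The paper's route buys something else: \cref{lemma:tr} has a sharper, linear-in-$\varepsilon_2$ variant when the oracle difference involves diagonal $\pm 1$ matrices, and that refinement is exactly what powers the $\Omega(2^n\delta^2)$ unstructured-search bound of \cref{thm:grover}; your Schatten-2 averaging, applied to two Grover phase oracles, gives only $\Omega(2^{n/2}\delta)$ (the squared entries of $O_f - O_g$ sum to $8$, so each hybrid term is $\sqrt{8/2^n}$), which would not rule out a Grover-type speedup --- this square-root loss is inherent to telescoping Euclidean distances rather than multiplying fidelities. Finally, the paper's argument allows an arbitrary terminal POVM, while yours handles computational-basis measurement with arbitrary $w$-dependent post-processing; the latter is all the $\frac12$\textbf{BQP} model requires, and your total-variation step absorbs it cleanly, so this is a difference in generality, not a gap.
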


\begin{proof}
    Consider a general $\frac12$\textbf{BQP} circuit on two halves of the EPR state, which performs $r$ rounds of queries to a unitary oracle $O$, and then measures according to an arbitrary POVM. Note that in the most general setting, we can use $n = m + l$ qubits in each half of our EPR state, where the oracle $O$ acts on $m$ qubits and $l$ qubits are used as workspace. However, we can view this as the situation in which the oracle is given by $O' = O \otimes I_{2^l}$ acting on $n$ qubits, in which case $\frac{1}{2^n}\Tr|(U - U') \otimes I_{2^l}| = \frac{2^l}{2^{m+l}}\Tr|U - U'| \leq \varepsilon$. Thus, it is sufficient to analyze the situation without additional workspace qubits.
    \begin{figure}[H]
        \centering
        \begin{quantikz}
            \ket{\Phi}_L & \qwbundle{n} & \gate{V_r} & \gate{O} & \gate{V_{r-1}} & \gate{O} & \; \dots \; & \gate{O} & \gate{V_0} & \meter{} \\
            \ket{\Phi}_R & \qwbundle{n} &&&&&&&& \meter{} \\
        \end{quantikz}
    \end{figure}
    Here, 
    \[\ket{\Phi}_{LR} = \frac{1}{\sqrt{2^n}} \sum_{x \in \{0,1\}^n} \ket{x}\ket{x}.\]
    We show that the resulting output states corresponding to $O = U$ and $O = U'$ have high fidelity, and therefore the probability of distinguishing these states is small for every choice of POVM.
     
    Let $\ket{\psi_U}$ and $\ket{\psi_{U'}}$ be the result of running the given circuit where $O = U$ and $O = U'$, respectively. Then the fidelity between $\ket{\psi_{U}}$ and $\ket{\psi_{U'}}$ after $r$ rounds is given by
    \begin{align}
    \begin{split}
        F_r(\ket{\psi_{U}},\ket{\psi_{U'}}) &= \frac{1}{2^n} \Bigl| \sum_{x,y \in \{0,1\}^n} \bra{x} V_r^\dagger U^\dagger V_{r-1}^\dagger \dots V_0^\dagger V_0 \dots V_{r-1} U' V_r \ket{y} \otimes \braket{x}{y} \Bigr| \\
        &= \frac{1}{2^n} \Bigl| \Tr(V_r^\dagger U^\dagger V_{r-1}^\dagger \dots V_0^\dagger V_0 \dots V_{r-1} U' V_r) \Bigr|.
    \end{split}
    \end{align}
    We now show by induction on the number of queries $r$ that $F_r(\ket{\psi_{U}},\ket{\psi_{U'}}) \geq 1 - O(r \sqrt{\varepsilon})$. For one query,
    \begin{align}
    \begin{split}
        F_1(\ket{\psi_{U}},\ket{\psi_{U'}}) &= \Bigl| \frac{1}{2^n} \Tr(V_1^\dagger U^\dagger V_0^\dagger V_0 U' V_1) \Bigr| \\
        &= \frac{1}{2^n}|\Tr(U^\dagger U')| \\
        &\geq \frac{1}{2^n}\Re(\Tr(U^\dagger U')) \geq 1 - \varepsilon \geq 1 - \sqrt{\varepsilon}.
    \end{split}
    \end{align}
    Now, suppose that
    \begin{align}
        F_{r-1}(\ket{\psi_{U}},\ket{\psi_{U'}}) = \frac{1}{2^n} \Bigl| \Tr(V_{r-1}^\dagger U^\dagger V_{r-2}^\dagger \dots V_0^\dagger V_0 \dots V_{r-2} U' V_{r-1}) \Bigr| \geq 1 - O(r\sqrt{\varepsilon}).
    \end{align}
    Then
    \begin{align}
    \begin{split}
        F_r(\ket{\psi_{U}},\ket{\psi_{U'}}) &= \frac{1}{2^n} \Bigl| \Tr(V_r^\dagger U^\dagger V_{r-1}^\dagger \dots V_0^\dagger V_0 \dots V_{r-1} U' V_r) \Bigr| \\
        &= \frac{1}{2^n} \Bigl| \Tr(U' U^\dagger V_{r-1}^\dagger \dots V_0^\dagger V_0 \dots V_{r-1}) \Bigr| \\
        &\geq 1 - O(r\sqrt{\varepsilon}) - \sqrt{2\varepsilon}  = 1 - O(r\sqrt{\varepsilon}).
    \end{split}
    \end{align}
    We now use the relationship between trace distance and fidelity to prove the theorem as stated.
    \begin{align}
    \begin{split}
        D_r(\ket{\psi_U}, \ket{\psi_{U'}}) &= \sqrt{1 - F_r(\ket{\psi_U}, \ket{\psi_{U'}})^2} \\
        &\leq O(r^{1/2} \varepsilon^{1/4}) = Cr^{1/2} \varepsilon^{1/4}
    \end{split}
    \end{align}
    for some constant $C$.
    Thus, if $\frac12$\textbf{BQP} can distinguish $U$ and $U'$ with advantage $\delta$, then $D_r \geq \delta$ and therefore
    \begin{align}
        r \geq \frac{\delta^2}{C\sqrt{\varepsilon}} = \Omega(\frac{\delta^2}{\sqrt{\varepsilon}}).
    \end{align}
\end{proof}

As a direct application of the theorem, $\frac12$\textbf{BQP} cannot distinguish between $U,U'$ using a sub-exponential number of queries when $\frac{1}{2^n}\Tr|U - U'| \leq O(\frac{1}{2^n})$.
 
This result allows us to give an explicit classical oracle separation between $\frac12$\textbf{BQP} and \textbf{BQP} using a construction from \cite{chen2023complexity} based on Simon's problem. The construction is as follows:
 
Let $f: \{0,1\}^n \rightarrow \{0,1\}^n$ be a function satisfying the conditions for Simon's problem, and let $\widetilde{f}: \{0,1\}^{2n} \rightarrow \{0,1\}^n$ given by 
\begin{align*}
    \widetilde{f}(x) =
    \begin{cases}
        f(x_1,\dots, x_n) & \textrm{if } x_{n+1},\dots,x_{2n} = 0 \\
        0 & \textrm{otherwise}
    \end{cases}.
\end{align*}
Here, the oracle $O_{\widetilde{f}}$ acts as
\[O_{\widetilde{f}} (\ket{x}\ket{y}) = \ket{x} | y \oplus \widetilde{f}(x) \rangle.\]
It is not hard to see that for two functions $f$ and $g$ with different hidden strings, \[ \frac{1}{2^{2n}} \Tr|O_{\widetilde{f}} - O_{\widetilde{g}}| \leq O(\frac{1}{2^n}).\] Thus, $\frac12$\textbf{BQP} cannot distinguish $f$ from $g$ and, therefore, cannot find the hidden string using a polynomial number of queries.

\subsection{$\frac{1}{2}$\textbf{BQP} lower bounds on unstructured search}
\label{subsec:grover}
We have seen that $\frac{1}{2}$\textbf{BQP} can solve many of the problems that exhibit quantum speedup over classical computations. It is reasonable to ask whether $\frac{1}{2}$\textbf{BQP} can achieve the $O(\sqrt{2^n})$ speedup for unstructured search given by Grover's algorithm \cite{grover1996fast}. It turns out the answer to this question is no. In particular, by a slight modification to the proof of \cref{thm:bqp-separation} we obtain an $\Omega(2^n)$ lower bound for unstructured search in the oracle setting. 

\begin{definition}[Unstructured search]
    Let $f: \{0,1\}^{n} \rightarrow \{0,1\}$ with the promise that there exists $\omega \in \{0,1\}^n$ such that
    \begin{align*}
        & f(x) = 1 \iff x = \omega.
    \end{align*}
    Given phase oracle $O_f$ which implements the function $f$, the task is to find $\omega$.
\end{definition}

\begin{theorem} [Lower bound for unstructured search]
    \label{thm:grover}
    Let $f,g$ be functions satisfying the conditions for unstructured search as defined above, and let $O_f, O_g$ be phase oracles implementing the functions $f,g $, respectively. Then $\frac{1}{2}$\textbf{BQP} cannot distinguish between $O_f$ and $O_g$ with advantage $\geq \delta$ using fewer than $\Omega(2^n \delta^2)$ queries.
\end{theorem}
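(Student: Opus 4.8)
The plan is to reuse the fidelity-based hybrid argument from the proof of \cref{thm:bqp-separation} almost verbatim, but to exploit the special structure of phase oracles to obtain a quadratically stronger query bound. First I would observe that, since $f$ and $g$ each mark a single element $\omega_f \neq \omega_g$, the phase oracles $O_f$ and $O_g$ are diagonal in the computational basis with every entry equal to $\pm 1$. Their difference is diagonal and nonzero only at the coordinates $\omega_f$ and $\omega_g$, where the entries are $-2$ and $+2$ respectively. Hence $\frac{1}{2^n}\Tr|O_f - O_g| = \frac{4}{2^n}$, placing us in the regime $\varepsilon = O(1/2^n)$ of \cref{thm:bqp-separation}.

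A direct application of \cref{thm:bqp-separation} would only give $\Omega(\delta^2/\sqrt{\varepsilon}) = \Omega(\delta^2 2^{n/2})$, which is too weak. The improvement comes from the second branch of \cref{lemma:tr}: when the matrix $B$ is diagonal with eigenvalues $\pm 1$, the per-query fidelity loss is $2\varepsilon_2$ rather than $\sqrt{2\varepsilon_2}$. I would run the same induction on the number of rounds $r$, writing the fidelity as $F_r = \frac{1}{2^n}|\Tr(O_g O_f^\dagger T_{r-1})|$, where $T_{r-1}$ is the accumulated product of the intervening unitaries and oracle queries from the previous rounds so that $\frac{1}{2^n}|\Tr(T_{r-1})| = F_{r-1}$. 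Setting $A = T_{r-1}$ and $B = O_g O_f^\dagger$, the first lemma (via $\frac{1}{2^n}\Re\Tr(O_f^\dagger O_g) \geq 1-\varepsilon$) supplies $\varepsilon_2 = \varepsilon$, while the induction hypothesis supplies $\varepsilon_1 = O((r-1)\varepsilon)$. The crucial point is that $B = O_g O_f^\dagger = O_g O_f$ is a product of two real diagonal $\pm 1$ matrices, hence itself diagonal with $\pm 1$ spectrum, so the stronger branch of \cref{lemma:tr} applies and each query costs only an additive $O(\varepsilon)$.

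Carrying the induction through then yields $F_r \geq 1 - O(r\varepsilon) = 1 - O(r/2^n)$. Converting fidelity to trace distance via $D_r = \sqrt{1 - F_r^2} \leq O(\sqrt{r/2^n})$, and demanding $D_r \geq \delta$ in order to distinguish $O_f$ from $O_g$ with advantage $\delta$, gives $\delta \leq O(\sqrt{r/2^n})$, i.e. $r = \Omega(2^n \delta^2)$, exactly as claimed. Since the all-workspace reduction at the start of the proof of \cref{thm:bqp-separation} tensors the oracle with an identity and preserves the normalized trace distance, the same argument covers the general circuit with ancillae.

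The hard part — really the only nontrivial point — is justifying the use of the stronger branch of \cref{lemma:tr} at every step of the induction, that is, verifying that the operator multiplying the previous-round product is genuinely diagonal with $\pm 1$ eigenvalues. This rests entirely on the fact that phase oracles, unlike the generic unitaries in \cref{thm:bqp-separation}, are simultaneously diagonal with $\pm 1$ spectrum and commute; this is precisely what upgrades the $\sqrt{\varepsilon}$ dependence to a linear $\varepsilon$ dependence and hence the bound from $\Omega(\delta^2 2^{n/2})$ to the optimal $\Omega(\delta^2 2^n)$, matching the classical query complexity of unstructured search and thereby ruling out any Grover-type speedup in $\frac12$\textbf{BQP}.
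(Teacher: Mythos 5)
Your proposal is correct and follows essentially the same route as the paper, whose proof is a one-line remark: rerun the hybrid argument of \cref{thm:bqp-separation}, but invoke the diagonal-with-$\pm 1$-eigenvalues branch of \cref{lemma:tr} to replace the $\sqrt{\varepsilon}$ per-query loss with a linear $\varepsilon$ loss, yielding $r = \Omega(\delta^2/\varepsilon) = \Omega(2^n\delta^2)$. Your writeup simply makes explicit the details the paper leaves implicit (the computation $\frac{1}{2^n}\Tr|O_f - O_g| = \frac{4}{2^n}$, the verification that $B = O_g O_f^\dagger$ is itself diagonal with $\pm 1$ spectrum so the stronger branch applies at every inductive step, and the preservation of both properties under tensoring with workspace identities), all of which are correct.
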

\begin{proof}
    The proof is identical to \cref{thm:bqp-separation} except that since $O_f, O_g$ are diagonal with eigenvalues $\pm 1$, we use the bound corresponding to diagonal matrices with $\pm 1$ entries in \cref{lemma:tr}, which replaces the square root dependence on $\varepsilon$ with a linear dependence.
\end{proof}

\end{document}